\theoremstyle{theorem}
\newtheorem{proposition}{Proposition} [section]
\newtheorem{corollary}{Corollary} [section]
\newtheorem{assumption}{Assumption} [section]
\begin{document}

\title{{\Large Identification Robust Inference for the Risk Premium in Term
Structure Models }}
\author{Frank Kleibergen\thanks{
Amsterdam School of Economics, University of Amsterdam, Roetersstraat 11,
1018 WB Amsterdam, The Netherlands. Email: f.r.kleibergen@uva.nl.} \and %
Lingwei Kong\thanks{
Faculty of Economics and Business, University of Groningen, Nettelbosje 2,
9747 AE Groningen, The Netherlands. Email: l.kong@rug.nl. } }
\date{October 16, 2022} 
\maketitle

\begin{abstract}
We propose identification robust statistics for testing hypotheses on the
risk premia in dynamic affine term structure models. We do so using the
moment equation specification proposed for these models in \nocite%
{adrian2013pricing}Adrian et al. (2013). We extend the subset (factor)
Anderson-Rubin test from Guggenberger et al. (2012) to models with multiple
dynamic factors and time-varying risk prices. Unlike projection based tests,
it provides a computationally tractable manner to conduct identification
robust tests on a larger number of parameters. We analyze the potential
identification issues arising in empirical studies. Statistical inference
based on the three-stage estimator from Adrian et al. (2013) requires
knowledge of the factors' quality and is misleading without full-rank $\beta 
$'s or with sampling errors of comparable size as the loadings. Empirical
applications show that some factors, though potentially weak, may drive the
time variation of risk prices, and weak identification issues are more
prominent in multi-factor models.
\end{abstract}



\thispagestyle{empty}\bigskip

JEL Classification: C12, C58, G10, G12 

Keywords: asset pricing; risk premia; robust inference; weak identification\bigskip


\section{Introduction}

A variety of Dynamic affine term structure models (DATSMs) have been
developed since the foundational work by Vasicek (1977)\nocite%
{vasicek1977equilibrium} and Cox et al. (1985).\nocite{cox198fiftheory} They
help to understand the movements of bond yields and to analyze financial
markets. DATSMs are empirically appealing for their smooth tractability and
simple characterization of how risks get priced. There are many studies
employing this framework. To list a few: Cochrane and Piazzesi (2005)\nocite%
{cochrane2005bond} apply affine term structure models to study time
variation in expected excess bond returns using a single powerful
explanatory factor; Wu and Xia (2016)\nocite{wu2016measuring} use affine
models to summarize the macroeconomic effects of unconventional monetary
policy; Ang and Piazzesi (2003)\nocite{ang2003no} investigate how macro
economic variables affect bond prices and the dynamics of the yield curve,
Buraschi and Jiltsov (2005)\nocite{buraschi2005inflation} study the
properties of the nominal and real risk premia of the term structure of
interest rates and \nocite{golinski2016long}Goli{\'{n}}ski and Zaffaroni
(2016) incorporate long memory state variables into the term structure
model. We adopt the DATSMs setup developed by \nocite{adrian2013pricing}%
Adrian et al. (2013) which nests a general class of linear asset pricing
models and can be regarded as a linear asset pricing model with time-varying
risk premia and dynamic factors.

Many recent studies have developed approaches to estimate DATSMs. Most of
them involve a time-consuming numerical optimization procedure which results
from the high non-linearity. The inference concerning the coefficients
suffers similar challenges. Another undesirable feature, as pointed out by,
e.g., Hamilton and Wu (2012)\nocite{hamilton2012identification}, is that the
identification can be problematic. Lack of identification, e.g., due to
unspanned factors (\nocite{adrian2013pricing}Adrian et al. (2013)), and thus
a relatively flat surface of the likelihood also leads to unsatisfying
inference results (e.g., Kan and Zhang (1999),\nocite{kan1999two} \nocite%
{gospodinov2017spurious}Gospodinov et al. (2017), \nocite%
{kleibergen2009tests}Kleibergen (2009), Hamilton and Wu (2012)\nocite%
{hamilton2012identification}, \nocite{dovonon2013testing}Dovonon and Renault
(2013), Beaulieu et al. (2016)\nocite{beaulieu2016less}, Khalaf and Schaller
(2016),\nocite{khalaf2016identification} \nocite{cattaneo2022beta}Cattaneo
et al. (2022)).

Unspanned factors refer to those that only affect the dynamics of bond
prices under the historical measure but not the risk neutral one (see, e.g.,
Joslin et al. (2014)\nocite{joslin2014risk}). The presence of such factors
has been documented in empirical studies (e.g., Ludvigson and Ng (2009)%
\nocite{ludvigson2009macro}, \nocite{adrian2013pricing}Adrian et al.
(2013)). They can lead to identification challenges since varying the
parameters of the risk neutral pricing measure - risk premia parameters -
associated with these factors does not strongly influence bond prices. 
\nocite{adrian2013pricing}Adrian et al. (2013) allow for the presence of
unspanned factors with the prior knowledge of knowing which factors are
unspanned. Their proposed estimation procedure differs for cases with and
without unspanned factors.

Because of the identification issues, traditional inference methods based on
t-tests and Wald statistics become unreliable for conducting inference on
the risk prices in DATSMs (e.g., \nocite{stock2000gmm}Stock and Wright
(2000) \nocite{kleibergen2005testing}Kleibergen (2005), \nocite%
{antoine2020testing}Antoine and Renault (2020), \nocite%
{andrews2012estimation}Andrews and Cheng (2012), \nocite%
{antoine2022identification}Antoine and Lavergne (2022)). We therefore
propose easy-to-implement identification robust test procedures which are
valid even when the model is not identified due to unspanned factors. The
test procedures we provide can be used to study the time-varying risk-premia
for linear asset pricing models. Our proposed inference procedures use the
framework presented in \nocite{adrian2013pricing}Adrian et al. (2013), where
the risk of bond prices is modeled as a linear functional in observed
factors. The risk of bond prices can then be decomposed into two parts: a
time-constant and a time-varying part. We propose statistics for testing
hypothezes specified on all parameters of the time-varying components and on
just subsets of them.

The paper is organized as follows: Section 2 introduces the DATSM. Section 3
states the three step estimation procedure from Adrian et al. (2013)\nocite%
{adrian2013pricing} and shows that it is sensitive to identification issues.
Section 3 also shows the empirical relevance of these identification issues
using the data from Adrian et al. (2013). Section 4 introduces the
identification robust tests for the time-varying risk premia. It conducts a
small simulation experiment and applies then for an one time-varying risk
factor setting. Section 6 introduces the identification robust tests for
testing hypothezes on subsets of the time-varying risk premia. It applies
them to a variety of multi-factor settings using the data from Adrian et al.
(2013). The final sixth section concludes.

We use the following notation throughout the paper: $\otimes \ $and vec$%
(\cdot )$ represent respectively the Kronecker product and vectorization
operator; $\Sigma ^{\frac{1}{2}}$ is the lower triangular Cholesky
decomposition of the positive definite symmetric matrix $\Sigma $ such that $%
\Sigma =\Sigma ^{\frac{1}{2}}\Sigma ^{\frac{1}{2}^{\prime }}$; for a $%
N\times K$ dimensional full rank matrix $A:P_{A}=A(A^{\prime
}A)^{-1}A^{\prime }$ and $M_{A}=I_{N}-P_{A}\text{.}$

\section{Dynamic affine term structure models (DATSMs)}

We briefly discuss the popular class of DATSMs with observed factors.
Instead of working directly with the implied yields on an $n$-period bond as
usually done in the term structure literature, we make use of the excess
holding return of an $n$-period bond as in \nocite{adrian2013pricing}Adrian
et al. (2013).

We first illustrate the model set-up following \nocite{adrian2013pricing}%
Adrian et al. (2013) and thereafter consider tests associated with the risk
premia. For $P_{t,n}$ the price at time $t$ of a zero-coupon bond maturing
at time $t+n\text{,}$ the pricing kernel, $M_{t+1},$ is such that 
\begin{equation*}
P_{t,n}=E_{t}(M_{t+1}P_{t+1,n-1})\text{.}
\end{equation*}%
For $r_{t}$ the one-period short rate and $\lambda _{t}$ the market price of
risk, the pricing kernel is assumed exponential affine in innovation factors 
$v_{t}\sim _{i.i.d}N(0,\Sigma _{v}):$ 
\begin{equation*}
M_{t+1}=\exp \left( -r_{t}-\frac{1}{2}\lambda _{t}^{\prime }\lambda
_{t}-\lambda _{t}^{\prime }\Sigma _{v}^{-\frac{1}{2}}v_{t+1}\right) ,
\end{equation*}%
where the market price of risk $\lambda _{t}$ is an affine function of the
observed factors $X_{t}:$ 
\begin{equation*}
\lambda _{t}=\Sigma _{v}^{-\frac{1}{2}}\left( \lambda _{0}+\Lambda
_{1}X_{t}\right) ,
\end{equation*}%
with $\lambda _{0}$ and $\Lambda _{1}$ resp. a $K$-dimensional vector and a $%
K\times K$ dimensional matrix, and the $K$-dimensional vector of state
variables $X_{t}$ results from a VAR(1): 
\begin{equation*}
X_{t+1}=\mu +\Phi_1 X_{t}+v_{t+1}\text{.}
\end{equation*}%
For the one-period (log) excess holding return of a $n$-period bond at $t+1:$
\begin{equation*}
r_{t+1,n}=\ln \left( P_{t+1,n}\right) -\ln \left( P_{t,n+1}\right) -r_{t}%
\text{,}
\end{equation*}%
with $r_{t}=\ln P_{t,1}$, the structure of the pricing kernel implies that: 
\begin{equation*}
E_{t}\left[ \exp \left( -r_{t+1,n}-\frac{1}{2}\lambda _{t}^{\prime }\lambda
_{t}-\lambda _{t}^{\prime }\Sigma _{v}^{-\frac{1}{2}}v_{t+1}\right) \right]
=1.
\end{equation*}%
Assuming that $(r_{t+1,n}\text{,}$ $v_{t+1})$ are jointly normal, Adrian et
al. (2013) show that: 
\begin{equation*}
E_{t}\left( r_{t+1,n}\right) =\beta ^{(n)\prime }\left( \lambda _{0}+\Lambda
_{1}X_{t}\right) -\frac{1}{2}var(r_{t+1,n})\text{,}
\end{equation*}%
with $\beta ^{(n)}=\Sigma _{v}^{-1}cov(v_{t+1},r_{t+1,n})\text{.}$ When
decomposing, $R_{t+1,n}$ into a component correlated with $v_{t+1}$ and an
uncorrelated component/prediction error $e_{t+1,n}:$ 
\begin{equation*}
\begin{array}{c}
r_{t+1,n}-E_{t}\left( r_{t+1,n}\right) =\beta ^{(n)\prime }v_{t+1}+e_{t+1,n};%
\end{array}%
\end{equation*}%
so%
\begin{equation*}
\begin{array}{c}
r_{t+1,n}=\beta ^{(n)\prime }\left( \lambda _{0}+\Lambda _{1}X_{t}\right)
+g^{(n)}(\beta ,\Sigma _{v},\Sigma _{e})+\beta ^{(n)\prime }v_{t+1}+e_{t+1,n}%
\text{,}%
\end{array}%
\end{equation*}%
where $g^{(n)}(\beta ,\Sigma _{v},\Sigma _{e})=-\frac{1}{2}var(r_{t+1,n})%
\text{,}$ thus, for example, in case $\Sigma _{e}=$var($e_{t+1,n})=\sigma
_{e}^{2}\text{,}$ $g^{(n)}(\beta ,\Sigma _{v},\Sigma _{e})=-\frac{1}{2}%
\left( \beta ^{(n)\prime }\Sigma _{v}\beta ^{(n)}+\sigma _{e}^{2}\right) 
\text{. }$Additional restrictions are often imposed on the parameters
because of the cross-sectional term structure, but these restrictions are
dropped in \nocite{adrian2013pricing}Adrian et al. (2013)'s approach.
Assumption \ref{assum: model specification} summarizes the model setting.

\begin{assumption}
\label{assum: model specification}~\newline
(a) Consider a $K\times 1$ vector of factors $X_{t},$ $t=0,\cdots ,T$ that
results from a stationary vector autoregression of order 1: 
\begin{equation*}
\begin{array}{c}
X_{t+1}=\mu +\Phi_1 X_{t}+v_{t+1},%
\end{array}%
\end{equation*}%
where $v_{t}$ are the innovation shocks (or innovation factors). The log
excess holding return $r_{t+1,n-1}$ satisfies: 
\begin{equation*}
{\small 
\begin{array}{c}
r_{t+1,n-1}=\beta ^{(n-1)^{\prime }}\left( \lambda _{0}+\Lambda
_{1}X_{t}\right) +g^{(n-1)}\left( \beta ,\Sigma _{v},\Sigma _{e}\right)
+\beta ^{(n-1)^{\prime }}v_{t+1}+e_{t+1,n-1}%
\end{array}%
}
\end{equation*}%
with $g^{(n-1)}(\cdot )$ a parametric function. Furthermore:%
\begin{equation*}
(v_{t+1}^{\prime },e_{t+1}^{\prime })^{\prime }|\left\{ X_{s}\right\}
_{s=0}^{t}\sim i.i.d.N\left( 0,\text{diag}(\Sigma _{v},\Sigma _{e})\right) .
\end{equation*}%
(b) For $\Sigma _{e}=\sigma _{e}^{2},$ $g^{(n-1)}\left( \beta ,\Sigma
_{v},\Sigma _{e}\right) =\frac{1}{2}\left( \beta ^{(n-1)^{\prime }}\Sigma
_{v}\beta ^{(n-1)}+\sigma _{e}^{2}\right) $.
\end{assumption}


\section{Regression estimator and Wald based inference}

To estimate the price of risk, Adrian et al. (2013) propose a three-step
procedure akin to the two pass procedure from \nocite{fama1973risk}Fama and
MacBeth (1973):

\begin{enumerate}
\item Estimate: 
\begin{equation*}
X_{t+1}=\mu +\Phi_1 X_{t}+v_{t+1}\text{,}
\end{equation*}%
by least squares to obtain $\hat{\mu}\text{,}$ $\hat{\Phi}\text{,}$ $\hat{v}%
_{t}=X_{t}-\hat{\mu}-\hat{\Phi}_1X_{t-1}\text{,}$ $t=1,\ldots ,T$ and $\hat{%
\Sigma}_{v}=\frac{1}{T}\sum_{t=1}^{T}\hat{v}_{t}\hat{v}_{t}^{\prime }\text{.}
$

\item Estimate: 
\begin{equation*}
r_{t+1,n}=a^{(n)}+d^{(n)\prime }X_{t}+\beta ^{(n)\prime }\hat{v}%
_{t+1}+e_{t+1,n}\text{,}
\end{equation*}%
by least squares to obtain $\hat{a}^{(n)}\text{,}$ $\hat{d}^{(n)}$ and $\hat{%
\beta}^{(n)}\text{,}$ $n=1,\ldots ,N\text{.}$

\item Construct $\hat{a}=(\hat{a}^{(1)}\ldots \hat{a}^{(N)})^{\prime },\text{
}$ $\hat{\beta}=(\hat{\beta}^{(1)}\ldots $ $\hat{\beta}^{(N)})^{\prime },$ $%
\hat{d}=(\hat{d}^{(1)}\ldots d^{(n)})^{\prime },$ $\hat{g}=(\hat{g}%
^{(1)}\ldots \hat{g}^{(n)})$ for $\hat{g}^{(n)}=g^{(n)}(\hat{\beta},\hat{%
\Sigma}_{v},\hat{\Sigma}_{e}),$ $n=1,\ldots ,N\text{,}$ and estimate $%
\lambda _{0}$ and $\Lambda _{1}$ using: 
\begin{equation*}
\begin{array}{cl}
\hat{\lambda}_{0}= & \left( \hat{\beta}^{\prime }\hat{\beta}\right)
^{^{\prime }-1}\hat{\beta}^{\prime }(\hat{a}+\hat{g}+\hat{\beta}\hat{\mu})
\\ 
\hat{\Lambda}_{1}= & \left( \hat{\beta}^{\prime }\hat{\beta}\right)
^{^{\prime }-1}\hat{\beta}^{\prime }(\hat{d}+\hat{\beta}\hat{\Phi}).%
\end{array}%
\end{equation*}
\end{enumerate}

The three-step procedure essentially regresses transformed returns on
estimated $\beta $'s. Two issues can hamper the reliability of the final
stage: the sampling error of estimates resulting from previous stages and
the quality of the $\beta $'s. The first issue is negligible when the
information dominates the asymptotically vanishing sampling errors, and only
slight modifications are necessary for the asymptotic variance estimator
used in test statistics to accommodate for it. However, when modeling the
unspanned factors, \nocite{adrian2013pricing}Adrian et al. (2013) show that
the entries in the $\beta $'s corresponding to unspanned factors are zero
(Assumption \ref{assum:unspanned factors}), which leads to identification
problems (e.g., \nocite{kleibergen2009tests}Kleibergen (2009), Beaulieu et
al. (2013)\nocite{beaulieu2013identification},\nocite%
{kleibergen2015unexplained}Kleibergen and Zhan (2015)).

\begin{assumption}
\label{assum:unspanned factors} Potential existence of (nearly) unspanned
factors: $\beta ^{\prime }=(B,C)$, $B$ represents the spanned factors and is
of full rank $K_{B}\leq K$ while $C=O(1/\sqrt{T})$ reflects the (nearly)
unspanned factors. If $K_{B}=K,$ there are no (nearly) unspanned factors.
\end{assumption}

\nocite{adrian2013pricing}Adrian et al. (2013) show that Assumption \ref%
{assum:unspanned factors} embeds the unspanned factors that do not affect
the dynamics of bonds under the historical pricing measure. The rows of $%
\beta $ comprised of close to zero values correspond to unspanned factors. 
\nocite{adrian2013pricing} Adrian et al. (2013) assume that the location of
these rows is known, and their three-step estimation procedure excludes the
unspanned factors in the second step by only including spanned factors. The
third step would otherwise encounter identification issues resulting from a
classical multicollinearity problem. It is straightforward to show that zero 
$\beta $'s lead to an identification problem because any value of the $%
\lambda $'s associated with the zero elements in $\beta $ goes for the
values of excess returns. When we instead just have small $\beta $'s, which
are comparable in magnitude to the estimation error, we similarly encounter
such an identification problem (e.g., \nocite{kleibergen2009tests}Kleibergen
(2009), \nocite{antoine2009efficient}Antoine and Renault (2009, 2012), 
\nocite{antoine2012efficient} \nocite{kleibergen2020robust}Kleibergen and
Zhan (2020), \nocite{kleibergen2019identification}Kleibergen et al. (2022)).

Proposition \ref{prop2} states some well-known results from the weak
identification literature. It shows that the risk premia estimator $\hat{%
\Lambda}$ becomes inconsistent in the presence of weak factors because it
converges to a non-normal distribution. This results since varying the value
of the risk premia associated with the weak factors does not change the
excess returns. The asymptotic distribution of the conventional Wald
statistic for testing the null hypothesis $H_{0}:\Lambda _{1}=\Lambda
_{1}^{0}$ then no longer converges to a $\chi ^{2}$ -distribution, and the
same holds for subset testing based on this estimator. Therefore, the
conventional test statistics can be misleading in the presence of unspanned
factors.

\begin{proposition}
\label{prop2}~\newline
Under Assumptions \ref{assum: model specification}.(a), denote $\Lambda
=[\lambda _{0},\Lambda _{1}]$:\newline
(a) If $\beta $ is of full rank: 
\begin{equation*}
\begin{array}{c}
\sqrt{T}\left[ 
\begin{matrix}
\text{vec}\left( \hat{\beta}^{\prime }-\beta ^{\prime }\right) \\ 
\text{vec}\left( \hat{\Lambda}-\Lambda \right)%
\end{matrix}%
\right] \rightarrow _{d}N\left( \left[ 
\begin{matrix}
0 \\ 
0%
\end{matrix}%
\right] ,\left[ 
\begin{matrix}
\mathcal{V}_{\beta } & \mathcal{C}_{\Lambda ,\beta }^{\prime } \\ 
\mathcal{C}_{\Lambda ,\beta } & \mathcal{V}_{\Lambda }%
\end{matrix}%
\right] \right) ,%
\end{array}
\label{eq:propo1}
\end{equation*}%
where $\mathcal{V}_{\beta },\mathcal{C}_{\Lambda ,\beta },\mathcal{V}%
_{\Lambda }$ are specified in the proof.\newline
(b) If Assumption \ref{assum:unspanned factors} holds with $K_{B}<K$, then $%
\hat{\Lambda}\rightarrow _{d}\Lambda +\epsilon,$ where $\epsilon $ follows a
non-standard distribution, so $\hat{\Lambda}$ no longer converges to the
true value $\Lambda $ at rate $\sqrt{T}$.
\end{proposition}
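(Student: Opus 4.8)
The plan is to prove part (a) and part (b) separately, treating them as two regimes of the same three-step estimator, where the key object is the final-stage regression of transformed returns on the estimated $\hat{\beta}$.

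For part (a), I would set up the standard delta-method/stacked-CLT argument. First I would collect the sampling errors from each stage. The VAR(1) least squares in Step 1 gives $\sqrt{T}$-consistent, asymptotically normal estimates of $(\mu,\Phi_1,\Sigma_v)$ and the fitted residuals $\hat{v}_t$; standard stationary-VAR asymptotics (a martingale-difference CLT for the score, using Assumption \ref{assum: model specification}.(a)) deliver this. Next, the cross-sectional regression in Step 2 regresses $r_{t+1,n}$ on $(1,X_t,\hat{v}_{t+1})$; because $\hat{v}_{t+1}$ is a generated regressor, I would linearize around the true $v_{t+1}$ and show that the first-stage estimation error contributes an asymptotically non-negligible but tractable term, which is where the off-diagonal covariance $\mathcal{C}_{\Lambda,\beta}$ originates. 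This yields the joint asymptotic normality of $\sqrt{T}\,\text{vec}(\hat{\beta}^\prime-\beta^\prime)$. Finally, since $\hat{\Lambda}=(\hat{\beta}^\prime\hat{\beta})^{-1}\hat{\beta}^\prime(\cdots)$ is a smooth function of the full-rank $\hat{\beta}$ together with $\hat{a},\hat{g},\hat{d},\hat{\mu},\hat{\Phi}$, I would apply the delta method: with $\beta$ of full rank, $(\beta^\prime\beta)^{-1}$ is well-defined and the map is continuously differentiable in a neighborhood, so the limiting distribution is normal with the covariance blocks $\mathcal{V}_\beta,\mathcal{C}_{\Lambda,\beta},\mathcal{V}_\Lambda$ obtained by propagating the Jacobian through the joint covariance matrix. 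These three blocks are exactly what I would record as ``specified in the proof.''

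For part (b), the crucial point is that full rank fails. Under Assumption \ref{assum:unspanned factors}, write $\hat{\beta}^\prime=(B,C)$ with $B$ full rank $K_B$ and $C=O(1/\sqrt{T})$. The plan is to show that $\hat{\beta}^\prime\hat{\beta}$ has a block structure in which the block associated with the unspanned factors is of order $1/T$ plus an estimation-noise term of the same order, so that $(\hat{\beta}^\prime\hat{\beta})^{-1}$ does not converge to a fixed matrix. Concretely, I would rescale: the estimated loadings on the unspanned factors satisfy $\sqrt{T}\,\hat{C}\to_d$ a Gaussian matrix (from part (a)'s CLT applied to those columns), so after the appropriate rotation the inverse $(\hat{\beta}^\prime\hat{\beta})^{-1}$ inherits a random, non-degenerate limit through the $C$-block. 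Pushing this through the formula $\hat{\Lambda}=(\hat{\beta}^\prime\hat{\beta})^{-1}\hat{\beta}^\prime(\hat{d}+\hat{\beta}\hat{\Phi})$ (and the analogous $\hat{\lambda}_0$ expression) produces $\hat{\Lambda}\to_d\Lambda+\epsilon$, where $\epsilon$ is a ratio of correlated Gaussians rather than a Gaussian, hence non-standard; the directions in $\Lambda$-space corresponding to the unspanned factors fail to be $\sqrt{T}$-consistent because the identifying information in those directions is itself $O_p(1/\sqrt{T})$.

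The main obstacle is the generated-regressor bookkeeping in part (a): correctly tracking how the Step-1 estimation error of $\hat{v}_{t+1}$ feeds through the Step-2 regression so that the cross-covariance $\mathcal{C}_{\Lambda,\beta}$ is pinned down exactly, rather than merely shown to be finite. In part (b) the analytical content is lighter but the delicate step is the rescaling argument that isolates the non-degenerate random limit of the inverse Gram matrix: I would need to diagonalize $\hat{\beta}^\prime\hat{\beta}$ into a spanned block that converges to $B^\prime B>0$ and an unspanned block that stays stochastically bounded away from both zero and infinity only after multiplying by $T$, and then verify that the resulting limit for $\epsilon$ is genuinely non-normal. Since these are stated as ``well-known results from the weak identification literature,'' I expect the proof to invoke the standard weak-instrument/weak-factor asymptotic machinery (e.g.\ the references cited in the text) rather than re-derive it from scratch.
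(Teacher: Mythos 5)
Your proposal is correct in substance and rests on the same core mechanism as the paper: the normalized Gram inverse $(\hat{\beta}^{\prime}\hat{\beta})^{-1}\hat{\beta}^{\prime}$ picks up a factor of $\sqrt{T}$ in the weak directions, which cancels against the $O_p(1/\sqrt{T})$ sampling errors from the earlier stages and leaves a non-degenerate, non-Gaussian limit for $\hat{\Lambda}-\Lambda$. The execution differs in two ways worth noting. For part (a) the paper does not re-derive the joint CLT at all: it declares the result ``a direct result following Adrian et al.\ (2013)'' and merely records the expressions for $\mathcal{V}_{\beta}$, $\mathcal{C}_{\Lambda,\beta}$ and $\mathcal{V}_{\Lambda}$, whereas you propose to carry out the full generated-regressor/delta-method bookkeeping; that is more work but the same argument. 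For part (b) the paper writes $\sqrt{T}\hat{\Lambda}$ as an explicit sum of terms $\mathcal{T}_{i,j}$ and then verifies term by term, in two polar special cases ($\beta=0$ and $\beta=B/\sqrt{T}$ with $B$ full rank, i.e.\ \emph{all} factors unspanned or weak), that the terms involving $(\hat{\beta}\hat{\beta}^{\prime})^{-1}\hat{\beta}$ fail to vanish and converge to functions of $\Psi_{\beta}$ and the other Gaussian limits. You instead attack the general partitioned case $\beta^{\prime}=(B,C)$ with $0<K_B<K$ via a block rotation separating the spanned block (converging to $B^{\prime}B>0$) from the unspanned block (non-degenerate only after scaling by $T$). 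Your route is actually closer to the literal hypothesis of the proposition ($K_B<K$ with some strong factors present), which the paper's proof does not cover explicitly; the price is the extra care needed to control the cross-block terms of $\hat{\beta}^{\prime}\hat{\beta}$ under the rotation, a step the paper sidesteps by specializing. One small caution: when you invoke ``part (a)'s CLT'' for $\sqrt{T}\hat{C}$, you should note that it is only the $\hat{\beta}$-block of that CLT (plain least squares in Step 2, valid regardless of the rank of $\beta$) that you are using, not the joint statement with $\hat{\Lambda}$, which fails in this regime.
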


\begin{proof}
	See the Online Supplementary Appendix.
\end{proof}

We focus on inference concerning the time-varying component of risk prices $%
\Lambda _{1}$. We therefore demean the one-period (log) excess holding
returns by subtracting its time-series average: 
\begin{equation*}
\bar{r}_{t+1,n}=\beta ^{(n)\prime }\left( \Lambda _{1}\bar{X}_{t}\right)
+\beta ^{(n)\prime }\bar{v}_{t+1}+\bar{e}_{t+1,n}\text{,}
\end{equation*}%
with $\bar{z}_{t+1,n}=z_{t+1,n}-\bar{z},$ with $\bar{z}=\frac{1}{T}%
\sum_{t=1}^{T}z_{t},$ for $z=r,$ $X,$ $v$ and $e$ resp. When stacking the
equations for $N$ different maturities: 
\begin{equation*}
R_{t+1}=\beta \left( \Lambda _{1}\bar{X}_{t}\right) +\beta \bar{v}%
_{t+1}+e_{t+1}\text{,}
\end{equation*}%
where $R_{t+1}=(\bar{r}_{t+1,1}\ldots \bar{r}_{t+1,N})^{\prime }\text{,}$ $%
\beta =(\beta ^{(1)}\ldots \beta ^{(n)})^{\prime }\text{,}$ $e_{t}=(\bar{e}%
_{t,1}\ldots \bar{e}_{t,N})^{\prime }\text{,}$ the pricing equation closely
resembles the beta-pricing model for the return on (portfolios of) assets: 
\begin{equation*}
\text{r}_{t+1}=\beta \lambda +\beta \text{F}_{t+1}+u_{t+1}\text{,}
\end{equation*}%
with r$_{t}$ an $N$-dimensional vector with the returns on $N$ assets, $%
\beta $ the $N\times K$ dimensional beta matrix and $F_{t}$ a $K$
-dimensional vector of risk factors. A further important similarity that
both models imply is the reduced rank structure that becomes obvious using a
slight re-specification: 
\begin{equation*}
\begin{array}{ccc}
R_{t+1}=\beta \left( \Lambda _{1}\ \text{{}}\vdots \ \text{{}}I_{K}\right)
\left( 
\begin{array}{c}
\bar{X}_{t} \\ 
\bar{v}_{t+1}%
\end{array}%
\right) +e_{t} &  & 
\end{array}%
\end{equation*}%
and 
\begin{equation*}
\begin{array}{ccc}
\text{r}_{t+1}=\beta \left( \lambda \ \text{{}}\vdots \ \text{{}}%
I_{K}\right) \left( 
\begin{array}{c}
1 \\ 
F_{t+1}%
\end{array}%
\right) +u_{t}, &  & 
\end{array}%
\end{equation*}%
where the $N\times 2K$ and $N\times (K+1)$ dimensional matrices $\beta
(\Lambda _{1}\ \text{{}}\vdots \ \text{{}}I_{K})$ and $\beta (\lambda \ 
\text{{}}\vdots \ \text{{}}I_{K})$ are each at most of rank $K\text{,}$ so
except for the largest $K$ singular values, all, $K$ and 1 resp., singular
values of these matrices are zero. Further for $\Lambda _{1}$ and $\lambda $
to be well defined, $\beta $ should be of full rank. When $\beta $ is near a
reduced rank value, or in other words, if some factors are weak/unspanned,
we encounter an identification issue which is also reflected by more than
just $K$ or 1 resp. singular values of the above matrices being equal or
close to zero.\bigskip

\noindent {{\small 
\begin{tabular}{c|c|c|c|c|cc}
\hline\hline
& $\hat{\beta}_{1}$ & $\hat{\beta}_{2}$ & $\hat{\beta}_{3}$ & $\hat{\beta}%
_{4}$ & $\hat{\beta}_{5}$ &  \\ \hline
{\ (1)} & {\ -0.0094} & {\ 0.0031 } & {\ -0.0008 } & {\ 0.0002 } & {\ 0.0000}
&  \\ 
& (-3.6027) & (2.5058) & (-1.4886) & (0.4344) & (0.0347) &  \\ 
{\ (2)} & {\ -0.0213 } & {\ 0.0057 } & {\ -0.0007 } & {\ -0.0003} & {\
0.0002 } &  \\ 
& (-8.2107) & (4.5416) & (-1.2228) & (-0.7341) & (0.5816) &  \\ 
{\ (3)} & {\ -0.0446} & {\ 0.0070} & {\ 0.0010 } & {\ -0.0005} & {\ -0.0001}
&  \\ 
& (-17.1745) & (5.6040) & (1.8889) & (-1.3535) & (-0.4551) &  \\ 
{\ (4)} & {\ -0.0656} & {\ 0.0048 } & {\ 0.0024 } & {\ 0.0000} & {\ -0.0003}
&  \\ 
& (-25.2648) & (3.8500) & (4.5279) & (0.0386) & (-0.9643) &  \\ 
{\ (5)} & {\ -0.0843} & {\ 0.0003} & {\ 0.0028 } & {\ 0.0007 } & {\ -0.0001}
&  \\ 
& (-32.4670) & (0.2023) & (5.2857) & (1.7192) & (-0.1732) &  \\ 
{\ (6)} & {\ -0.1011 } & {\ -0.0059} & {\ 0.0022} & {\ 0.0010 } & {\ 0.0003}
&  \\ 
& (-38.9474) & (-4.6910) & (4.1584) & (2.6824) & (1.0713) &  \\ 
{\ (7)} & {\ -0.1164 } & {\ -0.0130 } & {\ 0.0008} & {\ 0.0010} & {\ 0.0006}
&  \\ 
& (-44.8511) & (-10.3464) & (1.5602) & (2.5469) & (1.9102) &  \\ 
{\ (8)} & {\ -0.1305 } & {\ -0.0206} & {\ -0.0011} & {\ 0.0005 } & {\ 0.0005}
&  \\ 
& (-50.2742) & (-16.4072) & (-2.0105) & (1.2971) & (1.8297) &  \\ 
{\ (9)} & {\ -0.1435 } & {\ -0.0284} & {\ -0.0033} & {\ -0.0003} & {\ 0.0002}
&  \\ 
& (-55.2826) & (-22.6118) & (-6.1006) & (-0.8978) & (0.6354) &  \\ 
{\ (10)} & {\ -0.1556 } & {\ -0.0361} & {\ -0.0056} & {\ -0.0015} & {\
-0.0005} &  \\ 
& (-59.9307) & (-28.7667) & (-10.3483) & (-3.7935) & (-1.6457) &  \\ 
{\ (11)} & {\ -0.1669} & {\ -0.0436} & {\ -0.0078} & {\ -0.0028} & {\
-0.0014 } &  \\ 
& (-64.2706) & (-34.7284) & (-14.4913) & (-7.1319) & (-4.8558) &  \\ \hline
& [0.0000] & [0.0000] & [0.0000] & [0.0000] & [0.0001] &  \\ \hline\hline
\end{tabular}
}} ~\bigskip

\noindent {\small \textbf{Table 1}: The least squares estimates of the $%
\beta $'s associated with the excess returns for bonds with 11 different
maturities of $6,12,18\cdots ,60$ and 84, 120 months over the sample period
1987:01-2011:12, and the factors are the first five principal components
generated using the cross-section of bond yields for maturities $3,\cdots
,120$ months (same data taken from Adrain et al. (2013)). Based on
Proposition \ref{prop2}.(a), we report LS estimates of $\beta $'s with
t-statistics in round brackets; and$p$-values of $F$-tests in square
brackets, testing the null hypothesis $H_{0}:\beta _{j}=0$ that each column
is jointly zero.. The Kleibergen-Paap rank statistic (see Kleibergen and
Paap (2006)) testing $H_{0}:$ rank($\beta $)=4: 1.6561 [0.9764]. }

\paragraph{Illustrating unspanned and weak factors in an empirical study}

In reality, there is no direct prior knowledge of the number of unspanned or
weak factors. To show their presence and empirical relevance, we use the
data from \nocite{adrian2013pricing}Adrian et al. (2013), i.e., the zero
coupon yield data constructed by G\"{u}rkaynak et al. (2007).\nocite%
{gurkaynak2007us} Table 1 shows the factor loading estimates and relevant
tests corresponding to the five principal component (PCA) factors employed
in Adrian et al. (2013). Table 1 shows that many elements of $\beta $ are
small and not statistically significant from zero at the 5\% significance
level. Though the $F$-tests on the columns of $\beta $ are significant, the
rank test of the $\beta $-matrix indicates potential identification
problems, because for these factor loadings we can not reject the reduced
rank null hypothesis at the 5\% significance level.
 
\begin{figure}[htbp!]
	\includegraphics[width=\columnwidth]{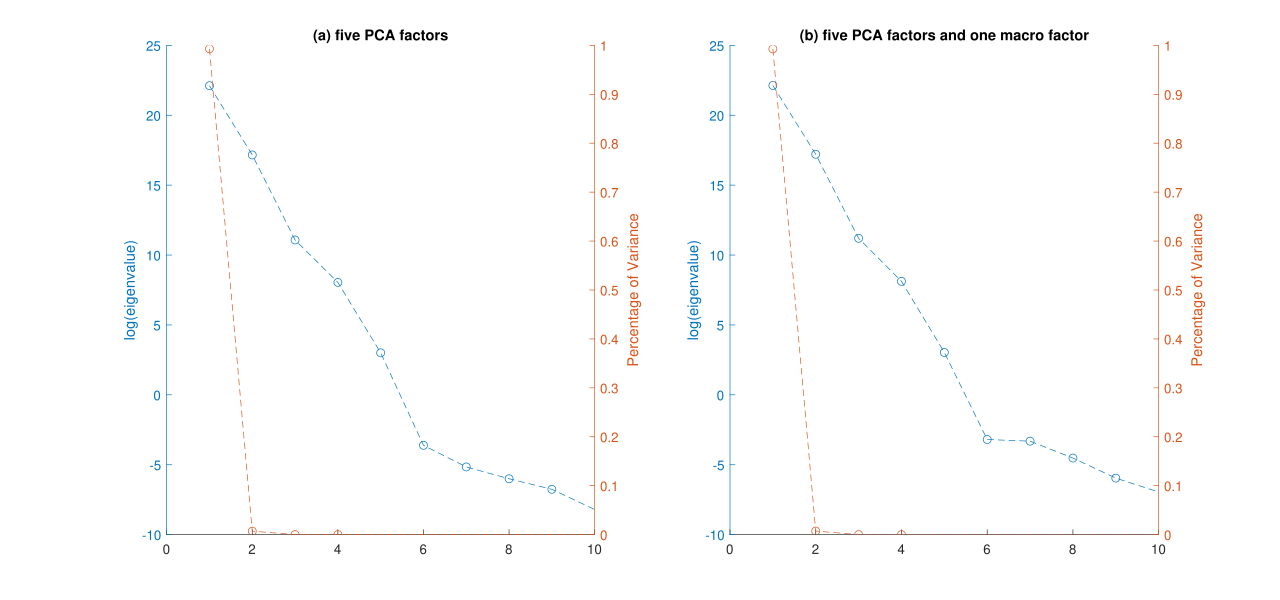}
	\caption*{{\small \textbf{Figure 1}: (log) singular values (blue) and the
			percentage of variance explained (red) of $\displaystyle\frac{1}{T}
			\sum_{t=1}^{T}R_{t}\left( 
			\begin{array}{c}
				\bar{X}_{t} \\ 
				\hat{v}_{t+1}%
			\end{array}
			\right) ^{\prime }\left[ \frac{1}{T}\sum_{t=1}^{T}\left( 
			\begin{array}{c}
				\bar{X}_{t} \\ 
				\hat{v}_{t+1}%
			\end{array}
			\right) \left( 
			\begin{array}{c}
				\bar{X}_{t} \\ 
				\hat{v}_{t+1}%
			\end{array}
			\right) ^{\prime }\right] .$ $R_{t}$ uses the demeaned excess returns of the
			same bonds as in Table 1, $\bar{X}_{t}$ involves the five PCA factors in (a)
			and uses the five PCA factors with one additional macro factor in (b). }}
\end{figure}

\bigskip

Figure 1.(a) uses the same data set as Table 1 and shows a scree plot of the
(log) singular values of 
\begin{equation*}
\frac{1}{T}\sum_{t=1}^{T}R_{t}\left( 
\begin{array}{c}
\bar{X}_{t} \\ 
\hat{v}_{t+1}%
\end{array}%
\right) ^{\prime }\left[ \frac{1}{T}\sum_{t=1}^{T}\left( 
\begin{array}{c}
\bar{X}_{t} \\ 
\hat{v}_{t+1}%
\end{array}%
\right) \left( 
\begin{array}{c}
\bar{X}_{t} \\ 
\hat{v}_{t+1}%
\end{array}%
\right) ^{\prime }\right] ,
\end{equation*}%
which estimates $\beta (\Lambda _{1}\ \text{{}}\vdots \ \text{{}}I_{K})$. In
case of strong spanned factors, the smallest $K,$ is five, singular values
should be close to zero while Figure 1.(a) shows that the smallest six
singular values are close to zero which indicates a weak/unspanned factor
problem which is further reflected by the rank test not rejecting rank($%
\beta $)=4 in Table 1 at the 5\% significance level.{~}

~\bigskip

\noindent {{\small \ 
\begin{tabular}{c|c|c|c|c|c|c}
\hline\hline
& $\hat{\beta}_{1}$ & $\hat{\beta}_{2}$ & $\hat{\beta}_{3}$ & $\hat{\beta}
_{4}$ & $\hat{\beta}_{5}$ & $\hat{\beta}_{\mathit{macro}}$ \\ \hline
{\ (1)} & -0.0094 & 0.0032 & -0.0008 & 0.0002 & 0.0000 & 0.0001 \\ 
& (-3.4930) & (2.4109) & (-1.4749) & (0.4253) & (0.0309) & (0.0516) \\ 
{\ (2)} & -0.0213 & 0.0057 & -0.0007 & -0.0003 & 0.0002 & 0.0000 \\ 
& (-7.9353) & (4.3501) & (-1.2115) & (-0.7259) & (0.5717) & (0.0254) \\ 
{\ (3)} & -0.0446 & 0.0070 & 0.0010 & -0.0005 & -0.0001 & -0.0000 \\ 
& (-16.5873) & (5.3631) & (1.8715) & (-1.3372) & (-0.4497) & (-0.0037) \\ 
{\ (4)} & -0.0656 & 0.0049 & 0.0024 & 0.0000 & -0.0003 & 0.0001 \\ 
& (-24.4149) & (3.6980) & (4.4871) & (0.0346) & (-0.9529) & (0.0655) \\ 
{\ (5)} & -0.0843 & 0.0003 & 0.0028 & 0.0007 & -0.0001 & 0.0001 \\ 
& (-31.3724) & (0.2100) & (5.2383) & (1.6937) & (-0.1732) & (0.0801) \\ 
{\ (6)} & -0.1011 & -0.0059 & 0.0022 & 0.0010 & 0.0003 & 0.0000 \\ 
& (-37.6206) & (-4.4799) & (4.1208) & (2.6462) & (1.0539) & (0.0333) \\ 
{\ (7)} & -0.1164 & -0.0130 & 0.0008 & 0.0010 & 0.0006 & -0.0000 \\ 
& (-43.3100) & (-9.8999) & (1.5456) & (2.5142) & (1.8813) & (-0.0252) \\ 
{\ (8)} & -0.1305 & -0.0206 & -0.0011 & 0.0005 & 0.0005 & -0.0001 \\ 
& (-48.5424) & (-15.7013) & (-1.9930) & (1.2811) & (1.8021) & (-0.0512) \\ 
{\ (9)} & -0.1435 & -0.0284 & -0.0033 & -0.0003 & 0.0002 & -0.0000 \\ 
& (-53.3870) & (-21.6288) & (-6.0457) & (-0.8872) & (0.6243) & (-0.0204) \\ 
{\ (10)} & -0.1557 & -0.0361 & -0.0056 & -0.0015 & -0.0005 & 0.0001 \\ 
& (-57.8990) & (-27.4937) & (-10.2541) & (-3.7506) & (-1.6271) & (0.0745) \\ 
{\ (11)} & -0.1670 & -0.0435 & -0.0078 & -0.0028 & -0.0014 & 0.0002 \\ 
& (-62.1305) & (-33.1560) & (-14.3589) & (-7.0557) & (-4.7981) & (0.2297) \\ 
\hline
& [0.0000] & [0.0000] & [0.0000] & [0.0000] & [0.0002] & [1.0000] \\ 
\hline\hline
\end{tabular}%
}} ~\bigskip

\noindent {\small \textbf{Table 2}: Using the same data as in Table 1 with
one additional macro factor (real activity) that constructed following Ang
and Piazzesi (2003), we report estimates of the $\beta $'s with t-statistics
in round brackets; and $p$-values of $F$-tests in square brackets, testing
the null hypothesis $H_{0}:\beta _{j}=0$ that each column is jointly zero.
Kleibergen-Paap rank statistic testing $H_{0}:$ rank($\beta $ )=5: 0.0025
[1.0000]. \bigskip }

As previously noted in the literature (e.g., \nocite{kleibergen2020robust}
Kleibergen and Zhan (2020), Kleibergen et al. (2022)), weak identification
issues are often present when macro factors are used. Following \nocite%
{ang2003no}Ang and Piazzesi (2003), we construct one macro factor, the real
activity measure, which is the first principal component resulting from four
variables that capture real US macro activity: the "Help Wanted Advertising
in Newspapers (HELP)"\footnote{%
We use the HELP-Wanted index from \nocite{barnichon2010building}Barnichon
(2010) to match the time periods of the excess returns.} index, unemployment
(UE), the growth rate of employment (EMPLOY), and the growth rate of
industrial production (IP). As shown in Table 2, the macro factor is much
less correlated with returns and thus is more likely to result in an
identification issue. This is further reflected in Figure 1(b) containing
the scree plot which shows that while there are six factors, the smallest
seven singular values are close to zero. Table 2 also shows a tiny value for
the rank test which provides another indication of a weak/unspanned factor
problem. \bigskip

The identification problems revealed in Figures 1-2 and Tables 1-2 not only
affect the validity of the estimators but also the reliability of
traditional inference procedures. \nocite{adrian2013pricing} Adrian et al.
(2013) assume that the positions of the zero rows in $\beta $ are known when
dealing with unspanned factors. We remain agnostic about this and provide
testing procedures concerning $\Lambda _{1}$ which are identification robust
without the need of prior knowledge of the unspanned factors.

\section{Identification robust tests of time-varying risk premia}

The identification robust tests are based on the sample moment vector. The
sample moment vector results from the observed factors having no predictive
power for the prediction error, $\bar{e}_{t+1,n},$ so our sample moment
vector for $\Lambda _{1}$ is: 
\begin{equation*}
\begin{array}{c}
f_{T}(\Lambda _{1},X)=\frac{1}{T}\sum_{t=1}^{T}(\bar{X}_{t-1}\otimes (\bar{R}%
_{t}-\hat{\beta}\hat{V}_{t}))-\left( \hat{Q}_{XX}\otimes \hat{\beta}\right) 
\text{vec(}\Lambda _{1}),%
\end{array}%
\end{equation*}%
with $\hat{Q}_{XX}=\frac{1}{T}\sum_{t=1}^{T}\bar{X}_{t-1}\bar{X}%
_{t-1}^{\prime },$ and its derivative with respect to vec($\Lambda _{1})$ is 
\begin{equation*}
\begin{array}{c}
q_{T}(X)=-\left( \hat{Q}_{XX}\otimes \hat{\beta}\right) \text{.}%
\end{array}%
\end{equation*}%
We next make an assumption regarding the large sample behavior of the sample
moment vector and its derivative.

\begin{assumption}
\label{assum3}~\newline
Under $H_{0}:\Lambda _{1}=\Lambda _{1}^{0}$, 
\begin{equation*}
\sqrt{T}\left( 
\begin{array}{c}
f_{T}(\Lambda _{1}^{0},X) \\ 
\text{vec}(q_{T}(X)-J)%
\end{array}%
\right) \underset{d}{\rightarrow }\left( 
\begin{array}{c}
\psi _{f} \\ 
\psi _{q}%
\end{array}%
\right) \text{,}
\end{equation*}%
where the Jacobian $J=-(Q_{XX}\otimes \beta ),$ $\hat{Q}_{XX}\underset{p}{%
\rightarrow }Q_{XX}\text{,}$ and $\psi _{f}$ and $\psi _{q}$ are $N\times K$
and $NK\times K^{2}$ dimensional random vectors:%
\begin{equation*}
\left( 
\begin{array}{c}
\psi _{f} \\ 
\psi _{q}%
\end{array}%
\right) \sim N\left( 0,V(\Lambda _{1}^{0})\right) \text{,}
\end{equation*}%
with 
\begin{equation*}
V(\Lambda _{1}^{0})=\left( 
\begin{array}{cc}
V_{ff}(\Lambda _{1}^{0}) & V_{qf}(\Lambda _{1}^{0})^{\prime } \\ 
V_{qf}(\Lambda _{1}^{0}) & V_{qq}(\Lambda _{1}^{0})%
\end{array}%
\right) ,
\end{equation*}%
where $V_{ff}(\Lambda _{1}^{0}),$ $V_{qf}(\Lambda _{1}^{0})$ and $%
V_{qq}(\Lambda _{1}^{0})$ are $NK\times NK,$ $NK^{3}\times NK$ and $%
NK^{3}\times NK^{3}$ dimensional matrices.
\end{assumption}

Assumption \ref{assum3} is a high-level assumption which resembles
Assumption 1 in Kleibergen (2005) and holds under mild conditions. \nocite%
{kleibergen2005testing} Assumption \ref{assum3} holds true irrespective of
Assumption \ref{assum:unspanned factors}. Assumption \ref{assum: model
specification} is sufficient for Assumption \ref{assum3}, but our proposed
test statistics can be applied to more general cases than the model implied
in Assumption \ref{assum: model specification}. For our setting: 
\begin{equation*}
\psi _{f}=\psi _{\bar{f}}+\Psi _{q}\text{vec(}\Lambda _{1}^{0})\text{,}
\end{equation*}%
with $\Psi _{q}=$vecinv($\psi _{q})$ and 
\begin{equation*}
\begin{array}{c}
\sqrt{T}\left( \frac{1}{T}\sum_{t=1}^{T}(\bar{X}_{t-1}\otimes (\bar{R}_{t}-%
\hat{\beta}\hat{V}_{t}))-\left( Q_{XX}\otimes \beta \right) \text{vec}%
(\Lambda _{1}^{0})\right) \underset{d}{\rightarrow }\psi _{\bar{f}}\text{.}%
\end{array}%
\end{equation*}%
We also have 
\begin{equation*}
\psi _{q}=\text{vec}((Q_{XX}\otimes \Psi _{\beta })+(\Psi _{XX}\otimes \beta
))\text{,}
\end{equation*}%
where 
\begin{equation*}
\begin{array}{cl}
\sqrt{T}\text{vech}(\frac{1}{T}\sum_{t=1}^{T}\bar{X}_{t}\bar{X}_{t}^{\prime
}-Q_{XX})\underset{d}{\rightarrow }\psi _{XX}\text{,} & \Psi _{XX}=\text{
vechinv}(\psi _{XX})\text{,} \\ 
\sqrt{T}\text{vec}(\hat{\beta}-\beta )\underset{d}{\rightarrow }\psi _{\beta
}\text{ ,} & \Psi _{\beta }=\text{vecinv}(\psi _{\beta })\text{,}%
\end{array}%
\end{equation*}%
with vech$(A)$ containing the unique elements of a symmetric matrix $A$.
Since $\psi _{q}$ has $NK^{3}$ elements while the number of unique elements
in $\Psi _{\beta }$ and $\Psi _{XX}$ equals $NK+\frac{1}{2}K(K+1)\text{,}$
the joint normal distribution of $(\psi _{f}\text{,}$ $\psi _{q})$ is
further allowed to be degenerate. When $v_{t}$ is normal (as assumed in
Assumption \ref{assum: model specification}) or its third moment equals
zero, $\psi _{\beta }$ and $\psi _{XX}$ are also independently distributed.

The identification robust statistics use an estimator of the Jacobian whose
limit behavior under H$_{0}:\Lambda _{1}=\Lambda _{1}^{0}$ is independent of
the limit behavior of the sample moment, see Kleibergen (2005): 
\begin{equation*}
\begin{array}{rl}
\hat{D}_{T}(\Lambda _{1},X)= & (\hat{D}_{1,T}(\Lambda _{1},X)\ldots \hat{D}%
_{1,T}(\Lambda _{k},X)) \\ 
\text{vec(}\hat{D}_{T}(\Lambda _{1},X))= & \text{vec}(q_{T}(X))-\hat{V}%
_{qf}(\Lambda _{1})\hat{V}_{ff}(\Lambda _{1})^{-1}f_{T}(\Lambda _{1},X) \\ 
\sqrt{T}\text{vec(}\hat{D}_{T}(\Lambda _{1}^{0},X)-J)\underset{d}{%
\rightarrow } & \psi _{q.f}\sim N(0,V_{qq.f}(\Lambda _{1}^{0}))%
\end{array}%
\end{equation*}%
with $V_{qq.f}(\Lambda _{1})=V_{qq}-V_{qf}(\Lambda _{1})V_{ff}(\Lambda
_{1})^{-1}V_{qf}(\Lambda _{1})^{\prime }\text{,}$ $\hat{V}_{qf}(\Lambda
_{1}) $ and $\hat{V}_{ff}(\Lambda _{1})$ consistent estimators of $%
V_{qf}(\Lambda _{1}^{0})$ and $V_{ff}(\Lambda _{1}^{0}),$ and $\psi _{q.f}$
independent of $\psi _{f}.$

We can next define the identification robust Factor Anderson-Rubin (FAR),
(Kleibergen) Lagrange multiplier (KLM) and JKLM\ statistics for testing H$%
_{0}:\Lambda _{1}=\Lambda _{1}^{0}:$ 
\begin{equation*}
\begin{array}{cl}
\text{FAR(}\Lambda _{1}^{0})= & T\times f_{T}(\Lambda _{1}^{0},X)^{\prime }%
\hat{V}_{ff}(\Lambda _{1}^{0})^{-1}f_{T}(\Lambda _{1}^{0},X)\underset{d}{%
\rightarrow }\chi ^{2}(KN) \\ 
\text{KLM(}\Lambda _{1}^{0})= & T\times f_{T}(\Lambda _{1}^{0},X)^{\prime }%
\hat{V}_{ff}(\Lambda _{1}^{0})^{-\frac{1}{2}}P_{\hat{V}_{ff}(\Lambda
_{1}^{0})^{-\frac{1}{2}}\hat{D}_{T}(\Lambda _{1}^{0},X)} \\ 
\text{\thinspace \thinspace } & \text{\quad \quad \quad \quad \quad \quad }%
\hat{V}_{ff}(\Lambda _{1}^{0})^{-\frac{1}{2}}f_{T}(\Lambda _{1}^{0},X)%
\underset{d}{\rightarrow }\chi ^{2}(K^{2}) \\ 
\text{JKLM(}\Lambda _{1}^{0})= & T\times f_{T}(\Lambda _{1}^{0},X)^{\prime }%
\hat{V}_{ff}(\Lambda _{1}^{0})^{-\frac{1}{2}}M_{\hat{V}_{ff}(\Lambda
_{1}^{0})^{-\frac{1}{2}}\hat{D}_{T}(\Lambda _{1}^{0},X)} \\ 
\text{\thinspace \thinspace } & \text{\quad \quad \quad \quad \quad \quad }%
\hat{V}_{ff}(\Lambda _{1}^{0})^{-\frac{1}{2}}f_{T}(\Lambda _{1}^{0},X)%
\underset{d}{\rightarrow }\chi ^{2}(K(N-K)).%
\end{array}%
\end{equation*}%
The limiting distributions are a direct result of Assumption \ref{assum3}
and do not depend on the rank of the Jacobian or $\beta $, so the limiting
distributions hold regardless of Assumption \ref{assum:unspanned factors}.

\subsection{Illustrative simulation and empirical results}

We conduct a single-factor model simulation study to illustrate the
performance of the proposed robust joint tests. For the data generating
process (DGP), we consider 
\begin{equation*}
R_{t}=c+\beta \left( \Lambda _{1}X_{t-1}+v_{t}\right) +e_{t}\text{,}
\end{equation*}%
where the parameters are calibrated to data from Adrian et al. (2013). In
particular, we fix the sample size to be $T=300$ and use the eleven excess
returns as in Table 1. We calibrate with the first PCA factor from Adrian et
al. (2013) to mimic the strong identified case and the third one for the
weak identification setting. Figure 2 shows power curves of the conventional
t-statistic and the robust test statistics in both strong and weakly
identified cases. For both settings, FAR, KLM and JKLM tests are all size
correct, while the Wald test is size distorted under weak identification and
size correct but biased for stong identification. For weak identification,
the KLM test has some power loss away from the hypothesized value because of
which it is preferred to combine it in a conditional or unconditional manner
with the J-test to improve power, see Moreira (2003) and Kleibergen (2005).

We use the robust tests to analyze the time-varying component of the risk
premium. A detailed description of the involved excess returns and risk
factors has been discussed previously for Tables 1-2. Figure 3 shows the $p$%
-values for testing the risk premium associated with all six factors in a
single factor model with the identification robust tests. A $p$-value larger
than, say, 5\%, implies that we could not reject the null at the 5\%
significance level.\bigskip

\begin{figure}[htbp!]
	\includegraphics[width=\columnwidth]{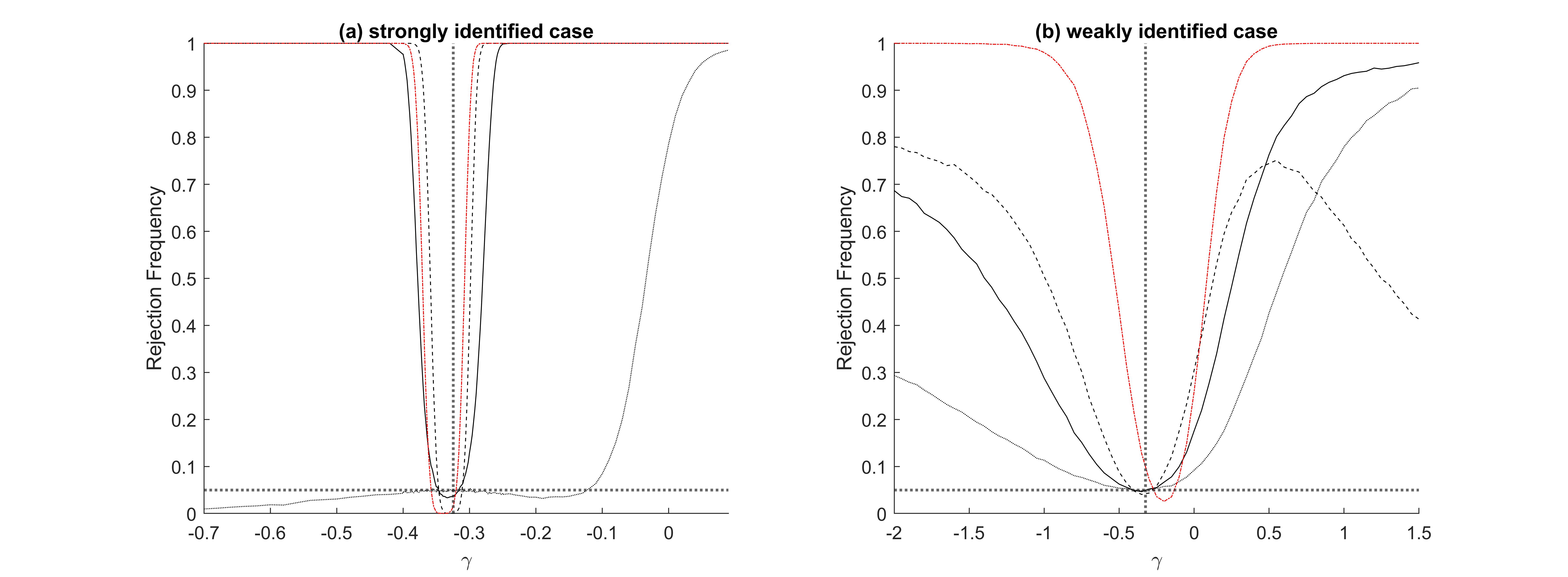}
	\caption*{{\small \textbf{Figure 2}: the left panel (a) is the strong
			identified case, while the right panel (b) is the weak identified case.
			Power curves (rejection frequencies) of Wald (dash-dotted, red), FAR
			(solid), KLM (dashed), and JKLM (dotted) testing the $\Lambda _{1}$ (scalar)
			with values on the horizontal axis; horizontal dashed line at 5\% and
			vertical one at the calibrated value of $\Lambda _{1}$.}}
\end{figure}
  \bigskip

Figure 3 shows that for most cases the JKLM test leads to unbounded 95\%
confidence sets even for strong factors, e.g., the first PCA factor, since
the $p$-value curves are above the 5\% line over the whole interval of
analyzed values of the risk premium which is consistent with the smaller
power observed in our simulation exercises and results since the JKLM test
primarily tests misspecification. For all factors, the FAR and KLM tests
provide bounded 95\% confidence sets since only for bounded regions the $p$%
-values are above the 5\% line, even for those potentially weak factors such
as the fifth PCA factor and the macro factor. The latter implies that in a
single factor setting, all these risk premia are identified. For the
high-order PCA factors, the robust tests, however, result in 95\% confidence
sets that differ from those resulting from the Wald test. Most striking is
that a zero value for the risk premium is not rejected for strong factors
such as the first and second PCA factors but rejected for potentially weak
factors. For example, the null hypothesis that $\Lambda _{1}=0$ is rejected
by the FAR and KLM test for both the fifth factor and the macro factor. This
is partly in line with Adrian et al. (2013), which highlight the role of the
higher-order principal components as the time variation may be largely
driven by, e.g., the fifth principal component. Therefore, some factors may
be weak but have some importance for interpreting the (time-varying)
expected returns.

\begin{figure}[htbp!]
	\includegraphics[height=0.8\textheight,width=\columnwidth]{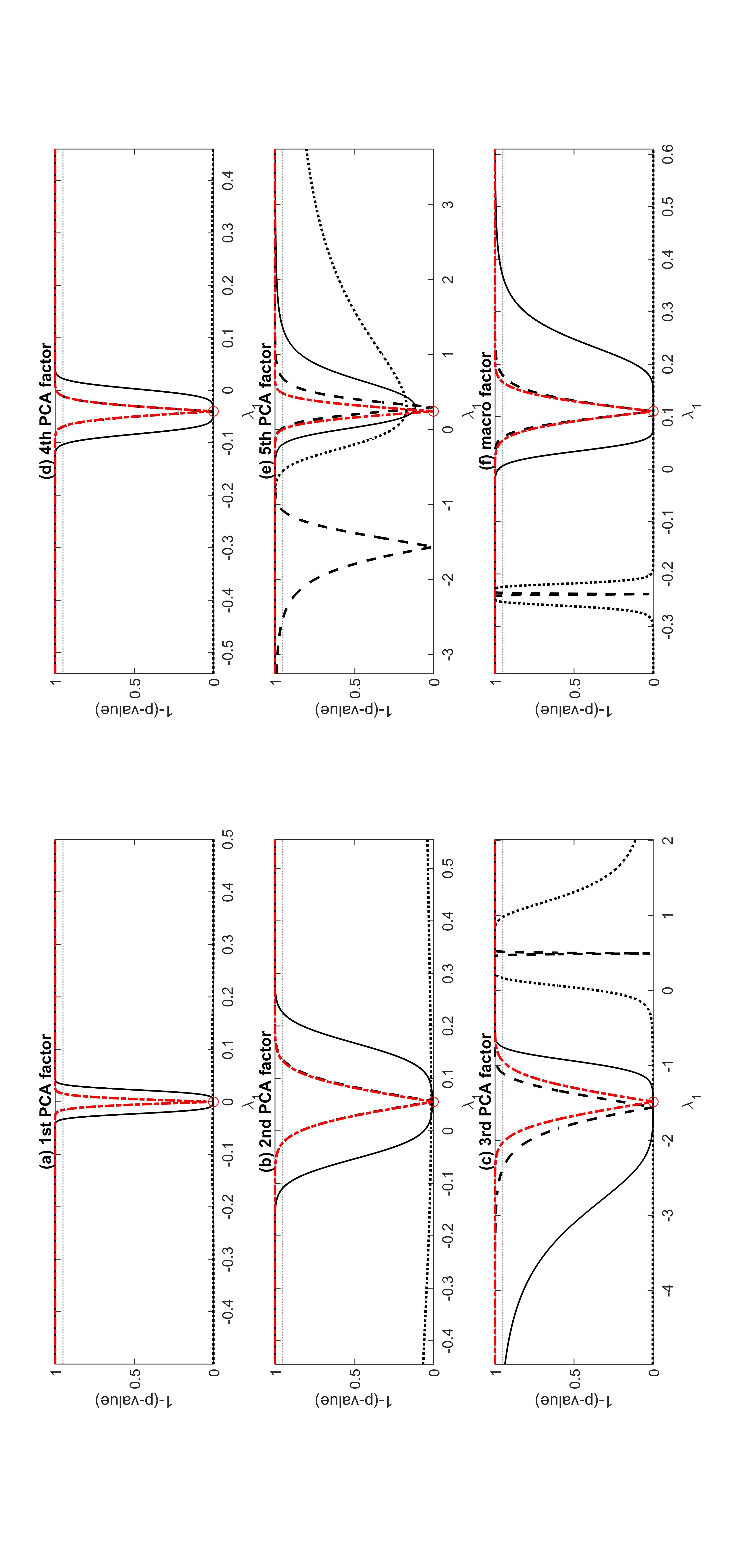}
	\caption*{{\small \textbf{Figure 3}: one-minus-$p$-value curves of Wald (dash-dotted,
			red); FAR (solid); KLM (dashed); JKLM (dotted) for testing the $\Lambda _{1}$
			(scalar) in a single factor model with values on the horizontal axis and
			dotted line at 95\%. This figure uses the same data as in Table 2.}}
\end{figure}

\newpage
\section{Identification robust sub-vector testing}

The identification robust tests introduced in the previous section are for
testing hypotheses specified on all elements of $\Lambda _{1}.$ We are often
interested in testing hypotheses specified on just subsets of the
parameters. When we analyze multi-factor models, testing whether or not a
certain factor risk premium exhibits time variation would require testing a
specific row of $\Lambda _{1}$, while testing whether a factor drives the
time variation would require to test the corresponding column of $\Lambda
_{1}$. Under our current settings, projection-based versions of the
identification robust tests would allow us to test such hypotheses whilst
preserving the size of the test, see \nocite{dufour2005projection}Dufour and
Taamouti (2005). These tests, however, lead to reduced power so we extend
the robust subset FAR test (sFAR) of \nocite{guggenberger2012asymptotic}%
Guggenberger et al. (2012) for testing hypotheses on all elements in a row
of $\Lambda _{1}$ which can be similarly extended to test for all elements
in a column of $\Lambda _{1}$.

Without loss of generality, we consider testing the hypothesis that the risk
premia associated with one specific factor, say the first, are all equal to $%
\lambda _{1}^{0}$: 
\begin{equation*}
\text{H}_{0}:\lambda _{1}=\lambda _{1}^{0},
\end{equation*}%
for $\Lambda _{1}=\binom{\lambda _{1}^{0\prime }}{\Lambda _{2}},$ $\lambda
_{1}:K\times 1,$ $\Lambda _{2}:(K-1)\times K.$ Under H$_{0}:\lambda
_{1}=\lambda _{1}^{0},$ the $N\times 2K$ dimensional reduced rank parameter
matrix in the equation for the stacked returns becomes: 
\begin{equation*}
\Phi =\left( \beta _{1}\ \text{{}}\vdots \ \text{{}}\beta _{2}\right) \left( 
\begin{array}{ccc}
\lambda _{1}^{0\prime } & 1 & 0 \\ 
\Lambda _{2} & 0 & I_{K-1}%
\end{array}%
\right) ,
\end{equation*}%
so post-multiplying by $\left(\begin{matrix}
	{I_{K}} &{-\lambda _{1}^{0}}&{0} \\
	 {0}&0&{I_{K-1}} 
\end{matrix}   \right)^{\prime } 
$ yields the $N\times (2K-1)$ matrix:%
\begin{equation*}
\Phi \left( 
\begin{array}{cc}
I_{K} & 0 \\ 
-\lambda _{1}^{0\prime } & 0 \\ 
0 & I_{K-1}%
\end{array}%
\right) =\left( \beta _{1}\ \text{{}}\vdots \ \text{{}}\beta _{2}\right)
\left( 
\begin{array}{cc}
0 & 0 \\ 
\Lambda _{2} & I_{K-1}%
\end{array}%
\right) =\beta _{2}\left( 
\begin{array}{cc}
\Lambda _{2} & I_{K-1}%
\end{array}%
\right) ,
\end{equation*}%
which, since the rank of $\beta _{2}\left( 
\begin{array}{cc}
\Lambda _{2} & I_{K-1}%
\end{array}%
\right) $ equals $K-1,$ shows that H$_{0}$ implies that the smallest $K$
singular values of $\Phi $ times $\left(\begin{matrix}
	{I_{K}} &{-\lambda _{1}^{0}}&{0} \\
	{0}&0&{I_{K-1}} 
\end{matrix}   \right) ^{\prime }$ equal zero. The sFAR
statistic for testing H$_{0}:\lambda _{1}=\lambda _{1}^{0}:$ 
\begin{equation*}
\begin{array}{cc}
\text{sFAR(}\lambda _{1})= & \min_{\Lambda _{2}}\text{FAR}(\Lambda
_{1}(\lambda _{1}^{0},\Lambda _{2})),%
\end{array}%
\end{equation*}%
therefore corresponds with a rank test of H$_{0}:$ rank$\left( \Phi \left( 
\begin{array}{cc}
I_{K} & 0 \\ 
-\lambda _{1}^{0\prime } & 0 \\ 
0 & I_{K-1}%
\end{array}%
\right) \right) =K-1.\bigskip $

The bounding distribution of the limiting distribution of the sFAR statistic
relies upon a Kronecker product structure (KPS) asymptotic covariance matrix
of the least squares estimator of the linear model (\nocite%
{guggenberger2012asymptotic} see Guggenberger et al. (2012)): 
\begin{equation*}
\hat{\Phi}=\frac{1}{T}\sum_{t=1}^{T}R_{t}\left( 
\begin{array}{c}
\bar{X}_{t} \\ 
\hat{v}_{t+1}%
\end{array}%
\right) ^{\prime }\left[ \frac{1}{T}\sum_{t=1}^{T}\left( 
\begin{array}{c}
\bar{X}_{t} \\ 
\hat{v}_{t+1}%
\end{array}%
\right) \left( 
\begin{array}{c}
\bar{X}_{t} \\ 
\hat{v}_{t+1}%
\end{array}%
\right) ^{\prime }\right] ^{-1}=\left( \hat{d}\ \text{{}}\vdots \ \text{{}}%
\hat{\beta}\right) .
\end{equation*}%
The KPS thus concerns the asymptotic variance of 
\begin{equation*}
\begin{array}{cc}
\frac{1}{\sqrt{T}}\sum_{t=1}^{T}\left( 
\begin{array}{c}
\bar{X}_{t} \\ 
\hat{v}_{t+1}%
\end{array}%
\right) \otimes \tilde{e}_{t}, & \tilde{e}_{t}=e_{t}+\beta (\bar{v}_{t}-\hat{%
v}_{t}).%
\end{array}%
\end{equation*}%
We note that $\hat{v}_{t}$ is not directly observed so it adds additional
sampling error when imputing estimates of $\hat{v}_{t}.$ To implement the
sFAR test, we therefore make the following assumption.

\begin{assumption}
\label{assum:KPS}There exists $\Omega \in \mathbb{R}^{2K\times 2K}$ and $%
\Sigma \in \mathbb{R}^{N\times N}$ symmetric positive definite matrices such
that $S=\Omega \otimes \Sigma $ and 
\begin{equation*}
\begin{array}{c}
\frac{1}{\sqrt{T}}\sum_{t=1}^{T}\left( \left( 
\begin{array}{c}
\bar{X}_{t} \\ 
\hat{v}_{t+1}%
\end{array}%
\right) \otimes \tilde{e}_{t}\right) \rightarrow _{d}N(0,S).%
\end{array}%
\end{equation*}
\end{assumption}

The asymptotic normality stated in Assumption \ref{assum:KPS} is a direct
result of Assumption \ref{assum: model specification}. If $\hat{v}_{t}$ is
directly observed, so $\tilde{e}_{t}=e_{t}$, Assumption \ref{assum: model
specification} implies that $\Omega =\mathbb{E}\left( (\bar{X}_{t}^{\prime }%
\text{ }\vdots \text{ }\hat{v}_{t+1}^{\prime })^{\prime }(\bar{X}%
_{t}^{\prime }\text{ }\vdots \text{ }\hat{v}_{t+1}^{\prime })\right) $ and $%
\Sigma =\text{var}(e_{t})$. Because of the additional sampling error due to
the generated regressor $\hat{v}_{t+1}$, Assumption \ref{assum:KPS} does,
however, not provide the exact specifications of $\Omega $ and $\Sigma $.
~\bigskip

\noindent {{\small \ 
\begin{tabular}{c|c|c|c|cc}
\hline\hline
& (1) & (2) & (3) & (4) &  \\ \hline
KPST & 212.0808 & 201.8551 & 197.1613 & 205.8054 &  \\ 
$p$-value & [0.0511] & [ 0.1265] & [ 0.1809] & [0.0910] &  \\ \hline\hline
\end{tabular}
}} ~\bigskip

\noindent {\small \textbf{Table 3}: KPS test (KPST) statistics for testing
the null hypothesis that $H_{0}:S=\Omega \otimes \Sigma $ for some $\Omega
\in \mathbb{\ \ R}^{2K\times 2K}$ and $\Sigma \in \mathbb{R}^{N\times N}$
symmetric positive definite matrices. All four cases use excess returns on
bonds with maturities 3, 12, 24, 60, 90, 120 months from Adrian et al.
(2013). For the factors $X_{t}$, (1) uses the macro factor (real activity)
and the level factor (first PCA factor), (2) uses the level factor (first
PCA factor) and the slope factor (second PCA factor), (3) uses the macro
factor (real activity) and the slope factor (second PCA factor) and (4) uses
the macro factor (real activity) and the curvature factor (third PCA
factor). \bigskip }

We use the KPS test (KPST) from \nocite{guggenberger2022test}Guggenberger et
al. (2022) to test for the proximity of a KPS matrix to the covariance
matrix, $S.$ Table 3 reports the KPST results, and shows that the KPS
restriction for $S$ is a realistic assumption since none of these tests
reject the null hypothesis that the covariance matrix has a KPS at the 5\%
significance level. A by-product of the KPS test is the KPS factorization
for $\hat{S},$ see \nocite{guggenberger2022test}Guggenberger et al.(2022).

\begin{proposition}
\label{prop:KPS covariance} Under Assumption \ref{assum:KPS}, and when there
is a consistent estimator for $S$, $\hat{S}$, then in large samples $\hat{S}%
\approx (\hat{\Omega}\otimes \hat{\Sigma}),$ where 
\begin{equation*}
\hat{\Omega}=\displaystyle \text{vecinv}\left( \left( 
\begin{array}{l}
\hat{L}_{11} \\ 
\hat{L}_{21}%
\end{array}%
\right) /\hat{L}_{11}\right),\hat{\Sigma}=\text{vecinv}(\hat{L}_{11}\hat{%
\sigma}_{1}\hat{N}_{1}^{\prime }),
\end{equation*}
and $\hat{L}_{11},\hat{L}_{21},\hat{\sigma}_{1},\hat{N}_{1}$ are specified
in the proof. The KPS covariance estimator $\hat{\Omega}\otimes \hat{\Sigma}$
provides a consistent estimator for $S$.
\end{proposition}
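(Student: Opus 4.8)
The plan is to recognise the construction as the nearest-Kronecker-product factorisation of Van Loan and Pitsianis and to obtain consistency through a continuous-mapping argument resting on a strictly positive spectral gap. First I would introduce the rearrangement (``reshuffle'') operator $\mathcal{R}$ that regards the $2KN\times 2KN$ matrix $S$ as a $2K\times 2K$ array of $N\times N$ blocks, vectorises each block, and arranges these columns into a $4K^{2}\times N^{2}$ matrix, so that the Pitsianis--Van Loan identity $\norm{S-\Omega\otimes\Sigma}_{F}=\norm{\mathcal{R}(S)-\text{vec}(\Omega)\,\text{vec}(\Sigma)'}_{F}$ holds. This identity converts the Kronecker approximation problem into a rank-one matrix approximation problem whose solution is read off from the leading singular triple of $\mathcal{R}(S)$.

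Second, under Assumption \ref{assum:KPS} we have exactly $S=\Omega\otimes\Sigma$, so $\mathcal{R}(S)=\text{vec}(\Omega)\,\text{vec}(\Sigma)'$ is exactly rank one. Positive definiteness of $\Omega$ and $\Sigma$ makes $\text{vec}(\Omega)$ and $\text{vec}(\Sigma)$ nonzero, so the unique nonzero singular value $\sigma_{1}=\norm{\text{vec}(\Omega)}\,\norm{\text{vec}(\Sigma)}$ is strictly positive while all other singular values vanish; the spectral gap is $\sigma_{1}>0$. Writing the singular value decomposition of the sample object as $\mathcal{R}(\hat{S})=\hat{\sigma}_{1}\hat{L}_{1}\hat{N}_{1}'+\cdots$ with leading unit left and right singular vectors $\hat{L}_{1}=\binom{\hat{L}_{11}}{\hat{L}_{21}}$ (with scalar leading entry $\hat{L}_{11}$) and $\hat{N}_{1}$, I would define $\hat{\Omega}$ and $\hat{\Sigma}$ exactly as in the statement. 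The division by $\hat{L}_{11}$ removes the one-parameter scale indeterminacy $\Omega\otimes\Sigma=(c\Omega)\otimes(c^{-1}\Sigma)$ by normalising the leading entry of $\hat{\Omega}$ to one; the compensating factor $\hat{L}_{11}\hat{\sigma}_{1}$ is loaded into $\hat{\Sigma}$. A useful observation for what follows is that $\hat{\Omega}$ and $\hat{\Sigma}$ are each invariant to the common sign flip $(\hat{L}_{1},\hat{N}_{1})\mapsto(-\hat{L}_{1},-\hat{N}_{1})$ left free by the decomposition, so both are genuinely well-defined functions of $\mathcal{R}(\hat{S})$, and by construction $\hat{\Omega}\otimes\hat{\Sigma}$ reproduces the rank-one part $\hat{\sigma}_{1}\hat{L}_{1}\hat{N}_{1}'$.

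Third, for consistency I would use that $\mathcal{R}$ is a linear rearrangement, hence continuous, so $\hat{S}\Pconverge S$ yields $\mathcal{R}(\hat{S})\Pconverge\mathcal{R}(S)$. Because the leading singular value of $\mathcal{R}(S)$ is simple with a strictly positive gap, Weyl's inequality gives $\hat{\sigma}_{1}\Pconverge\sigma_{1}$ and a Davis--Kahan singular-subspace bound makes the leading singular projector, and hence the sign-invariant maps $\hat{\Omega}$ and $\hat{\Sigma}$, continuous in a neighbourhood of $\mathcal{R}(S)$; positive definiteness additionally keeps $\hat{L}_{11}$ bounded away from zero (its limit is $\Omega_{11}/\norm{\text{vec}(\Omega)}>0$), so the normalisation is well defined with probability approaching one. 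The continuous mapping theorem then delivers $\hat{\Omega}\Pconverge\Omega$ and $\hat{\Sigma}\Pconverge\Sigma$, and a final application to the Kronecker product gives $\hat{\Omega}\otimes\hat{\Sigma}\Pconverge\Omega\otimes\Sigma=S$. The main obstacle is precisely this indeterminacy: individual singular vectors are not continuous functions of the matrix because of the free common sign, so the argument goes through only because positive definiteness supplies a genuine spectral gap and a nonvanishing $\hat{L}_{11}$, which together let the explicit normalisation select a unique, continuous representative of the otherwise unidentified Kronecker factors.
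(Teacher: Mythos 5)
Your construction is exactly the one in the paper: rearrange $\hat{S}$ into the $4K^{2}\times N^{2}$ matrix $\mathcal{R}(\hat{S})$, take its leading singular triple, and normalise by $\hat{L}_{11}$ to fix the scale indeterminacy, so the approach is essentially identical. The only difference is that the paper simply cites Theorem 1 of Guggenberger et al.\ (2022) for consistency, whereas you supply the underlying argument yourself (exact rank-one structure of $\mathcal{R}(S)$ under the KPS assumption, spectral gap, Weyl/Davis--Kahan continuity of the sign-invariant maps, and $\hat{L}_{11}$ bounded away from zero); this is a correct and welcome elaboration rather than a different route.
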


\begin{proof}
		See Appendix.
	\end{proof}

Because of the KPS covariance structure, $\hat{\Omega}\otimes \hat{\Sigma},$
we can compute the sFAR statistic using the characteristic polynomial stated
in Proposition \ref{prop:sFAR}.

\begin{proposition}
\label{prop:sFAR} Under Assumptions \ref{assum3} and \ref{assum:KPS}, let $%
\hat{V}_{\hat{\Phi}}=(\hat{\Psi}\otimes \hat{\Sigma}),$ $\hat{\Psi}=\hat{W}%
^{-1}\hat{\Omega}\hat{W}^{-1}=\left( 
\begin{array}{cc}
\hat{\Psi}_{X} & \hat{\Psi}_{XV} \\ 
\hat{\Psi}_{VX} & \hat{\Psi}_{V}%
\end{array}%
\right) $, $\hat{W}=\frac{1}{T}\sum_{t=1}^{T}\binom{\bar{X}_{t}}{\hat{v}%
_{t+1}}\binom{\bar{X}_{t}}{\hat{v}_{t+1}}^{\prime },$ sFAR$(\lambda
_{1}^{0}),$ for testing $\text{H}_{0}:\lambda _{1}=\lambda _{1}^{0},$ for $%
\Lambda _{1}=\binom{\lambda _{1}^{\prime }}{\Lambda _{2}},$ $\lambda
_{1}:K\times 1,$ $\Lambda _{2}:(K-1)\times K,$ equals $T$ times the sum of
the $K$ smallest roots of the characteristic polynomial: 
\begin{equation*}
\begin{array}{lc}
\left\vert \mu \left( 
\begin{array}{cc}
I_{K} & 0 \\ 
-\lambda _{1}^{0\prime } & 0 \\ 
0 & I_{K-1}%
\end{array}%
\right) ^{\prime }\hat{\Psi}\left( 
\begin{array}{cc}
I_{K} & 0 \\ 
-\lambda _{1}^{0\prime } & 0 \\ 
0 & I_{K-1}%
\end{array}%
\right) -\right. &  \\ 
\left. \qquad \qquad \qquad \qquad \left( 
\begin{array}{cc}
I_{K} & 0 \\ 
-\lambda _{1}^{0\prime } & 0 \\ 
0 & I_{K-1}%
\end{array}%
\right) ^{\prime }\hat{\Phi}^{\prime }\hat{\Sigma}^{-1}\hat{\Phi}\left( 
\begin{array}{cc}
I_{K} & 0 \\ 
-\lambda _{1}^{0\prime } & 0 \\ 
0 & I_{K-1}%
\end{array}%
\right) \right\vert & =0,%
\end{array}%
\end{equation*}%
and $\lim_{T\rightarrow \infty }\text{sFAR(}\lambda _{1})\prec \chi
^{2}(K(N-(K-1)).$ The bound on the limiting distribution holds regardless of
Assumption \ref{assum:unspanned factors}.
\end{proposition}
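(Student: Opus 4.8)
The plan is to convert the profiled statistic into a generalised eigenvalue problem through the Kronecker structure and then bound its limiting law. Throughout I write $S=\left(\begin{smallmatrix}I_{K}&0\\-\lambda_{1}^{0\prime}&0\\0&I_{K-1}\end{smallmatrix}\right)$ for the $2K\times(2K-1)$ selector appearing in the statement and $\hat{\Phi}=(\hat{d}\,\vdots\,\hat{\beta})$. First I would re-express the moment vector through $\hat{\Phi}$. Because the VAR residuals $\hat{V}_{t}$ are orthogonal in sample to $\bar{X}_{t-1}$, the first term of $f_{T}$ equals $\text{vec}\big(\frac{1}{T}\sum_{t=1}^{T}(\bar{R}_{t}-\hat{\beta}\hat{V}_{t})\bar{X}_{t-1}^{\prime}\big)=\text{vec}(\hat{d}\hat{Q}_{XX})$, so $f_{T}(\Lambda_{1},X)=\text{vec}(\hat{\Phi}C\hat{Q}_{XX})$ with $C=\binom{I_{K}}{-\Lambda_{1}}$. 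Under Assumption \ref{assum:KPS} and the KPS factorisation of Proposition \ref{prop:KPS covariance} the variance of $\sqrt{T}\,\text{vec}(\hat{\Phi}-\Phi)$ is $\hat{\Psi}\otimes\hat{\Sigma}$, so the KPS-based estimator is $\hat{V}_{ff}=\hat{Q}_{XX}(C^{\prime}\hat{\Psi}C)\hat{Q}_{XX}\otimes\hat{\Sigma}$. The $\hat{Q}_{XX}$ factors cancel exactly in the quadratic form, and the identity $\text{vec}(Y)^{\prime}(M\otimes N)^{-1}\text{vec}(Y)=\text{tr}(Y^{\prime}N^{-1}YM^{-1})$ gives $\text{FAR}(\Lambda_{1})=T\,\text{tr}\big[(C^{\prime}\hat{\Phi}^{\prime}\hat{\Sigma}^{-1}\hat{\Phi}C)(C^{\prime}\hat{\Psi}C)^{-1}\big]$.

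Next I would profile out $\Lambda_{2}$. Block multiplication gives the reparametrisation $C=S\,C_{2}$ with $C_{2}=\binom{I_{K}}{-\Lambda_{2}}$ a $(2K-1)\times K$ matrix, which encodes the constraint $\lambda_{1}=\lambda_{1}^{0}$ together with the free nuisance. Setting $A=S^{\prime}\hat{\Psi}S$ and $B=S^{\prime}\hat{\Phi}^{\prime}\hat{\Sigma}^{-1}\hat{\Phi}S$ yields $\text{FAR}(\lambda_{1}^{0},\Lambda_{2})=T\,\text{tr}[(C_{2}^{\prime}BC_{2})(C_{2}^{\prime}AC_{2})^{-1}]$. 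This objective is invariant under $C_{2}\mapsto C_{2}M$ for invertible $K\times K$ matrices $M$, hence depends only on the column space of $C_{2}$, so minimising over $\Lambda_{2}$ is minimising a trace-ratio over $K$-dimensional subspaces. After simultaneously diagonalising the pencil $(B,A)$ (with $A$ positive definite), Ky Fan's minimum principle shows this equals the sum of the $K$ smallest generalised eigenvalues of $(B,A)$, i.e.\ $T$ times the sum of the $K$ smallest roots of $|\mu A-B|=0$, which is exactly the displayed characteristic polynomial.

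Finally I would establish the bound. By Assumption \ref{assum3} and the Kronecker limit, $\sqrt{T}\,\text{vec}(\hat{\Phi}S-\Phi S)\rightarrow_{d}N(0,A\otimes\Sigma)$, and under $\text{H}_{0}$ the population matrix $\Phi S=\beta_{2}(\Lambda_{2}\ \vdots\ I_{K-1})$ has rank exactly $K-1$. Whitening the rows by $\hat{\Sigma}^{-1/2}$ and the columns by $A^{-1/2}$ turns the roots into squared singular values of a matrix whose noise component is standard Gaussian; the $K-1$ largest roots then diverge at rate $T$ (they absorb the rank-$(K-1)$ signal) while the $K$ smallest stay $O_{p}(1)$, and $T$ times their sum converges to the sum of the $K$ smallest eigenvalues of a Gaussian quadratic form. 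Following the argument of Guggenberger et al.\ (2012), this sum is stochastically dominated by $\chi^{2}(K(N-(K-1)))$, with equality in the least favourable configuration $\text{rank}(\beta_{2})=K-1$, which gives $\lim_{T\rightarrow\infty}\text{sFAR}(\lambda_{1})\prec\chi^{2}(K(N-(K-1)))$.

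The hard part is this last step. The separation of the roots (exactly $K-1$ diverging, $K$ remaining bounded) and the stochastic dominance must hold \emph{uniformly}, without assuming the nuisance block $\beta_{2}$ is strongly identified, since under Assumption \ref{assum:unspanned factors} $\beta_{2}$ may itself be near reduced rank; this is precisely where the Guggenberger et al.\ (2012) machinery enters, and extending it from a single smallest root (a scalar subvector) to the sum of the $K$ smallest roots (an entire row of $\Lambda_{1}$) is the novel ingredient and the reason the bound holds regardless of Assumption \ref{assum:unspanned factors}. By contrast the algebraic reduction in the first two paragraphs is routine once the Kronecker structure of Assumption \ref{assum:KPS} is available.
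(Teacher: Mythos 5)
Your first two paragraphs reproduce, in compressed form, the paper's own reduction: the paper also rewrites $f_T$ as $(\hat{Q}_{XX}\otimes I_N)\,\text{vec}(\hat{\Phi}A(\lambda_1^0)\pi_{\Lambda_2})$, uses the Kronecker factorization from Assumption \ref{assum:KPS} so that the $\hat{Q}_{XX}$ blocks cancel, and converts the profiled trace criterion into the sum of the $K$ smallest generalized eigenvalues via a trace-minimization theorem (it cites Sameh and Wisniewski (1982); your Ky Fan principle is the same tool). One step you pass over too quickly: matrices of the form $\binom{I_K}{-\Lambda_2}$ parametrize only those $K$-dimensional subspaces whose projection onto the first $K$ coordinates is invertible, a dense but proper subset of the Grassmannian, so the identity ``$\inf_{\Lambda_2}\text{FAR} = $ minimum over all subspaces'' needs a continuity or perturbation argument. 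The paper spends a nontrivial part of the proof of Proposition \ref{prop:additional}.\ref{prop:additional 1} on exactly this point (the case where the top $K\times K$ block $\hat{q}_1$ of the minimizer is singular, handled by a Weyl-type perturbation of singular values). This is a fixable, secondary gap.

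The substantive gap is the final step, and your own text effectively concedes it. The sentence ``following the argument of Guggenberger et al.\ (2012), this sum is stochastically dominated by $\chi^{2}(K(N-(K-1)))$'' is not a proof: that assertion \emph{is} the proposition. Guggenberger et al.\ (2012) bound the single smallest root of a $(1+m_W)$-dimensional pencil (one partialled-out coefficient vector, effectively rank-one noncentrality), and their argument does not deliver the joint behavior of the $K$ smallest eigenvalues of a noncentral Wishart matrix whose noncentrality has rank $K-1$, which is what the sum requires. What the paper actually proves, and what your sketch omits entirely, is threefold: (i) a Courant-Fisher (min-max) argument showing that under $H_0$ the $K$ smallest sample eigenvalues are $O_p(1)$ whatever the identification strength of $\beta_2$, while the $K-1$ largest diverge under strong identification (Proposition \ref{prop:additional}.\ref{prop:additional 2}); (ii) a least-favorable-sequence construction showing that for an arbitrary drifting nuisance sequence --- including weak or unspanned $\beta_2$, which is the entire content of the ``regardless of Assumption \ref{assum:unspanned factors}'' clause --- there exists a strongly identified sequence whose small eigenvalues asymptotically dominate, so the strongly identified case is worst (Proposition \ref{prop:additional}.\ref{prop:additional 3}); and (iii) in the strongly identified case, an asymptotic expansion of the joint density of noncentral Wishart eigenvalues (extending Muirhead (1978) and Leach (1969), following Guggenberger et al.\ (2019), not (2012)) that yields the approximate conditional density of the $K$ smallest eigenvalues given the $K-1$ largest, together with proofs that the resulting conditional quantiles are increasing in the conditioning eigenvalues and bounded by the $\chi^{2}(K(N-(K-1)))$ quantiles (Propositions \ref{prop:a2} and \ref{largest eigenvalues}, Corollaries \ref{coro} and \ref{coro2}). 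Note also that even with $\text{rank}(\beta_2)=K-1$ the dominance is not an ``equality''; it becomes exact only in the limit of infinite identification strength, which is why the conditional-density machinery is needed at all. You correctly identify this extension from one smallest root to the sum of $K$ smallest roots as the novel ingredient, but flagging the hard part is not the same as proving it; without (i)--(iii) the claimed bound, and hence the size control of the sFAR test under weak identification, is unsupported.
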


\begin{proof}
		See Appendix.
	\end{proof}
\noindent

\bigskip
 
\begin{figure}[htbp!]
	\includegraphics[width=\columnwidth,height=0.3\textheight]{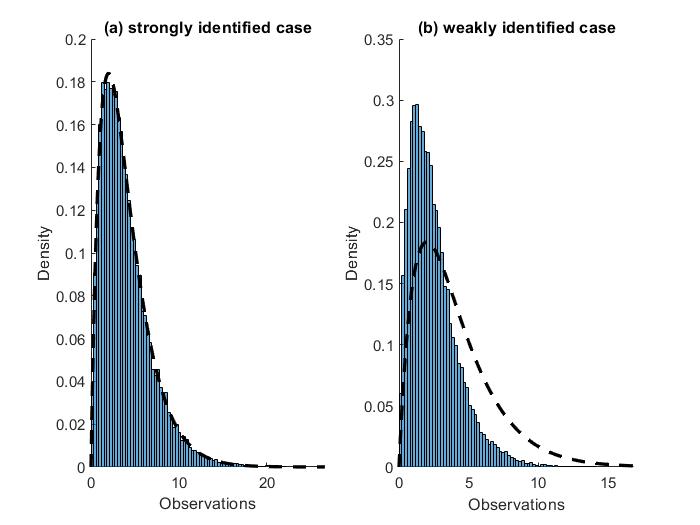}
	\caption*{{\small \textbf{Figure 4:} simulated density plots of the sFAR test
			statistic (shadowed bins) and the density function of the $\chi ^{2}$%
			-distribution (dashed black curve). The left panel (a) is the strong
			identified case, while the right panel (b) is the weak identified case. }}
\end{figure}%
\bigskip

Figure 4 illustrates the $\chi ^{2}$ bound on the limiting distribution of
the sFAR statistic stated in Proposition \ref{prop:sFAR}. The density of the
limiting distribution of the sFAR statistic is simulated for a two-factor
model, which uses the first two PCA factors to mimic the strongly identified
case and the third and the fifth PCA factors to mimic weak identification.
The limiting distribution of the sFAR statistic is $\chi ^{2}$ when the
model is strongly identified. In the weakly identified case, as shown in
Figure 4, the limiting distribution is bounded by the $\chi ^{2}$
distribution. Using $\chi ^{2}$ critical values for the sFAR test thus
controls the size of the test. \bigskip

For projection-based tests on $\lambda _{1},$ the involved test has to be
computed over a grid of points concerning the partialled out parameters.
This becomes computationally burdensome when the number of partialled out
parameters increases because of an increased dimension of $\Lambda _{1}$
resulting from more factors. It makes our proposed approach involving the
sFAR test more empirically appealing because it does not involve an
extensive grid search. In practice, Assumption \ref{assum:KPS} can be
relaxed by using the KPST as a pre-test for conducting robust subset testing
as described in Guggenberger et al. (2022).\bigskip

The value of the sFAR statistic at parameter values distant from zero
provide a diagnostic to indicate if the confidence sets of the hypothesized
parameters are bounded. These tests are therefore indicative of weak
identification.

\begin{proposition}
\label{prop:sFAR far away}\label{prop:min boundary} For tests of $\text{H}%
_{0}:\lambda _{1}=c\lambda _{1}^{0}$ with $\lambda _{1}^{0}$ a fixed vector
of length one and $c$ a scalar, $\lim_{c\rightarrow \infty }$ sFAR($c\lambda
_{1}^{0}$), the realized value of the sFAR statistic at a distant value of $%
\lambda _{1}$ in the direction of $\lambda _{1}^{0}$, equals $T$ times the
sum of the K smallest roots of 
\begin{equation*}
\begin{array}{rl}
\left\vert \left( 
\begin{array}{cc}
\bar{\lambda}_{1,\perp }^{0} & 0 \\ 
0 & I_{K}%
\end{array}%
\right) ^{\prime }\left[ \mu \hat{\Psi}-\hat{\Phi}^{\prime }\hat{\Sigma}^{-1}%
\hat{\Phi}\right] \left( 
\begin{array}{cc}
\bar{\lambda}_{1,\perp }^{0} & 0 \\ 
0 & I_{K}%
\end{array}%
\right) \right\vert & =0,%
\end{array}%
\end{equation*}%
where $\bar{\lambda}_{1,\perp }^{0}$ is a $K\times (K-1)$ orthonormal matrix
that is orthogonal to $\lambda _{1}^{0}$. The limit sFAR statistic is
uniformly bounded from below by the minimum eigenvalue of $T\hat{\Psi}%
_{V}^{-1/2\prime }\hat{\beta}^{\prime }\hat{\Sigma}^{-1}\hat{\beta}\hat{\Psi}%
_{V}^{-1/2}$.
\end{proposition}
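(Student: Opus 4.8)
The plan is to reduce the limit $c\to\infty$ to a statement about the \emph{column space} of the selection matrix, and then to obtain the lower bound from an interlacing argument. Write
\[
A_c=\begin{pmatrix} I_K & 0\\ -c\lambda_1^{0\prime} & 0\\ 0 & I_{K-1}\end{pmatrix},
\]
the $2K\times(2K-1)$ matrix appearing in Proposition \ref{prop:sFAR} with $\lambda_1^0$ replaced by $c\lambda_1^0$, so that $\mathrm{sFAR}(c\lambda_1^0)$ is $T$ times the sum of the $K$ smallest roots of $\det\!\bigl(A_c^\prime(\mu\hat\Psi-\hat\Phi^\prime\hat\Sigma^{-1}\hat\Phi)A_c\bigr)=0$. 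The first thing I would record is an invariance: for any nonsingular $(2K-1)\times(2K-1)$ matrix $G$, replacing $A_c$ by $A_cG$ multiplies this determinant by $\det(G)^2$ and hence leaves its $2K-1$ roots unchanged. Consequently the roots $\mu$ depend on $A_c$ only through $\mathrm{col}(A_c)$, so the entire $c\to\infty$ analysis can be carried out at the level of subspaces rather than of the diverging matrices $A_c$.

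For the limit, the orthogonal complement of $\mathrm{col}(A_c)$ is spanned by $(c\lambda_1^{0\prime},1,0)^\prime$; normalising and sending $c\to\infty$ gives $\mathrm{col}(A_c)^\perp\to\mathrm{span}\{(\lambda_1^{0\prime},0,0)^\prime\}$, so $\mathrm{col}(A_c)$ converges to $(\lambda_1^{0\prime},0,0)^\perp$, which is exactly the column space of $\tilde A=\left(\begin{smallmatrix}\bar\lambda_{1,\perp}^0 & 0\\ 0 & I_K\end{smallmatrix}\right)$ (since $\lambda_1^{0\prime}\bar\lambda_{1,\perp}^0=0$ and dimensions match). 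I would then pick orthonormal bases $Q_c$ of $\mathrm{col}(A_c)$ with $Q_c\to Q_\infty$, a basis of $\mathrm{col}(\tilde A)$. By the invariance above, $(Q_c^\prime\hat\Psi Q_c,\,Q_c^\prime\hat\Phi^\prime\hat\Sigma^{-1}\hat\Phi Q_c)$ has the same roots as the original pencil; its entries converge to those of the $Q_\infty$-pencil, and $\hat\Psi\succ0$ keeps $Q_c^\prime\hat\Psi Q_c$ uniformly positive definite, so all $2K-1$ roots stay finite and converge by continuity of eigenvalues. As $Q_\infty$ and $\tilde A$ span the same subspace, the limiting roots are precisely those of the characteristic polynomial in the statement, and the sum of the $K$ smallest—a continuous symmetric function of the sorted roots—converges to the corresponding sum, proving the first assertion.

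For the uniform bound, using $\hat\Phi\tilde A=(\hat d\bar\lambda_{1,\perp}^0\ \vdots\ \hat\beta)$ I would compute that the trailing $K\times K$ diagonal blocks of $\tilde A^\prime\hat\Psi\tilde A$ and of $\tilde A^\prime\hat\Phi^\prime\hat\Sigma^{-1}\hat\Phi\tilde A$ equal $\hat\Psi_V$ and $\hat\beta^\prime\hat\Sigma^{-1}\hat\beta$, neither involving $\lambda_1^0$. Factoring $\tilde A^\prime\hat\Psi\tilde A=R^\prime R$ and passing to the symmetric matrix $R^{-\prime}\tilde A^\prime\hat\Phi^\prime\hat\Sigma^{-1}\hat\Phi\tilde A R^{-1}$, whose eigenvalues are the roots $\mu_{(1)}\le\cdots\le\mu_{(2K-1)}$, these roots are nonnegative because $\hat\Phi^\prime\hat\Sigma^{-1}\hat\Phi\succeq0$. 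Restricting the associated Rayleigh quotient to the $v$-block $\{(0,y^\prime)^\prime:y\in\reals^K\}$ reproduces exactly the generalized eigenvalues $\nu_{(1)}\le\cdots\le\nu_{(K)}$ of $(\hat\Psi_V,\hat\beta^\prime\hat\Sigma^{-1}\hat\beta)$, i.e.\ the eigenvalues of $\hat\Psi_V^{-1/2\prime}\hat\beta^\prime\hat\Sigma^{-1}\hat\beta\hat\Psi_V^{-1/2}$. The Poincar\'e separation (Cauchy interlacing) theorem for this $K$-dimensional restriction of a $(2K-1)$-dimensional symmetric problem yields $\nu_{(1)}\le\mu_{(K)}$; combined with nonnegativity of the smaller roots this gives $\sum_{j=1}^K\mu_{(j)}\ge\mu_{(K)}\ge\nu_{(1)}$. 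Multiplying by $T$ delivers the stated bound, and since $\nu_{(1)}$ is free of $\lambda_1^0$ the bound holds uniformly over the chosen direction.

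The only genuinely delicate point is the $c\to\infty$ limit, since $A_c$ itself diverges; the crux is the invariance observation that the roots are a function of $\mathrm{col}(A_c)$ alone, together with the convergence of that subspace, which is what makes the limit well defined. Once the limiting pencil is in hand, the lower bound is a routine interlacing-plus-nonnegativity argument.
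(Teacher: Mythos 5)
Your proposal is correct and follows essentially the same route as the paper: the paper handles the $c\to\infty$ limit by explicitly post-multiplying the selection matrix by an orthonormal change of basis and a diagonal rescaling in $c$ (which is exactly your invariance of the roots under $A_c\mapsto A_cG$, realized by a concrete $G$), and it obtains the lower bound by whitening with $\hat{\Psi}$, identifying the trailing $K\times K$ block with $\hat{\Psi}_V^{-1/2\prime}\hat{\beta}^{\prime}\hat{\Sigma}^{-1}\hat{\beta}\hat{\Psi}_V^{-1/2}$, and invoking Cauchy interlacing plus nonnegativity, just as you do. Your subspace-convergence phrasing of the limit and your Rayleigh-quotient derivation of the trailing block are cleaner packagings of the same steps, not a different argument.
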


\begin{proof}
	See Appendix.
\end{proof}Proposition \ref{prop:sFAR far away} provides a way of verifying
whether the confidence sets resulting from the sFAR statistic are bounded or
unbounded in specific directions (\nocite{dufour1997some}Dufour (1997), 
\nocite{kleibergen2020robust}Kleibergen and Zhan (2020), Kleibergen (2021), 
\nocite{khalaf2016identification}Khalaf and Schaller (2016), \nocite%
{kleibergen2019identification}Kleibergen et al. (2022)). The minimum
eigenvalue of $T\hat{\Psi}_{V}^{-1/2\prime }\hat{\beta}^{\prime }\hat{\Sigma}%
^{-1}\hat{\beta}\hat{\Psi}_{V}^{-1/2}$ is a rank test statistic concerning
the rank of the factor loading matrix $\beta $ (see \nocite%
{kleibergen2006generalized}Kleibergen and Paap (2006)), so Proposition \ref%
{prop:sFAR far away} shows that the sFAR test evaluated at distant values
relates to the rank of $\beta .$ Proposition \ref{prop:sFAR far away} also
explains that when we encounter weak identification issues with $\beta $'s
close to reduced rank, we have unbounded confidence sets. The lower bound is
sharp when $K=1,$ as indicated in the proof of {Proposition} \ref{prop:min
boundary}, for which case also Theorem 12 in \nocite{kleibergen2021efficient}%
Kleibergen (2021) applies. When $K=1$ and the Kleibergen-Paap rank test is
significant at the 5\% significance level, Proposition \ref{prop:min
boundary} implies that the sFAR test leads to bounded 95\% confidence sets
of $\lambda _{1}$. \bigskip

Table 4 reports the Kleibergen-Paap rank test for different factor settings
for the data from Adrian et al. (2013). It shows that, in line with Figure
3, all single-factor model have bounded 95\% confidence sets for the
time-varying risk premia, which are less likely to be bounded when we
include more than three factors. The fifth factor, though identified in a
single-factor setting, suffers from weak identification problems when we
include other factors. Table 4 also shows that the rank test statistic is a
good indicator of unboundedness as small values of the rank test statistics
suggest unbounded confidence sets. \newline
~\bigskip \noindent {{\small 
\begin{tabular}{ll|ll|ll|ll|ll}
\hline\hline
(1) & rank & (2) & rank & (3) & rank & (4) & rank & (5) & rank \\ 
& test &  & test &  & test &  & test &  & test \\ \hline
1* & 30075 & 1,2 & 1290 & 1,2,3 & 711.1 & 1,2,3, & 932.9 & 1,2,3, & {9.105}
\\ 
& {[0.000]} &  & {[0.000]} &  & {[0.000]} & 4 & {[0.000]} & 4,5$\dagger
\dagger $ & {[0.003]} \\ 
2* & 4409 & 1,3 & 1487 & 1,2,4 & 1317 & 1,2,3, & {5.103} &  &  \\ 
& {[0.000]} &  & {[0.000]} &  & {[0.000]} & 5$\dagger \dagger $ & {[0.078]}
&  &  \\ 
3* & 973.0 & 1,4 & 1073 & 1,2,5 & 26.03 & 1,2,4, & 76.53 &  &  \\ 
& {[0.000]} &  & {[0.000]} &  & {[0.002]} & 5 & {[0.000]} &  &  \\ 
4* & 703.2 & 1,5 & 39.82 & 1,3,4 & 603.8 & 1,3,4, & {12.36} &  &  \\ 
& {[0.000]} &  & {[0.000]} &  & {[0.000]} & 5 & {[0.002]} &  &  \\ 
5* & 74.62 & 2,3 & 864.7 & 1,3,5 & {13.22} & 2,3,4, & 16.23 &  &  \\ 
& {[0.000]} &  & {[0.000]} &  & {[0.004]} & 5 & {[0.000]} &  &  \\ 
&  & 2,4 & 990.9 & 1,4,5 & 110.3 &  &  &  &  \\ 
&  &  & {[0.000]} &  & {[0.000]} &  &  &  &  \\ 
&  & 2,5 & 63.17 & 2,3,4 & 765.8 &  &  &  &  \\ 
&  &  & {[0.000]} &  & {[0.000]} &  &  &  &  \\ 
&  & 3,4 & 825.2 & 2,3,5 & 16.96 &  &  &  &  \\ 
&  &  & {[0.000]} &  & {[0.001]} &  &  &  &  \\ 
&  & 3,5 & {11.14} & 2,4,5 & 549.0 &  &  &  &  \\ 
&  &  & {[0.025]} &  & {[0.000]} &  &  &  &  \\ 
&  & 4,5 & 445.2 & 3,4,5 & 18.05 &  &  &  &  \\ 
&  &  & {[0.000]} &  & {[0.000]} &  &  &  &  \\ \hline\hline
\end{tabular}
}} ~\newline
\noindent {\small \textbf{Table 4}: Kleibergen-Paap rank statistic testing $%
H_{0}:\text{rank}(\beta )=K-1$ ($K$ denotes the number of factors) and its
associated [$p$-value] in square brackets, for varying factor combinations.
The colum headed by (i), for i=1,\ldots ,5, states which factor combinations
are used when using i factors. All cases use excess returns on bonds with
maturities of 2, 3, 12, 60, and 120 months and different combinations of the
five PCA factors from Adrian et al. (2013). We mark with one star if the
lower bound of the limit sFAR (see Proposition \ref{prop:min boundary})
indicates bounded 95\% confidence sets in every direction, and mark with
double daggers if the associated 95\% confidence sets of the time-varying
risk premia parameters of one or more factor are unbounded. \bigskip }

\subsection{Power of the sFAR test}

To illlustrate the power of the sFAR test, we compute power curves for two
settings calibrated to the data discussed previously. Figure 6 therefore
shows the two-dimensional power curves that result when jointly testing the
two risk premia parameters associated with a single factor in a two factor
model. The left hand side of Figure 6 shows the power curves for a strongly
identified setting while the right hand side does so for a weakly identified
setting. The power curves on the right hand side show that the sFAR test is
not consistent for weakly identified settings since the rejection
frequencies do not converge to one when we move away from the hypothesized
value.\bigskip
 
\begin{figure}[htbp!]
	\includegraphics[width=\columnwidth]{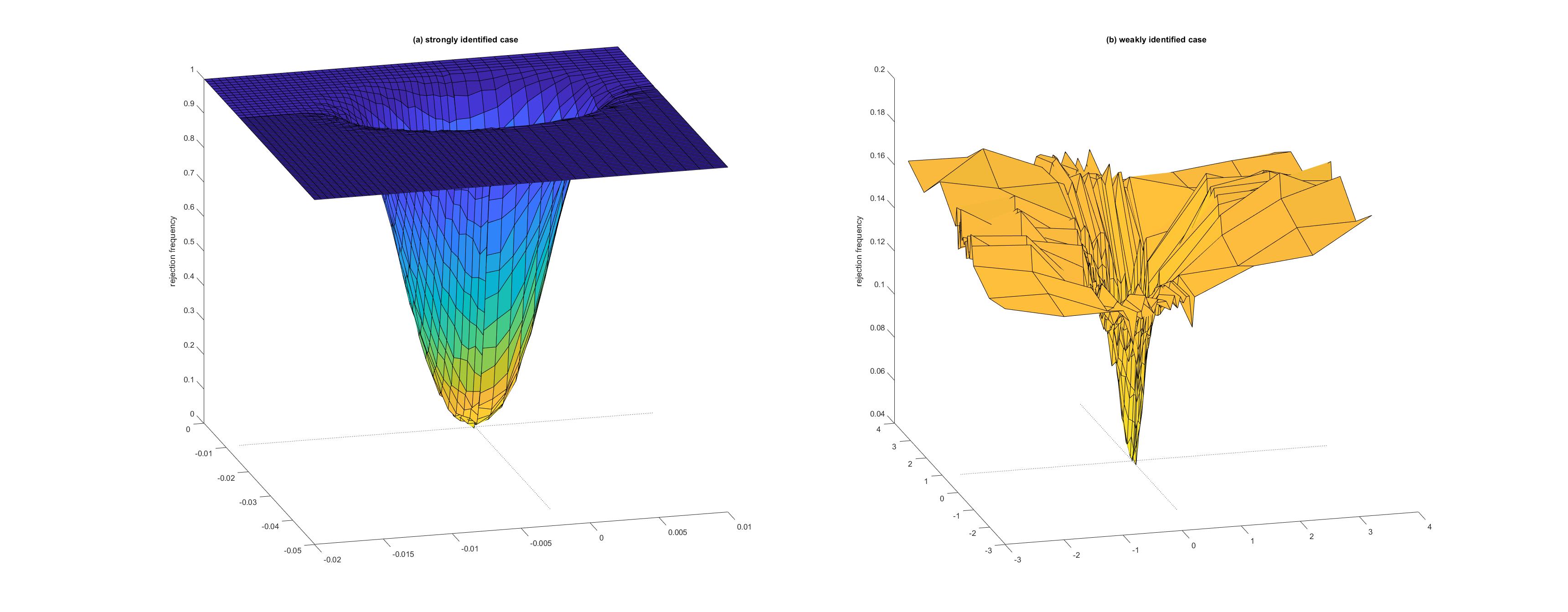}
	\caption*{{\small \textbf{Figure 6}: Simulated power surfaces (rejection frequencies)
			of sFAR tests on $\Lambda _{1}$'s w.r.t. the first factor of a two-factor
			model: the left panel (a) is a strong identified setting calibrated to the
			two-factor model with excess returns of bonds with maturities 3, 60, 120
			months using the first and second PCA factors , while the right panel (b) is
			a weakly identified setting calibrated to the two-factor model with excess
			returns of bonds with maturities 3, 60, 120 months using the third and fifth
			PCA factors. Dotted lines $(\gamma _{i},y,0.05)$ are drawn to mark the
			positions of the calibrated risk premia values, $\gamma _{i}$'s, at 5\%
			level.}}
\end{figure}

\subsection{Identification robust confidence sets for risk premia}

We use the sFAR test to construct confidence sets on the risk premia
resulting from two and three factor models. Figure 7 shows the 90, 95 and
99\% joint confidence sets that result for the two risk premia resulting for
one specific factor in a two factor model using the data from Adrian et al.
(2013) while Figure 8 does so for the three risk premia resulting for one
specific factor in a three factor model. Size correct confidence sets for
the individual risk premia result by projecting the joint confidence sets on
the axes. When using four or more factors, the number of risk premia
concerning one factor is at least four so we have to use projection-based
tests based on the sFAR statistic to be able to visualize these confidence
sets.\ For expository purposes and since Table 4 shows that some of these
confidence sets are unbounded, for example, the one that results when using
all five factors, we therefore refrain from using more than three factors.%
\footnote{%
The rank tests in Table 4 and Figures 7-8 are not identical. For expository
purposes, we choose a smaller number of test assets in Figures 7-8.}

Figure 7 shows all two dimensional confidence sets for the two risk premia
resulting for one factor for all different specifications using the five PCA
factors discussed previously in a two factor model. The two dimensional
confidence sets in Figure 2 vary a lot. Quite a few are empty so all values
of the parameters are rejected at significance levels which exceed 99\%.
This occurs, for example, when using the first and either the second, third
and fourth PCA factor so the model is misspecified. There are also settings
where the confidence set is bounded and well behaved which occurs, for
example, when using the third and fourth PCA factor. Other confidence sets
are unbounded and/or cover the whole two-dimensional space, which occurs,
for instance, when using the third and fifth PCA. For this combination the
90\% confidence set for the two risk premia on the third factor is unbounded
but excludes an area in the parameter space while the 90\% confidence set of
the two risk premia on the fifth factor covers the whole two-dimensional
space. Table 4 also shows that the combination of the third and fifth PCA
factors leads to a smaller rank test statistic than other factor
combinations when using two-factor models which is in line with the
unbounded confidence set in Figure 7 which relates to the third and fifth
PCA factors. This is all indicative of weak identification when using both
the third and fifth factors.

Figure 8 shows the joint confidence sets for the three risk premia
associated with a single factor in a three factor model. The first column of
Figure 8 does for a factor model containing the first three PCAs as factors
while the second column does so using the first, third and fifth PCAs as
factors. Unlike when using two factors, the first column shows that the
confidence sets are no longer empty but bounded which shows that the risk
premia for the first three PCA factors are well identified and that the
model is no longer misspecified. This is confirmed by the p-value of the
rank test on the $\beta $'s. This is in contrast when using the first, third
and fifth PCAs as factors. The confidence sets in the second column of
Figure 8 are namely all unbounded indicating weak identification of the risk
premia which is further reflected by the $p$-value of the rank test on the $%
\beta $'s. Table 4 also shows that the model including the first, third, and
fifth PCA factors has a much smaller rank test statistic than one of the
first three PCA factors within the three-factor model.~\newline
 
\begin{figure}[htbp!]
	\includegraphics[width=\columnwidth]{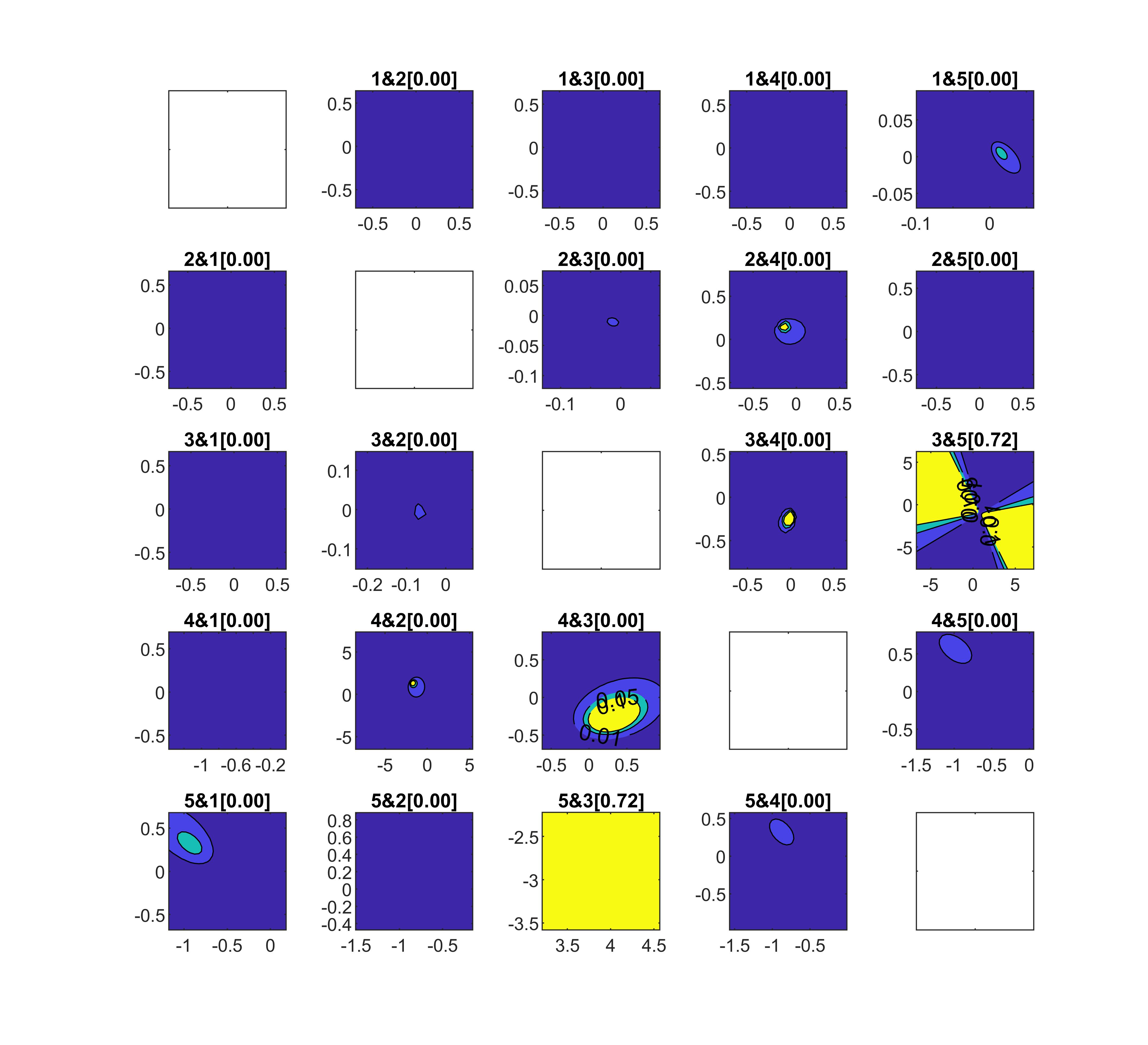}
	\caption*{{\small \textbf{Figure 7}: Joint confidence sets from the sFAR
			test for the two risk premia of the first of the two listed factors in a two
			factor model. (yellow 90\%, light green 95\%, light blue 99\%, dark blue
			area contains the remaining values). Excess returns on bonds with maturities
			3, 60, 120 months are used. [$p$-value] of Kleibergen-Paap rank statistic
			testing $H_0: \text{rank}(\beta)=1$ in square brackets.}}
\end{figure}
 \bigskip
 
\begin{figure}[htbp!]
	\includegraphics[width=\columnwidth]{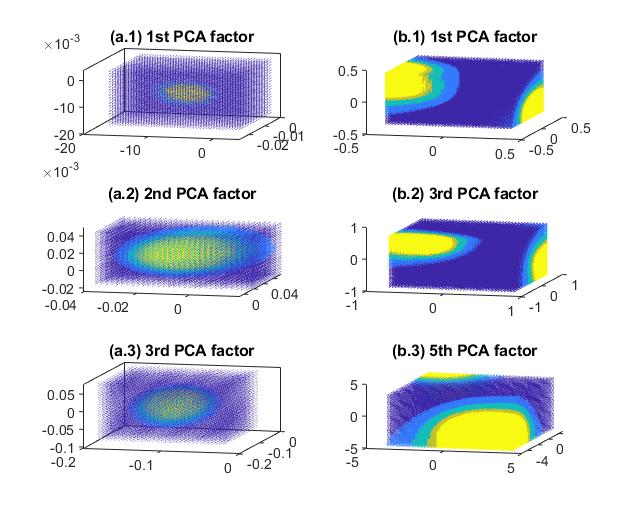}
	\caption*{{\small \textbf{Figure 8}: Joint confidence sets from the sFAR test for
			three-factor models (yellow 90\%, light green 95\%, light blue 99\%, dark
			blue area contains the remaining values) with excess returns on bonds with
			maturities 3, 60, 120 months, (a.i) testing on $\Lambda _{1}$'s w.r.t. the $%
			i $-th factor when using first, second and third PCA factors factors
			Kleibergen-Paap rank statistic testing $H_{0}$ :rank($\beta $)=2 equals
			310.9294 [p-value: 0.0000]); (b.i) testing on $\Lambda _{1}$'s w.r.t. the }$%
		( ${\small $i+2)$-th factor when using first, third and fifth PCA factors
			factors. Kleibergen-Paap rank statistic testing $H_{0}$ :rank($\beta $)=2
			equals 0.0654 [p-value 0.7981]).} }
\end{figure}
\bigskip

Figure 8 shows the three dimensional confidence sets that result from the
sFAR test. It results from partialling out the six risk premia associated
with the other factors. Hence, when we compute these confidence sets using
projection with the identification robust tests, we have to specify a
nine-dimensional grid for the risk premia and compute the identification
robust tests for all values on this nine-dimensional grid. This is, or is
close to be, computationally infeasible. Hence, the sFAR test provides a
computationally tractable manner to conduct identification robust tests on
larger number of parameters.

\section{Conclusion}

We propose identification robust test procedures for testing hypotheses on
risk premia in dynamic affine term structure models. The robust subset
factor Anderson-Rubin test extends the sFAR test from the linear asset
pricing model to allow for tests on multiple risk premia and, unlike
projection based testing, provides a computationally tractable manner to
conduct identification robust tests on larger number of parameters. Our
empirical results show that especially in case of multiple factors, weak
identification is pervasive and traditional tests are likely misleading. We
use the empirical settings from the literature on affine term structure
models, see e.g. \nocite{adrian2013pricing} Adrian et al. (2013) and \nocite%
{ang2003no}Ang and Piazzesi (2003)), to illustrate our results and the
importance of using weak identification robust test procedures.\newpage

\bibliographystyle{ier}
\bibliography{bibtexrefs}

\newpage \appendix
\renewcommand{\theequation}{\thesection.\arabic{equation}}

\section{Appendix: Proofs.}

\noindent

\subsection{Proof of Proposition \protect\ref{prop:KPS covariance}.}

Suppose that there exists a consistent estimator for $S$, $\hat{S}$, a
by-product of the KPS test is the KPS factorization for $\hat{S}$ (\nocite%
{guggenberger2022test}Guggenberger et al.(2022)). We briefly discuss how it
operates. For a matrix $A\in \mathbb{R}^{kp\times kp}$ with a block
structure 
\begin{equation*}
R:=\left( 
\begin{array}{ccc}
R_{11} & \cdots & R_{1p} \\ 
\vdots & \ddots & \vdots \\ 
R_{p1} & \cdots & R_{pp}%
\end{array}%
\right)
\end{equation*}%
where $R_{jl}\in \mathbb{R}^{k\times k},j,l=1,\ldots ,p$, define 
\begin{equation*}
\mathcal{R}(A):=\left( 
\begin{array}{c}
\displaystyle A_{1} \\ 
\vdots \\ 
A_{p}%
\end{array}%
\right) \in \mathbb{R}^{p^{2}\times k^{2}},
\end{equation*}%
with 
\begin{equation*}
A_{j}:=\left( 
\begin{array}{c}
\displaystyle\text{vec}\left( A_{1j}\right) ^{\prime } \\ 
\vdots \\ 
\text{vec}\left( A_{pj}\right) ^{\prime }%
\end{array}%
\right) \in \mathbb{R}^{p\times k^{2}}
\end{equation*}%
for $j=1,\ldots ,p$. Consider a singular value decomposition (SVD) of $%
\mathcal{R}(\hat{S})$: 
\begin{equation*}
\mathcal{R}(\hat{S})=\hat{L}\hat{\Sigma}\hat{N}^{\prime }
\end{equation*}%
where $p=2K,$ $k=N$, $\hat{\Sigma}:=\text{diag}\left( \hat{\sigma}_{1}\ldots 
\hat{\sigma}_{\min \left( p^{2},k^{2}\right) }\right) $ denotes a $%
p^{2}\times k^{2}$ dimensional diagonal matrix with the singular values $%
\hat{\sigma}_{j}$ $\left( j=1,\ldots ,\min \left( p^{2},k^{2}\right) \right) 
$ non-increasingly ordered on the main diagonal, and $\hat{L}\in $ $\mathbb{R%
}^{p^{2}\times p^{2}},$ $\hat{N}\in \mathbb{R}^{k^{2}\times k^{2}}$
orthonormal matrices. Decompose 
\begin{equation*}
\begin{array}{c}
\hat{L}:=\left( 
\begin{array}{ll}
\displaystyle\hat{L}_{11} & \hat{L}_{12} \\ 
\hat{L}_{21} & \hat{L}_{22}%
\end{array}%
\right) =\left( \hat{L}_{1}\text{ }\vdots \text{ }\hat{L}_{2}\right) ,\hat{%
\Sigma}:=\left( 
\begin{array}{cc}
\hat{\sigma}_{1} & 0 \\ 
0 & \hat{\Sigma}_{2}%
\end{array}%
\right) , \\ 
\hat{N}:=\left( 
\begin{array}{ll}
\displaystyle\hat{N}_{11} & \hat{N}_{12} \\ 
\hat{N}_{21} & \hat{N}_{22}%
\end{array}%
\right) =\left( \hat{N}_{1}\text{ }\vdots \text{ }\hat{N}_{2}\right)%
\end{array}%
\end{equation*}%
with $\hat{L}_{11}:1\times 1,$ $\hat{L}_{12}:1\times \left( p^{2}-1\right) ,$
$\hat{L}_{21}:\left( p^{2}-1\right) \times 1,$ $\hat{L}_{22}:\left(
p^{2}-1\right) \times \left( p^{2}-1\right) ,$ $\hat{\sigma}_{1}:1\times 1$, 
$\hat{\Sigma}_{2}:\left( p^{2}-1\right) \times \left( k^{2}-1\right) ,$ $%
\hat{N}_{11}:1\times 1,$ $\hat{N}_{12}:1\times \left( k^{2}-1\right) ,$ $%
\hat{N}_{21}:\left( k^{2}-1\right) \times 1,$ $\hat{N}_{22}:\left(
k^{2}-1\right) \times \left( k^{2}-1\right) $ dimensional matrices. The
consistency follows from the consistency of $\hat{S}$ and Theorem 1 from 
\nocite{guggenberger2022test}Guggenberger et al. (2022). \newline
$\square $

\subsection{Intermediate {results for the Proof of Proposition \protect\ref%
{prop:sFAR} .}}

\label{Intermediate}

We show some intermediate results needed for the proof of Proposition \ref%
{prop:sFAR}. To prove that critical values from a $\chi ^{2}$-distribution
serve as a proper upper bound, we extend the methodology employed in \nocite%
{guggenberger2019more}Guggenberger et al. (2019) by linking the test
statistic to the smaller roots of an eigenvalue problem and thereafter
analyzing the distributions of the eigenvalues.

We first provide Proposition \ref{prop:additional} below which shows that
for the upper bound, we only need to consider the distribution of the $K$
smallest roots of $\left\vert \mu I_{2K-1}-\Xi ^{\prime }\Xi \right\vert =0$
, where $\Xi ^{\prime }\Xi \sim _{a}\mathcal{W}_{2K-1}\left( N,I_{2K-1},%
\mathcal{M}^{\prime }\mathcal{M}\right) $, for the \textquotedblleft
strongly identified\textquotedblright\ setting and $\mathcal{W}_{2K-1}\left(
N,I_{2K-1},\mathcal{M}^{\prime }\mathcal{M}\right) $ indicates a non-central
Wishart distribution of a $(2K-1)\times (2K-1)$ dimensional matrix with $N$
degrees of freedom, scaling matrix $I_{2K-1}$ and non-centrality $\mathcal{M}%
^{\prime }\mathcal{M}$. Proposition \ref{prop:additional}.\ref%
{prop:additional 1} together with its proof show that the sFAR statistic
results from an eigenvalue problem as the summation of its $K$ smallest
eigenvalues. We can thus focus on the joint distribution of these smallest
eigenvalues to obtain the distributional behavior of the sFAR statistic.

Proposition \ref{prop:additional}.\ref{prop:additional 1} shows that the
distribution of the eigenvalues under the null hypothesis depends only on
the nuisance parameters $\mathcal{M}$. Proposition \ref{prop:additional}.\ref%
{prop:additional 3} argues that to derive the upper bound, we need only
focus on the nuisance parameters $\mathcal{M}$ under the \textquotedblleft
strongly identified\textquotedblright\ setting, where the eigenvalues of $%
\mathcal{M}^{\prime }\mathcal{M}$ satisfy the condition that $\kappa
_{1}>\kappa _{2}>\cdots >\kappa _{K-1}>\kappa _{K}=\kappa _{K+1}=\cdots =0$
and the $(K-1)$-th largest eigenvalue, $\kappa _{K-1}$, goes to infinity.
Because the \textquotedblleft strongly identified\textquotedblright\ is
shown to lead to the \textquotedblleft largest
probability\textquotedblright\ of having larger values of the $K$ smallest
eigenvalues.

In the following, we first provide Proposition \ref{prop:additional} and
then continue to discuss the joint distribution of the eigenvalues of the
eigenvalue problem from Proposition \ref{prop:additional}.\ref%
{prop:additional 1} and the associated approximate conditional distribution
of the smallest eigenvalues in Section \ref{secjoint}.

\begin{proposition}
\label{prop:additional}\label{lem:hat kappa small}\label{lem:hat kappa large}
Under Assumptions of Proposition \ref{prop:sFAR},

\begin{enumerate}
\item \label{prop:additional 1} The sFAR$(\lambda _{1}^{0})$ statistic
equals the sum of the K smallest roots of the following polynomial: $%
\left\vert \mu I_{2K-1}-\Xi ^{\prime }\Xi \right\vert =0$, where 
\begin{equation*}
\begin{array}{rl}
\displaystyle\Xi = & \left[ \xi _{u}\text{ }\vdots \text{ }\xi _{\beta _{2}}%
\right] \sim _{a}N\left( \mathcal{M},I_{N(2K-1)}\right) , \\ 
\displaystyle\sqrt{T}\hat{\Sigma}^{-\frac{1}{2}}\hat{u}\hat{\Psi}_{uu}^{-%
\frac{1}{2}}\underset{d}{\rightarrow } & \xi _{u} \\ 
\displaystyle\sqrt{T}\hat{\Sigma}^{-\frac{1}{2}}\left( \hat{\beta}_{2}-\hat{u%
}\hat{\Psi}_{uu}^{-1}\hat{\Psi}_{u\beta _{2}}^{\prime }\right) \hat{\Psi}%
_{\beta _{2}\beta _{2}\cdot u}^{-\frac{1}{2}}\underset{d}{\rightarrow } & 
\xi _{\beta _{2}} \\ 
\displaystyle\mathcal{M}= & \left( \sqrt{T}{\Sigma }^{-\frac{1}{2}}\beta
_{1}(\lambda _{1}-\lambda _{1}^{0})^{\prime }{\Psi }_{uu}^{-\frac{1}{2}}%
\text{ }\vdots \right. \\ 
\displaystyle & \left. \sqrt{T}{\Sigma }^{-\frac{1}{2}}\left( {\beta }%
_{2}-\beta _{1}(\lambda _{1}-\lambda _{1}^{0})^{\prime }{\Psi }_{uu}^{-1}{%
\Psi }_{u\beta _{2}}^{\prime }\right) {\Psi }_{\beta _{2}\beta _{2}\cdot
u}^{-\frac{1}{2}}\right) ,%
\end{array}%
\end{equation*}%
and \textquotedblleft $\sim _{a}$\textquotedblright\ denotes an approximate
distribution in large samples, and ${\Psi }_{uu}$, ${\Psi }_{u\beta _{2}}$, $%
{\Psi }_{\beta _{2}\beta _{2}\cdot u},$ $\hat{\Psi}_{uu}$, $\hat{\Psi}%
_{u\beta _{2}}$, $\hat{\Psi}_{\beta _{2}\beta _{2}\cdot u}$ are specified in
the proof.

\item \label{prop:additional 2} Let $\hat{\kappa}_{1}\geq \hat{\kappa}%
_{2}\geq \cdots \geq \hat{\kappa}_{2K-1}$ be the eigenvalues of $\Xi
^{\prime }\Xi $, and ${\kappa }_{1}\geq {\ \ \kappa }_{2}\geq \cdots \geq {%
\kappa }_{2K-1}$ be the eigenvalues of $\mathcal{M}^{\prime }\mathcal{M}$.
Under the hypothesis $H_{0}:\lambda _{1}=\lambda _{1}^{0}$, $\hat{\kappa}%
_{j}=O_{p}(1),$ $j=K,\cdots ,2K-1$; and when $\kappa _{K-1}\rightarrow
\infty $, we have $\hat{\kappa}_{i}\rightarrow _{p}\infty ,i=1,\cdots ,K-1.$

\item \label{prop:additional 3} $\{\mathcal{M}_{n},n\geq 1\}$ is a sequence
of the parameter matrix $\mathcal{M}$, and let $\mathbb{M}$ denote the
collection of all such sequences. $\mathbb{M}_{\infty }$ is a subset of $%
\mathbb{M}$ that is a collection of $\{\mathcal{M}_{n},n\geq 1:\kappa _{1}(%
\mathcal{M}_{n}^{\prime }\mathcal{M}_{n})>\cdots >\kappa _{K-1}(\mathcal{M}%
_{n}^{\prime }\mathcal{M}_{n})>\kappa _{K}(\mathcal{M}_{n}^{\prime }\mathcal{%
M}_{n})=\kappa _{K+1}(\mathcal{M}_{n}^{\prime }\mathcal{M}_{n})=\cdots
=0,\kappa _{K-1}(\mathcal{M}_{n}^{\prime }\mathcal{M}_{n})\rightarrow \infty
\}$. Let $\tilde{\Xi}_{\mathcal{M}}=Z+\mathcal{M}$ with $Z\sim N(0,I)$ ,then
under the hypothesis $H_{0}:\lambda _{1}=\lambda _{1}^{0}$, for any $\{%
\mathcal{M}_{n},n\geq 1\}\in \mathbb{M}$, we can find a parameter sequence $%
\{\tilde{\mathcal{M}}_{h},h\geq 1\}\in \mathbb{M}_{\infty }$ such that 
\begin{equation*}
\begin{array}{cl}
\displaystyle\limsup\limits_{n\rightarrow \infty }\hat{\kappa}_{j}(\tilde{\Xi%
}_{\mathcal{M}_{n}}^{\prime }\tilde{\Xi}_{\mathcal{M}_{n}})\leq
\liminf\limits_{h\rightarrow \infty }\hat{\kappa}_{j}(\tilde{\Xi}_{\tilde{%
\mathcal{M}}_{h}}^{\prime }\tilde{\Xi}_{\tilde{\mathcal{M}}_{h}}),j\geq K. & 
\end{array}%
\end{equation*}
\end{enumerate}
\end{proposition}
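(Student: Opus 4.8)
My plan is to treat the three parts in sequence, since Part~\ref{prop:additional 1} sets up the eigenvalue representation that Parts~\ref{prop:additional 2}--\ref{prop:additional 3} then exploit, following the bounding strategy of Guggenberger et al. (2019), with Part~\ref{prop:additional 3} being the real obstacle.

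For Part~\ref{prop:additional 1} I would start from the characteristic polynomial of Proposition~\ref{prop:sFAR}, a generalized eigenvalue problem for the pencil $(A'\hat\Phi'\hat\Sigma^{-1}\hat\Phi A,\ A'\hat\Psi A)$, where $A$ is the $2K\times(2K-1)$ matrix built from $\lambda_1^0$. The first step is purely algebraic: computing $\hat\Phi A$ with $\hat\Phi=(\hat d\ \vdots\ \hat\beta_1\ \vdots\ \hat\beta_2)$ yields the two column blocks $\hat d-\hat\beta_1\lambda_1^{0\prime}$ and $\hat\beta_2$. Using that the generalized eigenvalues are invariant under any fixed right-multiplication $\hat\Phi A\mapsto\hat\Phi A\,G$ with the matching map $A'\hat\Psi A\mapsto G'A'\hat\Psi A\,G$, I would subtract the common $\hat\beta_2\Lambda_2$ column (with the true, fixed $\Lambda_2$, legitimate since it leaves the statistic's value unchanged) to define $\hat u:=\hat d-\hat\beta_1\lambda_1^{0\prime}-\hat\beta_2\Lambda_2$, whose population mean is exactly $\beta_1(\lambda_1-\lambda_1^0)'$. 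Next I would reduce the pencil to standard form by whitening: block-factor the transformed $A'\hat\Psi A$ by its block-Cholesky decomposition into the blocks $\hat\Psi_{uu}$, $\hat\Psi_{u\beta_2}$ and the Schur complement $\hat\Psi_{\beta_2\beta_2\cdot u}$, and set $\Xi:=\hat\Sigma^{-1/2}(\hat u\ \vdots\ \hat\beta_2)(A'\hat\Psi A)^{-1/2}$. The block-triangular square root produces precisely the stated columns $\xi_u$ and $\xi_{\beta_2}$ (the same orthogonalization applied to the mean yields the two blocks of $\mathcal{M}$), and the eigenvalues of $\Xi'\Xi$ equal the generalized eigenvalues, so $\mathrm{sFAR}(\lambda_1^0)=T$ times the sum of the $K$ smallest. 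The distributional claim $\Xi\sim_a N(\mathcal{M},I_{N(2K-1)})$ then follows from Assumption~\ref{assum:KPS}: whitening by $\hat\Sigma^{-1/2}$ on the left cancels the Kronecker factor $\Sigma$, and the right scaling by $(A'\hat\Psi A)^{-1/2}$ turns the $\Psi$-covariance of the columns into the identity. The main obstacle here is bookkeeping: verifying that the $O_p(T^{-1/2})$ error in $\hat\Sigma,\hat\Psi$ is negligible so the ``$\sim_a$'' limit is clean, and that the Kronecker structure collapses the covariance to $I_{N(2K-1)}$ exactly.

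For Part~\ref{prop:additional 2} I would work with $\Xi=Z+\mathcal{M}$, $Z\sim N(0,I)$, and the Courant--Fischer min--max characterization, using that $N$ and $K$ are fixed so every $Z$-functional is $O_p(1)$. Under $H_0$ the first column block of $\mathcal{M}$ vanishes, so $\mathrm{rank}(\mathcal{M})\le K-1$ and $\mathcal{M}$ has a null space of dimension $K$; taking this null space as the test subspace in the min--max for $\hat\kappa_K$ gives $\hat\kappa_K\le\lambda_{\max}(Z'Z\text{ restricted})=O_p(1)$, hence $\hat\kappa_K,\dots,\hat\kappa_{2K-1}=O_p(1)$. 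For divergence I would lower-bound $\hat\kappa_{K-1}$ over the span $S_0$ of the top $K-1$ eigenvectors of $\mathcal{M}'\mathcal{M}$: for unit $v\in S_0$, $\|\Xi v\|\ge\|\mathcal{M}v\|-\|Zv\|\ge\sqrt{\kappa_{K-1}}-\|ZP_{S_0}\|_{op}$, and since $S_0$ is a fixed $(K-1)$-dimensional subspace $\|ZP_{S_0}\|_{op}=O_p(1)$, so $\hat\kappa_{K-1}\to_p\infty$ whenever $\kappa_{K-1}\to\infty$, and a fortiori $\hat\kappa_1,\dots,\hat\kappa_{K-1}\to_p\infty$.

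For Part~\ref{prop:additional 3}, the crux, I would read the inequality as a pathwise statement in the common $Z$: for fixed $Z$, each $\hat\kappa_j(\tilde\Xi_\mathcal{M}'\tilde\Xi_\mathcal{M})$ is a function of the noncentrality, and I want the strongly identified, strictly separated, diverging configuration to dominate the $K$ smallest eigenvalues. The plan is first to pass to a subsequence so that each top eigenvalue $\kappa_i(\mathcal{M}_n'\mathcal{M}_n)$ converges in $[0,\infty]$ and the corresponding singular vectors converge, then to construct $\tilde{\mathcal{M}}_h\in\mathbb{M}_\infty$ sharing these limiting directions but with all $K-1$ signal singular values driven to $+\infty$ at strictly separated rates. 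The key lemma is a monotonicity/limit result: as a signal noncentrality in a fixed direction $u_i\in\reals^N$ tends to infinity, the associated largest eigenvalue splits off to $+\infty$ while the remaining eigenvalues converge to those of the problem with $u_i$ projected out of $\reals^N$, and by Cauchy interlacing the $K$ smallest eigenvalues of the projected problem are no smaller than those of the finite-noncentrality problem. Iterating over the $K-1$ signal directions shows that the fully diverging, separated configuration yields the supremal limit of each of the $K$ smallest eigenvalues, giving the displayed $\limsup_n\le\liminf_h$. I expect this monotonicity lemma to be the main obstacle: one must control the cross-interaction of several simultaneously diverging noncentralities (which is precisely why $\mathbb{M}_\infty$ demands the strict ordering $\kappa_1>\cdots>\kappa_{K-1}$), handle eigenvalue continuity when limiting singular values coincide or are infinite, and check that the interlacing points in the correct (upward) direction for the smallest eigenvalues. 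Once established, pathwise domination in the shared $Z$ transfers at once to stochastic domination of $\sum_{j\ge K}\hat\kappa_j$, which is what downstream delivers the $\chi^2(K(N-K+1))$ upper bound.
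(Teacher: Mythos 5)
Your Parts 2 and 3 track the paper's own argument closely: the Courant--Fischer min--max with test subspaces inside the null space of $\mathcal{M}$ (at least $K$-dimensional under $H_{0}$) for the $O_{p}(1)$ bound, a lower bound over the span of the leading eigenvectors for divergence, and, for Part 3, passage to a subsequence along which eigenvalues and eigenvector directions converge followed by construction of a dominating sequence in $\mathbb{M}_{\infty }$ sharing those directions with all $K-1$ signal eigenvalues sent to infinity. Your ``project out the diverging direction plus interlacing'' lemma is the same mechanism the paper implements by restricting the min--max to subspaces orthogonal to $\mathrm{span}(Q_{i},i<K)$ (which can only raise the value) and showing the restriction is asymptotically attained for the constructed sequence; the one device you omit is that the paper couples the divergence rate of $\kappa _{h,1}$ to the convergence rate of the eigenvectors, $\kappa _{h,1}=o\bigl(\bigl(\sum_{i}\lVert Q_{n_{h},i}-Q_{i}\rVert _{\max }+h^{-1}\bigr)^{-1}\bigr)$, and it is exactly this coupling that controls the cross-interaction of simultaneously diverging noncentralities you identify as the obstacle.

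The genuine gap is in Part 1. You ``start from the characteristic polynomial of Proposition \ref{prop:sFAR},'' but the proof of that proposition is itself deferred to the present result, so the step you take as given is precisely what must be established: that $\mathrm{sFAR}(\lambda _{1}^{0})=\min_{\Lambda _{2}}\mathrm{FAR}(\lambda _{1}^{0},\Lambda _{2})$ --- a minimization over the $(K-1)\times K$ nuisance matrix $\Lambda _{2}$ --- equals $T$ times the sum of the $K$ smallest generalized eigenvalues of the pencil. The paper gets there by writing $\mathrm{FAR}(\lambda _{1}^{0},\Lambda _{2})$ as a trace in $q=\pi _{\Lambda _{2}}=(I_{K},-\Lambda _{2}^{\prime })^{\prime }$, invoking the trace-minimization theorem of Sameh and Wisniewski (1982) over all normalized full-column-rank $(2K-1)\times K$ matrices $q$, and then --- the delicate half --- showing that the unconstrained minimizer $\hat{q}$ is realized by a feasible $\Lambda _{2}$: when its leading $K\times K$ block $\hat{q}_{1}$ is singular, one perturbs $\hat{q}$ by $\tilde{\epsilon}\,\pi _{\Lambda _{2}^{0}}$ with $\tilde{\epsilon}$ tied to the singular values of $\hat{q}_{1}$ (Weyl's perturbation inequality for singular values), so that $\widetilde{\mathrm{FAR}}$ moves only by $O_{p}(\delta ^{2})$ and hence $\inf_{\Lambda _{2}}\mathrm{FAR}=\widetilde{\mathrm{FAR}}(\lambda _{1}^{0},\hat{q})$ with probability approaching one. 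Without this, your Part 1 is circular; the algebra you do supply (subtracting $\hat{\beta}_{2}\Lambda _{2}$ to form $\hat{u}$ and block-Cholesky whitening to reach $\lvert \mu I_{2K-1}-\Xi ^{\prime }\Xi \rvert =0$, with the KPS structure collapsing the covariance to the identity) matches the paper's but covers only the easy half of the claim.
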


\paragraph{{Proof of Proposition \protect\ref{lem:hat kappa large}.}}

\paragraph{Proof of \protect\ref{lem:hat kappa large}.\protect\ref%
{prop:additional 1}.}

The FAR statistic reads as follows 
\begin{equation}
\begin{array}{cl}
\displaystyle\text{FAR(}{\Lambda }_{1}^{0}\mathbf{)=} & {T\times f}_{T}{%
(\Lambda }_{1}^{0}{,X)}^{\prime }{\hat{V}}_{ff}{(\Lambda }_{1}^{0}{)}^{-1}{f}%
_{T}{(\Lambda }_{1}^{0}{,X)},%
\end{array}
\label{1}
\end{equation}%
and the sample moments $f_{T}(\Lambda _{1},X)$ of the model under
consideration for given $\Lambda _{1}$ are 
\begin{equation*}
\begin{array}{c}
\displaystyle f_{T}(\Lambda _{1},X)=\frac{1}{T}\sum_{t=1}^{T}(\bar{X}%
_{t-1}\otimes (\bar{R}_{t}-\hat{\beta}\hat{V}_{t}))-\left( \hat{Q}%
_{XX}\otimes \hat{\beta}\right) \text{vec(}\Lambda _{1}).%
\end{array}%
\end{equation*}%
Note that for $\hat{d}$ the least squares estimator resulting from the 2-nd
step of the three step procedure, 
\begin{equation*}
\begin{array}{c}
\displaystyle\left( \hat{Q}_{XX}\otimes I_{N}\right) ^{-1}\frac{1}{T}%
\sum_{t=1}^{T}(\bar{X}_{t-1}\otimes (\bar{R}_{t}-\hat{\beta}\hat{V}_{t}))=%
\text{vec}(\hat{d}),%
\end{array}%
\end{equation*}%
which implies that 
\begin{equation}
\begin{array}{c}
\displaystyle f_{T}(\Lambda _{1},X)=\left( \hat{Q}_{XX}\otimes I_{N}\right) 
\text{vec}\left( \hat{\Phi}A(\lambda _{1})\pi _{\Lambda _{2}}\right) ,%
\end{array}
\label{2}
\end{equation}%
with%
\begin{equation*}
\begin{array}{c}
\hat{\Phi}=\frac{1}{T}\sum_{t=1}^{T}R_{t}\left( 
\begin{array}{c}
\bar{X}_{t} \\ 
\hat{v}_{t+1}%
\end{array}%
\right) ^{\prime }\left[ \frac{1}{T}\sum_{t=1}^{T}\left( 
\begin{array}{c}
\bar{X}_{t} \\ 
\hat{v}_{t+1}%
\end{array}%
\right) \left( 
\begin{array}{c}
\bar{X}_{t} \\ 
\hat{v}_{t+1}%
\end{array}%
\right) ^{\prime }\right] ^{-1}=\left( \hat{d}\ \text{{}}\vdots \ \text{{}}%
\hat{\beta}\right) \\ 
\displaystyle A(\lambda _{1})=\left( 
\begin{array}{ccc}
I_{K} & -\lambda _{1} & 0 \\ 
0 & 0 & I_{K-1}%
\end{array}%
\right) ^{\prime },\text{ }\pi _{\Lambda _{2}}=\left( 
\begin{array}{c}
I_{K} \\ 
-\Lambda _{2}%
\end{array}%
\right) .%
\end{array}%
\end{equation*}%
Provided that Assumption \ref{assum:KPS} holds, the above result implies
that we can choose 
\begin{equation}
\begin{array}{c}
\hat{V}_{ff}{(\Lambda }_{1}^{0})=\left( \hat{Q}_{XX}\pi _{\Lambda
_{2}^{0}}^{\prime }A(\lambda _{1}^{0})^{\prime }\otimes I_{N}\right) \hat{V}%
_{\hat{\Phi}}\left( A(\lambda _{1}^{0})\pi _{\Lambda _{2}^{0}}\hat{Q}%
_{XX}\otimes I_{N}\right) \\ 
=\left( \hat{Q}_{XX}\otimes I_{N}\right) \left( \pi _{\Lambda
_{2}^{0}}^{\prime }A(\lambda _{1}^{0})^{\prime }\hat{\Psi}A(\lambda
_{1}^{0})\pi _{\Lambda _{2}^{0}}\otimes \hat{\Sigma}\right) \left( \hat{Q}%
_{XX}\otimes I_{N}\right) .%
\end{array}
\label{3}
\end{equation}%
Substituting (\ref{2}) and (\ref{3}) into (\ref{1}) gives 
\begin{equation*}
\begin{array}{c}
\displaystyle\text{FAR(}\lambda _{1}^{0},\Lambda _{2}^{0}) \\ 
=T\text{vec}\left( \hat{\Phi}A(\lambda _{1})\pi _{\Lambda _{2}}\right)
^{\prime }\left( \pi _{\Lambda _{2}^{0}}^{\prime }A(\lambda
_{1}^{0})^{\prime }\hat{\Psi}A(\lambda _{1}^{0})\pi _{\Lambda
_{2}^{0}}\otimes \hat{\Sigma}\right) ^{-1}\text{vec}\left( \hat{\Phi}%
A(\lambda _{1})\pi _{\Lambda _{2}}\right) \\ 
=T\text{vec}\left( \hat{\Phi}A(\lambda _{1})\pi _{\Lambda _{2}}\right)
^{\prime }\text{vec}\left( \hat{\Sigma}^{-1}\hat{\Phi}A(\lambda _{1})\pi
_{\Lambda _{2}}\left( \pi _{\Lambda _{2}^{0}}^{\prime }A(\lambda
_{1}^{0})^{\prime }\hat{\Psi}A(\lambda _{1}^{0})\pi _{\Lambda
_{2}^{0}}\right) ^{-1}\right) ,%
\end{array}%
\end{equation*}%
which via the trace operator can be rewritten as 
\begin{equation*}
\begin{array}{l}
\displaystyle\text{FAR(}\lambda _{1}^{0},\Lambda _{2}^{0})=T\text{tr}\left(
\pi _{\Lambda _{2}^{0}}^{\prime }A(\lambda _{1}^{0})^{\prime }\hat{\Phi}%
^{\prime }\hat{\Sigma}^{-1}\hat{\Phi}A(\lambda _{1}^{0})\pi _{\Lambda
_{2}^{0}}\left[ \pi _{\Lambda _{2}^{0}}^{\prime }A(\lambda _{1}^{0})^{\prime
}\hat{\Psi}A(\lambda _{1}^{0})\pi _{\Lambda _{2}^{0}}\right] ^{-1}\right) .%
\end{array}%
\end{equation*}%
Denote 
\begin{equation*}
\begin{array}{l}
\displaystyle\widetilde{\text{FAR}}(\lambda _{1}^{0},q)=T\text{tr}\left(
q^{\prime }A(\lambda _{1}^{0})^{\prime }\hat{\Phi}^{\prime }\hat{\Sigma}^{-1}%
\hat{\Phi}A(\lambda _{1}^{0})q\left[ q^{\prime }A(\lambda _{1}^{0})^{\prime }%
\hat{\Psi}A(\lambda _{1}^{0})q\right] ^{-1}\right) ,%
\end{array}%
\end{equation*}%
where $q$ is a $(2K-1)\times K$ matrix of full column rank $K$. Let $q^{\ast
}$ denote the set of $(2K-1)\times K$ matrices of full column rank $K$, and $%
q^{\ast \ast }$ denote the set of all matrices $q$ for which $q^{\prime
}A(\Lambda _{0})^{\prime }\hat{\Psi}A(\Lambda _{0})q=I_{K}$, then it is a
straightforward result such that 
\begin{equation*}
\begin{array}{c}
\displaystyle\min_{q\in q^{\ast }}\widetilde{\text{FAR}}(\lambda
_{1}^{0},q)=\min_{q\in q^{\ast \ast }}\displaystyle\widetilde{\text{FAR}}%
(\lambda _{1}^{0},q)\leq \inf_{\Lambda _{2}^{0}}\text{FAR(}\lambda
_{1}^{0},\Lambda _{2}^{0}).%
\end{array}%
\end{equation*}%
Denote $\hat{q}=\arg \min_{q\in q^{\ast \ast }}\displaystyle\widetilde{\text{%
FAR}}(\lambda _{1}^{0},q)$, and we prove \ref{lem:hat kappa large}.\ref%
{prop:additional 1} by linking $\widetilde{\text{FAR}}(\lambda _{1}^{0},\hat{%
q})$ with an eigenvalue problem and showing that under the null hypothesis $%
\lambda _{1}=\lambda _{1}^{0}$, with probability approaching one we have 
\begin{equation}
\begin{array}{c}
\inf_{\Lambda _{2}^{0}}\text{FAR(}\lambda _{1}^{0},\Lambda _{2}^{0})=%
\widetilde{\text{FAR}}(\lambda _{1}^{0},\hat{q}).%
\end{array}
\label{eq sfar1}
\end{equation}

Theorem 1.2 from \nocite{sameh1982trace}Sameh and Wisniewski (1982) implies
that $\widetilde{\text{FAR}}(\lambda _{1}^{0},\hat{q})$ equals $T$ times the
sum of the $K$ smallest eigenvalues of the eigenvalue problem: 
\begin{equation*}
\begin{array}{c}
A(\Lambda _{0})^{\prime }\hat{\Phi}^{\prime }\hat{\Sigma}^{-1}\hat{\Phi}%
A(\Lambda _{0})x=\kappa \hat{\Psi}(\lambda _{1}^{0})x,%
\end{array}%
\end{equation*}%
where $\hat{\Psi}(\lambda _{1}^{0})=A(\Lambda _{0})^{\prime }\hat{\Psi}%
A(\Lambda _{0})$, $x$ is a $(2K-1)$-eigenvector, $\kappa $ is a scalar
eigenvalue. $T$ times the sum of the $K$ smallest eigenvalues, $\kappa $'s,
of the above eigenvalue problem equals $T$ times the sum of the $K$ smallest
roots of the characteristic polynomial: 
\begin{equation}
\begin{array}{lc}
\displaystyle\left\vert \mu \hat{\Psi}(\lambda _{1}^{0})-A(\Lambda
_{0})^{\prime }\hat{\Phi}^{\prime }\hat{\Sigma}^{-1}\hat{\Phi}A(\Lambda
_{0})\right\vert =0. & 
\end{array}
\label{eigen prob 1}
\end{equation}%
Pre/post-multiplying $\displaystyle\left\vert \mu \hat{\Psi}(\lambda
_{1}^{0})-TA(\Lambda _{0})^{\prime }\hat{\Phi}^{\prime }\hat{\Sigma}^{-1}%
\hat{\Phi}A(\Lambda _{0})\right\vert =0$ by $\displaystyle\left( 
\begin{array}{cc}
I_{K} & -\Lambda _{2}^{\prime } \\ 
0 & I_{K-1}%
\end{array}%
\right) $ gives 
\begin{equation}
\begin{array}{lc}
\displaystyle\left\vert \mu \hat{\Psi}(\lambda _{1}^{0},\Lambda
_{2})-T\left( \hat{u},\hat{\beta}_{2}\right) ^{\prime }\hat{\Sigma}%
^{-1}\left( \hat{u},\hat{\beta}_{2}\right) \right\vert & =0%
\end{array}%
,  \label{eq:A2 1}
\end{equation}%
where 
\begin{equation*}
\begin{array}{cc}
\displaystyle\hat{\Psi}(\lambda _{1}^{0},\Lambda _{2})=\left( 
\begin{array}{cc}
I_{K} & -\Lambda _{2}^{\prime } \\ 
0 & I_{K-1}%
\end{array}%
\right) ^{\prime }\hat{\Psi}(\lambda _{1}^{0})\left( 
\begin{array}{cc}
I_{K} & -\Lambda _{2}^{\prime } \\ 
0 & I_{K-1}%
\end{array}%
\right) =\left( 
\begin{array}{cc}
\hat{\Psi}_{u} & \hat{\Psi}_{u\beta _{2}} \\ 
\hat{\Psi}_{u\beta _{2}}^{\prime } & \hat{\Psi}_{\beta _{2}}%
\end{array}%
\right) , & 
\end{array}%
\end{equation*}%
and $\hat{u}=\hat{d}-\hat{\beta}_{1}\lambda _{1}^{0\prime }-\hat{\beta}%
_{2}\Lambda _{2}=\epsilon +\hat{\beta}_{1}\left( \lambda _{1}-\lambda
_{1}^{0}\right) ^{\prime }.$ Note that $C\hat{\Psi}(\lambda _{1}^{0},\Lambda
_{2})C^{\prime }=I_{2K-1}$ holds with 
\begin{equation*}
\begin{array}{lc}
\displaystyle C=\left( 
\begin{array}{cc}
\hat{\Psi}_{uu}^{-\frac{1}{2}} & 0 \\ 
-\hat{\Psi}_{\beta _{2}\beta _{2}\cdot u}^{-\frac{1}{2}}\hat{\Psi}_{u\beta
_{2}}^{\prime }\hat{\Psi}_{uu}^{-1} & \hat{\Psi}_{\beta _{2}\beta _{2}\cdot
u}^{-\frac{1}{2}}%
\end{array}%
\right) ,\displaystyle\hat{\Psi}_{\beta _{2}\beta _{2}\cdot u}=\hat{\Psi}%
_{\beta _{2}\beta _{2}}-\hat{\Psi}_{u\beta _{2}}^{\prime }\hat{\Psi}%
_{uu}^{-1}\hat{\Psi}_{u\beta _{2}}. & 
\end{array}%
\end{equation*}%
Pre/post-multiplying equation (\ref{eq:A2 1}) by $|C|$ yields $\left\vert
\mu I_{2K-1}-\Xi ^{\prime }\Xi \right\vert =0$. Therefore, the eigenvalue
problem (\ref{eigen prob 1}) is equivalent to the eigenvalue problem $%
\left\vert \mu I_{2K-1}-\Xi ^{\prime }\Xi \right\vert =0$, and $\widetilde{%
\text{FAR}}(\lambda _{1}^{0},\hat{q})$ equals the sum of the K smallest
eigenvalues of $\Xi ^{\prime }\Xi $.\newline

Next, we complete the proof by showing that (\ref{eq sfar1}) holds. Denote $%
\hat{q}=\left( \hat{q}_{1}^{\prime },\hat{q}_{2}^{\prime }\right) ^{\prime }$
with $\hat{q}_{1}$ a $K\times K$ submatrix of $\hat{q}$, and denote $\sigma
_{i}(\hat{q}_{1}),i=1,\cdots K$ the singular values of $\hat{q}_{1}$ in
descending order.

If $\hat{q}_{1}$ is of full rank, then by construction 
\begin{equation*}
\begin{array}{c}
\text{FAR(}\lambda _{1}^{0}, -\hat{q}_2) \\ 
=T \text{tr}\left(\pi_{-\hat{q}_2}^\prime A(\Lambda _{0})^{\prime }\hat{\Phi}%
^{\prime }\hat{\Sigma}^{-1}\hat{\Phi} A(\Lambda _{0})\pi_{-\hat{q}_2} \left[%
\pi_{-\hat{q}_2}^{\prime} A(\Lambda _{0})^{\prime }\hat{\Psi}A(\Lambda
_{0})\pi_{-\hat{q}_2} \right] ^{-1}\right) \\ 
=T \text{tr}\left(A_{\hat{q}_1}^\prime\pi_{-\hat{q}_2}^\prime A(\Lambda
_{0})^{\prime }\hat{\Phi}^{\prime }\hat{\Sigma}^{-1}\hat{\Phi} A(\Lambda
_{0})A_{\hat{q}_1}\pi_{-\hat{q}_2} \left[A_{\hat{q}_1}^\prime \pi_{-\hat{q}%
_2}^{\prime} A(\Lambda _{0})^{\prime }\hat{\Psi}A(\Lambda _{0})\pi_{-\hat{q}%
_2} A_{\hat{q}_1} \right] ^{-1}\right) \\ 
= \widetilde{\text{FAR}}(\lambda _{1}^{0}, \hat{q}),%
\end{array}%
\end{equation*}
where $A_{\hat{q}_1}=\left(%
\begin{matrix}
\hat{q}_1 & 0 \\ 
0 & I_{K-1}%
\end{matrix}
\right)$. Hence, $\inf_{ \Lambda _{2}^{0}} \text{FAR(}\lambda _{1}^{0},
\Lambda _{2}^{0} ) = \widetilde{\text{FAR}}(\lambda _{1}^{0}, \hat{q})$ when 
$\hat{q}_{1}$ is of full rank.

If $\hat{q}_{1}$ is not of full rank, consider a small-$\tilde{\epsilon}$
perturbation of $\hat{q}$, $\hat{q}_{\tilde{\epsilon}}=\hat{q}+\tilde{%
\epsilon}\pi _{\Lambda _{2}^{0}}=\left( \hat{q}_{\tilde{\epsilon},1}^{\prime
},\hat{q}_{\tilde{\epsilon},2}^{\prime }\right) ^{\prime }$, where $\hat{q}_{%
\tilde{\epsilon},1}$ is of full rank. We show that $\tilde{\epsilon}$ can be
chosen in a way such that $\widetilde{\text{FAR}}(\lambda _{1}^{0},\hat{q})$
can be arbitrarily close to $\widetilde{\text{FAR}}(\lambda _{1}^{0},\hat{q}%
_{\tilde{\epsilon}})$, then from $\inf_{\Lambda _{2}^{0}}\text{FAR(}\lambda
_{1}^{0},\Lambda _{2}^{0})\leq \widetilde{\text{FAR}}(\lambda _{1}^{0},\hat{q%
}_{\tilde{\epsilon}})$ we know $\inf_{\Lambda _{2}^{0}}\text{FAR(}\lambda
_{1}^{0},\Lambda _{2}^{0})=\widetilde{\text{FAR}}(\lambda _{1}^{0},\hat{q})$%
. For example, let $\tilde{\epsilon}_{\delta }=-\min \left\{ \delta ,\frac{1%
}{2}|\sigma _{i}(\hat{q}_{1})|,i=1,\cdots ,K\right\} $ with $\delta $ an
arbitrary positive number. Then $\sigma _{i}(\hat{q}_{\tilde{\epsilon}%
_{\delta },1})\neq 0$ due to the extension of Weyl's eigenvalue perturbation
inequality to perturbation of singular values, and thus $\hat{q}_{\tilde{%
\epsilon}_{\delta },1}$ is of full rank, which leads to 
\begin{equation*}
\inf_{\Lambda _{2}^{0}}\text{FAR(}\lambda _{1}^{0},\Lambda _{2}^{0})\leq 
\text{FAR(}\lambda _{1}^{0},-\hat{q}_{\tilde{\epsilon}_{\delta },2})=%
\widetilde{\text{FAR}}(\lambda _{1}^{0},\hat{q}_{\tilde{\epsilon}_{\delta
}}).
\end{equation*}%
By construction, $\widetilde{\text{FAR}}(\lambda _{1}^{0},\hat{q}_{\tilde{%
\epsilon}_{\delta }})=\widetilde{\text{FAR}}(\lambda _{1}^{0},\hat{q}%
)+O_{p}(\delta ^{2})$ and since $\delta $ can be chosen arbitrarily small
(e.g., $\delta =1/\sqrt{T}$), we know with probability approaching to one, $%
\inf_{\Lambda _{2}^{0}}\text{FAR(}\lambda _{1}^{0},\Lambda _{2}^{0})=%
\widetilde{\text{FAR}}(\lambda _{1}^{0},\hat{q})$ holds.

\paragraph{Proof of \protect\ref{lem:hat kappa large}.\protect\ref%
{prop:additional 2}.}

Under the hypothesis $H_0:\lambda_1=\lambda_1^0$, we have $\mathcal{M}%
=\left(0_{N\times K}, \sqrt{T}{\Sigma}^{-\frac{1}{2}} {\beta}_2{\Psi}%
_{\beta_2\beta_2\cdot u}^{-\frac{1}{2}} \right)$, and thus the rank of $%
\mathcal{M}$, denoted as $\rho(\mathcal{M})$, is smaller than or equal to $%
K-1$. Therefore, the null space of $\mathcal{M}$, denoted by $N(\mathcal{M})$%
, is at least $K$ dimensional. In the following, we use the decomposition $%
\Xi = Z + \mathcal{M}$ with $Z\sim N(0,I)$.

For an arbitrary $n$-by-$n$ real symmetric matrix $A$, the $k$-th largest
eigenvalue, via min-max characterization (also known as the Courant-Fisher
expression), can be expressed as 
\begin{equation*}
\begin{array}{c}
\displaystyle\displaystyle\kappa _{k}(A)=\min_{\substack{ U~s.t.  \\ \mathit{%
dim}(U)=n-k+1}}\max_{x\in U}\{x^{\prime }Ax:\lVert x\rVert =1\}%
\end{array}%
\end{equation*}%
where the first minimum is over all $(n-k+1)$-dimensional subspaces $U$ of $%
\mathbb{R}^{n}$. Therefore, employing this characterization, the $j$-th
eigenvalue of $\Xi ^{\prime }\Xi $ is 
\begin{equation*}
\begin{array}{cl}
\hat{\kappa}_{j}=\displaystyle\min_{\substack{ U~s.t.  \\ \text{dim}%
(U)=2K-1-j+1}}\max_{x\in U}\{x^{\prime }(Z+\mathcal{M})^{\prime }(Z+\mathcal{%
M})x:\lVert x\rVert =1\}. & 
\end{array}%
\end{equation*}%
For $j\geq K$, $2K-1-j+1\leq K$, thus we can choose $U\subseteq N(\mathcal{M}%
)$, and note that 
\begin{equation*}
\begin{array}{cl}
& \max_{x\in N(\mathcal{M})}\{x^{\prime }(Z+\mathcal{M})^{\prime }(Z+%
\mathcal{M})x:\lVert x\rVert =1\} \\ 
= & \max_{x\in N(\mathcal{M})}\{x^{\prime }Z^{\prime }Zx:\lVert x\rVert
=1\}=O_{p}(1),%
\end{array}%
\end{equation*}%
which implies that $\hat{\kappa}_{j}=O_{p}(1)$.

For $j<K$, $2K-1-j+1>K$, we can no longer choose $U\subseteq N(\mathcal{M})$
when $\kappa _{K-1}\rightarrow \infty $ as the null space would be $K$
dimensional. Let's consider $\mathcal{M}^{\prime }\mathcal{M}=QPQ^{\prime }$%
, $Q$ is an orthogonal matrix whose columns are eigenvectors of $\mathcal{M}%
^{\prime }\mathcal{M}$ and $P$ is a diagonal matrix whose entries are the
eigenvalues ${\kappa }_{i},i\geq 1$. Denote $Q_{i}$ the eigenvector
associated with ${\kappa }_{i}$.

Then for any $2K-1-j+1$-dimensional linear space $U$ in $\mathbb{R}^{2K-1}$, 
$U\cap \text{span}(Q_i, i\geq K-1)$ is not empty. Hence, 
\begin{equation*}
\begin{array}{cl}
\hat{\kappa}_j= & \displaystyle\min_{\substack{ U ~s.t.  \\ \text{dim}
(U)=2K-1-j+1}} \max_{x\in U} \{x^{\prime }(Z+\mathcal{M})^{\prime }(Z+ 
\mathcal{M}) x: \lVert x\rVert=1 \} \\ 
\geq & O_p(\kappa_{K-1}^{\frac{1}{2}}) + \kappa_{K-1}.%
\end{array}%
\end{equation*}
which implies that $\hat{\kappa}_j\rightarrow_p \infty$.

\paragraph{Proof of \protect\ref{lem:hat kappa large}.\protect\ref%
{prop:additional 3}.}

For an arbitrary sequence $\{\mathcal{M}_n, n\geq 1\}$ in $\mathbb{M}$, by
construction we have (see also Proposition \ref{prop:additional}.(b).(i)), 
\begin{equation}  \label{eq:small kappa Op1}
\begin{array}{cl}
\displaystyle \kappa_j(\tilde{\Xi}_{\mathcal{M}_n}^{\prime }\tilde{\Xi}_{ 
\mathcal{M}_n}) & = O_p(1), j\geq K.%
\end{array}%
\end{equation}
We complete the proof by constructing a sequence $\{\tilde{\mathcal{M}}_h,
h\geq 1\}\in \mathbb{M}_\infty$ such that 
\begin{equation*}
\begin{array}{cl}
\limsup\limits_{n\rightarrow \infty } \hat{\kappa}_j(\tilde{\Xi}_{\mathcal{M}
_n}^{\prime }\tilde{\Xi}_{\mathcal{M}_n}) \leq \liminf\limits_{h\rightarrow
\infty } \hat{\kappa}_j(\tilde{\Xi}_{\tilde{\mathcal{M}}_h}^{\prime }\tilde{
\Xi}_{\tilde{\mathcal{M}}_h}), j\geq K. & 
\end{array}%
\end{equation*}
Pick up a sub-sequence of $\{\mathcal{M}_n, n\geq 1\}$, denoted by $\{%
\mathcal{M}_{n_h}, h\geq 1\}$, such that 
\begin{equation*}
\begin{array}{cl}
\lim\limits_{h\rightarrow \infty } \hat{\kappa}_i(\tilde{\Xi}_{\mathcal{M}
_{n_h}}^{\prime }\tilde{\Xi}_{\mathcal{M}_{n_h}})=
\limsup\limits_{n\rightarrow \infty } \hat{\kappa}_i(\tilde{\Xi}_{\mathcal{M}
_n}^{\prime }\tilde{\Xi}_{\mathcal{M}_n}), i\geq 1; & 
\end{array}%
\end{equation*}
and $Q_{n_h} \rightarrow Q,$ where $\mathcal{M}_{n_h}^{\prime }\mathcal{M}%
_{n_h}=Q_{n_h}P_{n_h} Q_{n_h}^\prime$, $Q_{n_h}$ is an orthogonal matrix
whose columns are eigenvectors of $\mathcal{M}_{n_h}^{\prime }\mathcal{M}%
_{n_h}$ and $P_{n_h}$ is a diagonal matrix whose entries are the eigenvalues 
${\kappa}_i({\mathcal{\ \ M}_{n_h}}^{\prime }{\mathcal{M}_{n_h}}), i\geq 1 $%
. Denote $Q_{n_h,i}$ the eigenvector associated with ${\kappa}_i({\mathcal{M}%
_{n_h}}^{\prime }{\ \mathcal{M}_{n_h}})$, and by construction $Q_{n_h,i}
\rightarrow Q_i$. Since we allow weak identifications, and thus ${\kappa}%
_{i}({\mathcal{M}_{n_h}}^{\prime }{\mathcal{M}_{n_h}}), i\leq K-1$ can be
zero when all factors are unspanned or bounded when all factors are weak $%
\beta_2=O(1/\sqrt{T})$.

Now we construct the sequence $\{\tilde{\mathcal{M}}_h, h\geq 1\}$ in $%
\mathbb{M}_\infty$. Let $\tilde{\mathcal{M}}_h= \tilde{P}_h^{\frac{1}{2}%
}Q_{n_h}^\prime$ with $\tilde{P}_h$ a diagonal matrix with $\tilde{P}%
_{h,ii}=\kappa_{h,i}$ such that $\kappa_{h,1}> \cdots > \kappa_{h,K-1} >
\kappa_{h,K}=\kappa_{h,K+1}=\cdots=0$, $\kappa_{h,K-1} \rightarrow \infty$
and 
\begin{equation*}
\kappa_{h,1}= o\left({\left(\sum_{1\leq i\leq 2K-1} \lVert
Q_{n_h,i}-Q_i\rVert_{\max} +h^{-1}\right)^{-1}}\right).
\end{equation*}
By construction, the following two properties hold.

\begin{enumerate}
\item For $j\geq K$, 
\begin{equation}
\begin{array}{cl}
\displaystyle & \displaystyle\liminf_{h}\hat{\kappa}_{j}(\tilde{\Xi}_{\tilde{
\mathcal{M}}_{h}}^{\prime }\tilde{\Xi}_{\tilde{\mathcal{M}}_{h}}) \\ 
= & \displaystyle\liminf_{h}\min_{\substack{ U~s.t.  \\ \text{dim}
(U)=2K-1-j+1,  \\ U\perp \text{span}(Q_{i},i<K)}}\max_{x\in U}\{x^{\prime } 
\tilde{\Xi}_{\tilde{\mathcal{M}}_{h}}^{\prime }\tilde{\Xi}_{\tilde{\mathcal{%
\ M }}_{h}}x:\lVert x\rVert =1\},%
\end{array}
\label{eq:minmax kappa j 1}
\end{equation}
where the minimum is over all $(2K-1-j+1)$-dimensional subspaces $U$ of $%
\mathbb{R}^{2K-1}$ that are orthogonal to the linear spaces spanned by
vectors $Q_{i},i<K$.

\item For $j\geq K$, 
\begin{equation}
\begin{array}{cl}
\displaystyle\liminf\limits_{h\rightarrow \infty }\hat{\kappa}_{j}(\tilde{
\Xi }_{\tilde{\mathcal{M}}_{h}}^{\prime }\tilde{\Xi}_{\tilde{\mathcal{M}}
_{h}})=\lim\limits_{h\rightarrow \infty }\hat{\kappa}_{j}(\tilde{\Xi}_{ 
\tilde{\mathcal{M}}_{h}}^{\prime }\tilde{\Xi}_{\tilde{\mathcal{M}}_{h}}). & 
\end{array}
\label{eq:kappa j 0}
\end{equation}
\end{enumerate}

We prove the above two properties which lead directly to the final
conclusion.

\begin{enumerate}
\item {Proof of equation (\ref{eq:minmax kappa j 1})}. Note that 
\begin{equation*}
\begin{array}{cl}
\lVert \tilde{\mathcal{M}}_{h}x-\tilde{P}_{h}^{\frac{1}{2}}Q^{\prime
}x\rVert \leq \kappa _{h,1}^{\frac{1}{2}}\left( \sum_{1\leq i\leq
2K-1}\lVert Q_{n_h,i}-Q_{i}\rVert _{\max }\right) \lVert x\rVert =o(1) & 
\end{array}%
\end{equation*}
due to $\kappa _{h,1}=o\left( \frac{1}{\sum_{1\leq i\leq 2K-1}\lVert
Q_{n_h,i}-Q_{i}\rVert _{\max }+h^{-1}}\right) $. Hence, for the sequence $\{{%
\tilde{\mathcal{M}}_{h}},h\geq 1\}$, if, for example, for an arbitrary $i<K $
such that $Q_{i}^{\prime }x\neq 0,\lVert x\rVert =1$, then with probability
approaching one as $h$ increases, 
\begin{equation*}
\begin{array}{cl}
x^{\prime }\tilde{\Xi}_{\tilde{\mathcal{M}}_{h}}^{\prime }\tilde{\Xi}_{ 
\tilde{\mathcal{M}}_{h}}x & =x^{\prime }Z^{\prime }Zx+x^{\prime }{\tilde{ 
\mathcal{M}}_{h}}^{\prime }Zx+x^{\prime }Z^{\prime }{\tilde{\mathcal{M}}_{h}}
x+x^{\prime }{\tilde{\mathcal{M}}_{h}}^{\prime }{\tilde{\mathcal{M}}_{h}}x
\\ 
& \geq \frac{1}{2}\kappa _{h,K-1}x^{\prime }{Q_{i}}^{\prime }{Q_{i}}
x\rightarrow \infty ,%
\end{array}%
\end{equation*}
Combining the above equation with equation (\ref{eq:small kappa Op1}), we
know $\kappa _{j}(\Xi _{\mathcal{M}_{h}}^{\prime }\Xi _{\mathcal{M}
_{h}}),j\geq K$, the minimum in equation (\ref{eq:minmax kappa j 1}) should
be achieved over linear spaces that are orthogonal to $Q_{i},i<K$ as $h$
increases. This completes the proof of the first statement.

\item {Proof of equation (\ref{eq:kappa j 0})}. For $j\geq K$, by
construction, $\displaystyle \hat{\kappa}_j(\tilde{\Xi}_{\tilde{\mathcal{M}}%
_h}^{\prime } \tilde{\Xi}_{\tilde{\mathcal{M}}_h})=Q_{n_h,i}^{\prime }\tilde{%
\Xi}_{\tilde{ \mathcal{M}}_h}^{\prime }\tilde{\Xi}_{\tilde{\mathcal{M}}_h}
Q_{n_h,i}=Q_{n_h,i}^{\prime }Z^{\prime }Z Q_{n_h,i}\rightarrow Q_{i}^{\prime
}Z^{\prime }Z Q_{i}. $ Hence, $\liminf\limits_{h\rightarrow \infty } $ $\hat{%
\kappa}_j(\tilde{ \Xi }_{\tilde{\mathcal{M}}_h}^{\prime }\tilde{\Xi}_{\tilde{%
\mathcal{M}}_h})$ $= \lim\limits_{h\rightarrow \infty } \hat{\kappa}_j(%
\tilde{\Xi}_{\tilde{ \mathcal{M}}_h}^{\prime }\tilde{\Xi}_{\tilde{\mathcal{M}%
}_h}). $
\end{enumerate}

Therefore, we arrive at the final conclusion that for $j\geq K$, 
\begin{equation*}
\begin{array}{cl}
& \limsup\limits_{n\rightarrow \infty } \hat{\kappa}_j(\tilde{\Xi}_{\mathcal{%
\ \ M}_n}^{\prime }\tilde{\Xi}_{\mathcal{M}_n})=\lim\limits_{h\rightarrow
\infty } \hat{\kappa}_j(\tilde{\Xi}_{\mathcal{M}_{n_h}}^{\prime }\tilde{\Xi}
_{ \mathcal{M}_{n_h}}) \\ 
= & \displaystyle \lim\limits_{h\rightarrow \infty }\min_{\substack{ U ~s.t. 
\\ \text{dim}(U)=2K-1-j+1}} \max_{x\in U} \{x^{\prime }\tilde{\Xi}_{\mathcal{%
\ \ M}_{n_h}}^{\prime }\tilde{\Xi}_{\mathcal{M}_{n_h}} x: \lVert x\rVert=1 \}
\\ 
\leq & \displaystyle \lim\limits_{h\rightarrow \infty }\min_{\substack{ U
~s.t.  \\ \text{dim}(U)=2K-1-j+1,  \\ U\perp \text{span}(Q_{n_h,i}, i< K)}}
\max_{x\in U} \{x^{\prime }\tilde{\Xi}_{\mathcal{M}_{n_h}}^{\prime }\tilde{
\Xi}_{\mathcal{M}_{n_h}} x: \lVert x\rVert=1 \} \\ 
= & \displaystyle \lim\limits_{h\rightarrow \infty }\min_{\substack{ U ~s.t. 
\\ \text{dim}(U)=2K-1-j+1,  \\ U\perp \text{span}(Q_{n_h,i}, i< K)}}
\max_{x\in U} \{x^{\prime }Z^{\prime }Z x: \lVert x\rVert=1 \} \\ 
= & \displaystyle \lim\limits_{h\rightarrow \infty }\min_{\substack{ U ~s.t. 
\\ \text{dim}(U)=2K-1-j+1,  \\ U\perp \text{span}(Q_{i}, i< K)}} \max_{x\in
U} \{x^{\prime }Z^{\prime }Z x+o_p(1): \lVert x\rVert=1 \} \\ 
= & \displaystyle \lim\limits_{h\rightarrow \infty }\min_{\substack{ U ~s.t. 
\\ \text{dim}(U)=2K-1-j+1,  \\ U\perp \text{span}(Q_{i}, i< K)}} \max_{x\in
U} \{x^{\prime }\tilde{\Xi}_{\tilde{\mathcal{M}}_h}^{\prime }\tilde{\Xi}_{ 
\tilde{\mathcal{M}}_h} x : \lVert x\rVert=1 \} \\ 
= & \displaystyle \lim\limits_{h\rightarrow \infty } \hat{\kappa}_j(\tilde{
\Xi}_{\tilde{\mathcal{M}}_h}^{\prime }\tilde{\Xi}_{\tilde{\mathcal{M}}_h})=
\liminf\limits_{h\rightarrow \infty } \hat{\kappa}_j(\tilde{\Xi}_{\tilde{ 
\mathcal{M}}_h}^{\prime }\tilde{\Xi}_{\tilde{\mathcal{M}}_h}),%
\end{array}%
\end{equation*}
where the second last equality is due to construction that we restrict the
increasing rate of $\kappa_{h,1}$ such that $\kappa_{h,1}=o\left(\frac{1}{%
\sum_{1\leq i\leq 2K-1} \lVert Q_{n_h,i}-Q_i\rVert_{\max} +h^{-1}}\right)$.

~\newline
$\square$

\subsubsection{Joint distribution of the eigenvalues of the eigenvalue
problem in Proposition \protect\ref{prop:additional}.\protect\ref%
{prop:additional 1} and the approximate conditional distribution}

\label{secjoint}

Proposition \ref{prop:additional} links the sFAR statistic with the smallest
eigenvalues of an eigenvalue problem. We next first discuss the
distributional behavior of the eigenvalues of the eigenvalue problem in
Proposition \ref{prop:additional}.\ref{prop:additional 1} under the
\textquotedblleft strong identified\textquotedblright\ setting, which then
leads to the distributional behavior of the smallest eigenvalues. We derive
an approximate conditional density of the smallest eigenvalues given the
larger ones and then show the critical values constructed based on the
conditional density are bounded by those generated from a $\chi ^{2}$
distribution asymptotically in Corollary \ref{coro2}.

\nocite{guggenberger2019more}Guggenberger et al. (2019) discuss the case
where $\mathcal{M}^{\prime }\mathcal{M}$ is a reduced rank 2-by-2 matrix
with eigenvalues $\kappa _{1}>\kappa _{2}=2$ and $\kappa _{1}$ increasing to
infinity. This Section extends the analysis to the more general
\textquotedblleft strongly identified\textquotedblright\ specification that $%
\mathcal{M}^{\prime }\mathcal{M}$ is a reduced rank $L$-by-$L$ matrix of
rank $K-1$ with eigenvalues $\kappa _{1}>\cdots >\kappa _{K-1}>\kappa
_{K}=\cdots =\kappa _{L}$ and $\kappa _{K-1}$ increasing to infinity.

The following Proposition \ref{prop:a2} replicates the discussion in \nocite%
{muirhead1978latent}Muirhead (1978) for an approximation of the density of
all eigenvalues of noncentral Wishart matrices. This gives rise to an
approximate conditional density of the $(L-K+1)$ smallest eigenvalues of the
noncentral Wishart matrix given the remaining eigenvalues as shown in (\ref%
{eq conditional 1}).

\begin{proposition}
\label{prop:a2} The joint density of the eigenvalues $\hat{\kappa}_1\geq 
\hat{\kappa}_2\geq \cdots \geq \hat{\kappa}_L$ of $\tilde{\xi}_{\mathcal{M}
_h}^{\prime }\tilde{\xi}_{\mathcal{M}_h} \sim W_L(N, I_L, \mathcal{M}
_h^{\prime }\mathcal{M}_h)$ with $\{\mathcal{M}_h,h\geq 1\}\in \mathbb{M}
_\infty$ such that $\mathcal{M}_h^{\prime }\mathcal{M}_h$ is of rank $K-1$
and the $K-1$ ($K\geq 2$) largest eigenvalues of $\mathcal{M}_h^{\prime } 
\mathcal{M}_h$ are distinct and go to infinity, when $h$ is large, can be
approximated by 
\begin{equation*}  \label{equation joint latent 0}
\begin{array}{ll}
\displaystyle f_{\hat{\kappa}_1,\hat{\kappa}_2,\cdots, \hat{\kappa}
_{L}}(x_1, x_2,\cdots, x_{L}) = &  \\ 
\displaystyle g_1(\kappa_i, 1\leq i\leq L)g_{2, \hat{\kappa}_K,\cdots, \hat{
\kappa}_L}(x_i, K\leq i\leq L) \times
\prod_{i=1}^{K-1}\prod_{j=K-1+1}^{L}(x_i-x _j)^{\frac{1}{2}} &  \\ 
\displaystyle \times \exp{\left(-\frac{1}{2}\sum_{j=K-1+1}^{L}x _j\right) }
\prod_{i=K-1+1}^{L}x_i^{\frac{1}{2}(N-{L} -1)}\prod_{i<j;K-1+1}^{L}(x_i-x_j),
& 
\end{array}%
\end{equation*}
where 
\begin{equation*}
\begin{array}{ll}
\displaystyle g_1(\kappa_i, 1\leq i\leq L) = &  \\ 
\displaystyle \frac{\pi^{\frac{1}{2}L^2-\frac{1}{2}(K-1)K } \Gamma_{K-1}( 
\frac{1 }{2}N)\Gamma_{K-1}(\frac{1}{2}L) }{ \Gamma_{L}(\frac{1}{2}
N)\Gamma_{L}(\frac{1}{2}L2^{\frac{1}{2}LN-\frac{1}{2} (K-1)(N+L-(K-1) -3) }}
\exp\left({-\frac{1}{2}\sum_{i=1}^{L}\kappa_i }\right) &  \\ 
\times \displaystyle \prod_{i=1}^{K-1}\kappa_i^{-\frac{1}{4}(N+L-2(K-1))}, & 
\\ 
~ &  \\ 
\displaystyle g_{2, \hat{\kappa}_K,\cdots, \hat{\kappa}_L}(x_i, K\leq i\leq
L) = &  \\ 
\displaystyle \exp\left({-\frac{1}{2}\sum_{i=1}^{K-1}x_i +
\sum_{i=1}^{K-1}(\kappa_ix_i)^{\frac{1}{2}} } \right)\prod_{i=1}^{K-1} x_i^{ 
\frac{1}{4}}\prod_{i=1}^{K-1}x_i^{\frac{1}{4} (N-L-2)}
\prod_{i<j}^{K-1}\left(\frac{x_i-x_j}{{\ \kappa}_i-{\kappa}_j} \right). & 
\end{array}%
\end{equation*}
\end{proposition}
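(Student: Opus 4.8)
The plan is to start from the exact joint density of the latent roots of a noncentral Wishart matrix and then substitute an asymptotic expansion of the matrix-argument hypergeometric function that appears in it. Writing $X=\mathrm{diag}(x_{1},\ldots ,x_{L})$ and $\Omega =\mathcal{M}_{h}^{\prime }\mathcal{M}_{h}$, the eigenvalues $\hat{\kappa}_{1}\geq \cdots \geq \hat{\kappa}_{L}$ of $\tilde{\Xi}_{\mathcal{M}_{h}}^{\prime }\tilde{\Xi}_{\mathcal{M}_{h}}\sim W_{L}(N,I_{L},\Omega )$ have the well-known density (see, e.g., Muirhead (1978))
\begin{equation*}
C\,\exp\!\left(-\tfrac{1}{2}\operatorname{tr}\Omega-\tfrac{1}{2}\textstyle\sum_{i}x_{i}\right)\prod_{i}x_{i}^{\frac{1}{2}(N-L-1)}\prod_{i<j}(x_{i}-x_{j})\;{}_{0}F_{1}\!\left(\tfrac{1}{2}N;\tfrac{1}{4}\Omega ,X\right),
\end{equation*}
where $C$ collects the constant built from the multivariate gamma functions $\Gamma_{L}(\tfrac{1}{2}N)$ and $\Gamma_{L}(\tfrac{1}{2}L)$. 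Since $\{\mathcal{M}_{h}\}\in \mathbb{M}_{\infty }$ forces $\Omega $ to have rank $K-1$, the two-matrix ${}_{0}F_{1}$ depends on $\Omega $ only through its nonzero eigenvalues $\kappa_{1}>\cdots >\kappa_{K-1}$, and $\operatorname{tr}\Omega =\sum_{i=1}^{K-1}\kappa_{i}$.

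The key step is to replace ${}_{0}F_{1}(\tfrac{1}{2}N;\tfrac{1}{4}\Omega ,X)$ by its leading-order form as the distinct eigenvalues $\kappa_{1},\ldots ,\kappa_{K-1}$ diverge, which is exactly the situation analysed in Muirhead (1978). In that limit a Laplace/saddle-point argument pairs each diverging $\kappa_{i}$ with the corresponding large latent root $x_{i}$, $i\leq K-1$, and the leading term factorises: it produces the exponential coupling $\exp(\sum_{i=1}^{K-1}(\kappa_{i}x_{i})^{1/2})$, the algebraic prefactors in $\kappa_{i}$ and $x_{i}$ that appear in $g_{1}$ and $g_{2}$ (the powers $\kappa_{i}^{-\frac{1}{4}(N+L-2(K-1))}$ and $x_{i}^{\frac{1}{4}+\frac{1}{4}(N-L-2)}$), and --- crucially --- it splits the Vandermonde, dividing the large--large block by $\prod_{i<j\leq K-1}(\kappa_{i}-\kappa_{j})$ and halving the exponent of the large--small cross terms to give $\prod_{i\leq K-1}\prod_{j\geq K}(x_{i}-x_{j})^{1/2}$.

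I would then reorganise the Wishart factors according to the partition $\{1,\ldots ,K-1\}$ versus $\{K,\ldots ,L\}$. The full Vandermonde $\prod_{i<j}(x_{i}-x_{j})$ and the factors $\exp(-\tfrac{1}{2}\sum_{i}x_{i})$ and $\prod_{i}x_{i}^{\frac{1}{2}(N-L-1)}$ each split into a large-root part, which combines with the prefactors of the ${}_{0}F_{1}$ expansion to form $g_{2}$, and a small-root part. The small--small block $\prod_{K\leq i<j}(x_{i}-x_{j})$, together with $\exp(-\tfrac{1}{2}\sum_{j\geq K}x_{j})$ and $\prod_{j\geq K}x_{j}^{\frac{1}{2}(N-L-1)}$, is precisely the central Wishart eigenvalue kernel for the decoupled $(L-K+1)$-dimensional block, which becomes the conditional part of the stated density; the cross block contributes the displayed $\prod_{i\leq K-1}\prod_{j\geq K}(x_{i}-x_{j})^{1/2}$. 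Collecting the constant and $\Omega $-dependent terms into $g_{1}$, the large-root terms into $g_{2}$, and the remainder as displayed then yields the claimed approximation, where the recursion $\Gamma_{L}(a)=\pi^{\frac{1}{2}(K-1)(L-K+1)}\Gamma_{K-1}(a)\,\Gamma_{L-K+1}(a-\tfrac{1}{2}(K-1))$ reconciles the normalising constants of $g_{1}$ with those of $C$.

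The main obstacle is the asymptotic expansion of ${}_{0}F_{1}$ itself when several eigenvalues of one argument tend to infinity. Justifying the leading-order factorisation --- in particular the saddle coupling $\kappa_{i}\leftrightarrow x_{i}$, the uniform control of the remainder, and the appearance of the \emph{halved} cross-Vandermonde exponent --- is the delicate point; it rests on the system of partial differential equations satisfied by the hypergeometric function of matrix argument exploited in Muirhead (1978), together with the distinctness and divergence of $\kappa_{1},\ldots ,\kappa_{K-1}$ guaranteed by $\{\mathcal{M}_{h}\}\in \mathbb{M}_{\infty }$.
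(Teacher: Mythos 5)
Your proposal follows essentially the same route as the paper: start from the exact James-type joint density of the latent roots of the noncentral Wishart matrix, substitute the large-noncentrality asymptotic expansion of the two-matrix ${}_{0}F_{1}$ (which couples each diverging $\kappa_{i}$ to the corresponding large root $x_{i}$ and splits the Vandermonde into large--large, large--small and small--small blocks), and regroup the factors into $g_{1}$, $g_{2}$ and the central Wishart kernel of the small roots. The only difference is bibliographic --- the paper imports the explicit ${}_{0}F_{1}$ approximation from Leach (1969) rather than re-deriving it from Muirhead's partial differential equation system, and then notes the result coincides with equation (6.5) of Muirhead (1978) --- so your argument is correct and materially identical.
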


\paragraph{{Proof of Proposition \protect\ref{prop:a2}.}}

The joint distribution of the eigenvalues $\hat{\kappa}_{1}\geq \hat{\kappa}%
_{2}\geq \cdots \geq \hat{\kappa}_{L}$ of a $\mathcal{W}_{L}\left( N,I_{L},%
\mathcal{M}^{\prime }\mathcal{M}\right) $-distributed random matrix reads
(see, e.g., \nocite{james1964distributions}James (1964)), 
\begin{equation}
\begin{array}{lc}
\displaystyle\displaystyle f_{\hat{\kappa}_{1},\hat{\kappa}_{2},\cdots ,\hat{%
\kappa}_{L}}(x_{1},x_{2},\cdots ,x_{L})= & \displaystyle \\ 
\displaystyle\frac{\pi ^{\frac{1}{2}L^{2}}}{2^{\frac{1}{2}NL}\Gamma _{L}(%
\frac{1}{2}N)\Gamma _{L}(\frac{1}{2}L)}\exp \left( -\frac{1}{2}%
\sum_{i=1}^{L}x_{i}\right) \prod_{i=1}^{L}x_{i}^{\frac{1}{2}(N-2K)} &  \\ 
\displaystyle\times \prod_{i<j}(x_{i}-x_{j})\exp \left( -\frac{1}{2}%
\sum_{i=1}^{L}\kappa _{i}\right) ~_{0}F_{1}\left( \frac{1}{2}N;\frac{1}{4}%
\Omega _{\kappa },\Omega _{x}\right) , & 
\end{array}
\label{a21}
\end{equation}%
where $x_{1}\geq x_{2}\geq \cdots \geq x_{L}\geq 0$, $\kappa _{1}>\kappa
_{2}>\cdots >\kappa _{K-1}=\cdots =\kappa _{L}=0$ are eigenvalues of $%
\mathcal{M}^{\prime }\mathcal{M}$, $\Omega _{\kappa }=\text{diag}\left(
\kappa _{1},\kappa _{2},\cdots ,\kappa _{L}\right) ,$ $\Omega _{\kappa }=%
\text{diag}\left( \kappa _{1},\kappa _{2},\cdots ,\kappa _{L}\right) $ and $%
_{0}F_{1}$ is a hypergeometric function of the matrix argument. $_{0}F_{1}$
contains power series representations in terms of zonal polynomials, which
are very hard to find exact closed forms for except for limited cases. To
analyze the density, we consider asymptotic approximations under certain
model sequences (\nocite{muirhead1978latent}Muirhead (1978), \nocite%
{guggenberger2019more}Guggenberger et al. (2019)).

When $\kappa_1, \cdots, \kappa_{K-1}$ are large, \nocite{leach1969bessel}%
Leach (1969) shows that $~_0F_1\left(\frac{1}{2}N; \frac{1}{4}\Omega_\kappa,
\Omega_x \right)$ can be approximated by the following function, 
\begin{equation}
\begin{array}{ll}
\displaystyle & \displaystyle~_0F_1\left(\frac{1}{2}N; \frac{1}{4}
\Omega_\kappa, \Omega_x \right) \approx \displaystyle2^{\frac{1}{2}
(K-1)(N+K-3) \pi^{-\frac{1}{2}(K-1)K} \Gamma_{K-1}(\frac{1}{2} N)
\Gamma_{K-1}(\frac{1}{2}L) } \\ 
& \displaystyle\times \exp\left(\sum_{i=1}^{K-1}(x_i \kappa_i)^{\frac{1}{2}}
\right)\prod_{i=1}^{K-1} (x_i\kappa_i)^{\frac{1}{2}(L-N)}
\prod_{i=1}^{K-1}\prod_{j=1;i<j}^{L}c_{ij}^{-\frac{1}{2}}%
\end{array}
\label{eq:A2 }
\end{equation}%
where 
\begin{equation*}
\begin{array}{lll}
\displaystyle c_{ij} & = (\kappa_i-\kappa_j)(x_i-x_j)~~~~ & i,j=1,\cdots,K-1;
\\ 
\displaystyle & = \kappa_i (x_i-x_j)~~~~ & i=1,\cdots,K-1, j=K,\cdots,L;%
\end{array}%
\end{equation*}%
and 
\begin{equation*}
\begin{array}{lcc}
\displaystyle \Gamma_{m}(a)= \pi^{\frac{1}{2}m(m-1)} \prod_{j=1}^m \Gamma(a- 
\frac{1}{2} (j-1)). &  & 
\end{array}%
\end{equation*}%
Substituting (\ref{eq:A2 }) into (\ref{a21}) gives an asymptotic
representation of the joint density of all latent roots, 
\begin{equation*}
\begin{array}{lc}
\displaystyle f_{\hat{\kappa}_1,\hat{\kappa}_2,\cdots, \hat{\kappa}
_{L}}(x_1, x_2,\cdots, x_{L}) \sim \frac{\pi^{\frac{1}{2}L^2-\frac{1}{2}
(K-1)K } \Gamma_{K-1}(\frac{1 }{2}N)\Gamma_{K-1}(\frac{1}{2}L) }{
\Gamma_{L}( \frac{1}{2} N)\Gamma_{L}(\frac{1}{2}L2^{\frac{1}{2}LN-\frac{1}{2}
(K-1)(N+L-(K-1) -3) }} &  \\ 
\displaystyle \times \exp{\left(-\frac{1}{2}\sum_{i=1}^{L}\kappa_i \right)}
\prod_{i=1}^{K-1}\kappa_i^{-\frac{1}{4}(N+L-2(K-1))} &  \\ 
\displaystyle \times \exp\left({-\frac{1}{2}\sum_{i=1}^{K-1}x_i +
\sum_{i=1}^{K-1}(\kappa_ix_i)^{\frac{1}{2}} }\right) \prod_{i=1}^{K-1} x_i^{ 
\frac{1}{4}}\prod_{i=1}^{K-1}x_i^{\frac{1}{4} (N-L-2)}
\prod_{i<j}^{K-1}\left(\frac{x_i-x_j}{{\ \kappa}_i-{\kappa}_j} \right) &  \\ 
\displaystyle \times \prod_{i=1}^{K-1}\prod_{j=K-1+1}^{L}(x_i-x _j)^{\frac{1 
}{2}} \exp{\left(-\frac{1}{2}\sum_{j=K-1+1}^{L}x _j\right) } &  \\ 
\displaystyle \prod_{i=K-1+1}^{L}x_i^{\frac{1}{2}(N-{L} -1)}
\prod_{i<j;K-1+1}^{L}(x_i-x_j), & 
\end{array}%
\end{equation*}%
which corresponds to equation (6.5) in \nocite{muirhead1978latent}Muirhead
(1978). ~\newline
$\square$ ~\newline

{Proposition} \ref{prop:a2} leads to an approximate conditional density
function of the $(L-K+1)$ smallest roots $\hat{\kappa}_{j},$ $j=K,\cdots ,L$
given the $(K-1)$ largest roots $\hat{\kappa}_{i},i=1,\cdots ,K-1$ when $%
\kappa _{1}>\kappa _{2}>\cdots >\kappa _{K-1}>\kappa _{K}=\kappa
_{K+1}=\cdots =0$ and $\kappa _{1},\cdots ,\kappa _{K-1}$ are large is (see,
e.g., Corollary 2 in \nocite{james1969tests}James (1969), \nocite%
{muirhead1978latent}Muirhead (1978))) 
\begin{equation}
\begin{array}{lc}
\displaystyle f_{\hat{\kappa}_{K},\cdots ,\hat{\kappa}_{L}|\hat{\kappa}%
_{1},\cdots ,\hat{\kappa}_{K-1}}^{\ast }(x_{K-1+1},\cdots x_{L}|\hat{\kappa}%
_{1},\cdots ,\hat{\kappa}_{K-1}) &  \\ 
=\displaystyle g\left( \hat{\kappa}_{i},1\leq i\leq K-1\right)
\prod_{i=1}^{K-1}\prod_{j=K-1+1}^{L}(\hat{\kappa}_{i}-x_{j})^{\frac{1}{2}} & 
\\ 
\displaystyle\times \exp {\left( -\frac{1}{2}\sum_{j=K-1+1}^{L}x_{j}\right) }%
\prod_{i=K-1+1}^{L}x_{i}^{\frac{1}{2}(N-{L}-1)}%
\prod_{i<j;K-1+1}^{L}(x_{i}-x_{j}) &  \\ 
=\displaystyle\bar{g}\left( \kappa _{1},\cdots \kappa _{K}\right)
\prod_{i=1}^{K-1}\prod_{j=K-1+1}^{L}(1-\frac{x_{j}}{\hat{\kappa}_{i}})^{%
\frac{1}{2}}\times p_{N,L,K}\left( x_{K-1+1},\cdots x_{L}\right) & 
\end{array}
\label{eq conditional 1}
\end{equation}%
where $\hat{\kappa}_{K-1}\geq x_{K-1+1}\geq \cdots \geq x_{L}\geq 0$, $%
p_{N,L,K}\left( \cdot \right) $ is the joint density function of eigenvalues
of a $W_{L-K+1}(N-K+1,I_{L-K+1})$-distributed matrix such that 
\begin{equation*}
\begin{array}{l}
\displaystyle p_{N,L,K}\left( x_{K-1+1},\cdots x_{L}\right) = \\ 
\displaystyle\bar{G}\displaystyle\exp \left( -\frac{1}{2}%
\sum_{i=K-1+1}^{L}x_{i}\right) \displaystyle\times
\prod_{i=K-1+1}^{L}x_{i}^{(N-L-1)/2}\prod_{i<j;i\geq K-1+1}^{L}\left(
x_{i}-x_{j}\right) , \\ 
~ \\ 
\displaystyle\bar{G}=\frac{\pi ^{\frac{1}{2}(L-(K-1))^{2}}2^{-\frac{1}{2}%
(N-(K-1))(L-(K-1))}}{\Gamma _{L-(K-1)}(\frac{1}{2}(L-(K-1)))\Gamma
_{L-(K-1)}(\frac{1}{2}(N-(K-1)))},%
\end{array}%
\end{equation*}%
$\displaystyle g\left( \hat{\kappa}_{i},1\leq i\leq K-1\right) ,\bar{g}%
\left( \hat{\kappa}_{1},\cdots \hat{\kappa}_{K}\right) $ are functions only
of the largest $(K-1)$ eigenvalues and independent of values of $\kappa
_{i},1\leq i\leq K-1$, and 
\begin{equation*}
\begin{array}{c}
\displaystyle\bar{g}\left( \hat{\kappa}_{1},\cdots \hat{\kappa}_{K}\right)
=\prod_{i=1}^{K-1}\hat{\kappa}_{i}^{\frac{1}{2}(L-(K-1))}g\left( \hat{\kappa}%
_{i},1\leq i\leq K-1\right) /\bar{G}.%
\end{array}%
\end{equation*}

\nocite{guggenberger2019more}Guggenberger et al. (2019) provide the exact
analytical form of $f_{\hat{\kappa}_{2}|\hat{\kappa}_{1}}^{\ast }$, and
study its properties via numerical integration. We employ an alternative
approach by finding an asymptotic representation of $\displaystyle f_{\hat{%
\kappa}_{K},\cdots ,\hat{\kappa}_{L}|\hat{\kappa}_{1},\cdots ,\hat{\kappa}%
_{K-1}}^{\ast }$ when $\hat{\kappa}_{K-1}$ is large, since $\hat{\kappa}%
_{K-1}\rightarrow _{p}\infty $ by Proposition \ref{prop:additional}.\ref%
{prop:additional 2}. For example, \nocite{guggenberger2019more}Guggenberger
et al. (2019) show the exact form of the special case $L=K=2$ such that 
\begin{equation*}
\displaystyle f_{\hat{\kappa}_{2}|\hat{\kappa}_{1}}^{\ast }(x_{2}|\hat{\kappa%
}_{1})=\bar{g}(\hat{\kappa}_{1})\left( 1-\frac{x_{2}}{\hat{\kappa}_{1}}%
\right) ^{\frac{1}{2}}p_{N,2,2}\left( x_{2}\right) ,
\end{equation*}%
where 
\begin{equation*}
\displaystyle\bar{g}(\hat{\kappa}_{1})=\frac{\Gamma \left( \frac{K+2}{2}%
\right) 2^{\frac{N+1}{2}}}{\hat{\kappa}_{1}^{\frac{K}{2}}\sqrt{\pi }%
~_{1}F_{1}\left( \frac{N-1}{2},\frac{N+2}{2};-\frac{\hat{\kappa}_{1}}{2}%
\right) }\hat{\kappa}_{1}^{\frac{1}{2}},
\end{equation*}%
and $p_{N,2,2}\left( \cdot \right) $ is the joint density function of the
eigenvalue of a $W_{1}(N-1,1)$-distributed matrix and thus also the density
function of the $\chi ^{2}(N-1)$-distribution. The property 
that for $-z\rightarrow \infty $, 
\begin{equation*}
\displaystyle{\ }_{1}F_{1}(a;b;z)=\frac{\Gamma (b)(-z)^{-a}}{\Gamma (b-a)}%
\left( \sum_{k=0}^{n}\frac{(-1)^{k}(a)_{k}(a-b+1)_{k}z^{-k}}{k!}+O\left(
z^{-n-1}\right) \right)
\end{equation*}%
implies that 
\begin{equation*}
\begin{array}{l}
\displaystyle\displaystyle\bar{g}(\hat{\kappa}_{1})=\displaystyle\frac{1}{1-%
\frac{N-1}{2}\frac{1}{\hat{\kappa}_{1}}+O\left( \hat{\kappa}_{1}^{-2}\right) 
}=\left( 1+\frac{N-1}{2}\frac{1}{\hat{\kappa}_{1}}+O\left( \hat{\kappa}%
_{1}^{-2}\right) \right) ,%
\end{array}%
\end{equation*}%
and hence when $\hat{\kappa}_{1}$ is large we can rewrite $f_{\hat{\kappa}%
_{2}|\hat{\kappa}_{1}}^{\ast }$ in the following form 
\begin{equation}
\begin{array}{c}
\displaystyle f_{\hat{\kappa}_{2}|\hat{\kappa}_{1}}^{\ast }(x_{2}|\hat{\kappa%
}_{1})\displaystyle=\left( 1+\frac{N-1}{2}\frac{1}{\hat{\kappa}_{1}}+O\left( 
\hat{\kappa}_{1}^{-2}\right) \right) \left( 1-\frac{x_{2}}{\hat{\kappa}_{1}}%
\right) ^{\frac{1}{2}}p_{1}\left( x_{2}\right) \\ 
\displaystyle=\left( 1+O\left( \hat{\kappa}_{1}^{-1}\right) \right) \left( 1-%
\frac{x_{2}}{\hat{\kappa}_{1}}\right) ^{\frac{1}{2}}p_{1}\left( x_{2}\right)
,0\leq x_{2}\leq \hat{\kappa _{1}}.%
\end{array}
\label{a1}
\end{equation}

The structure of $f_{\hat{\kappa}_{2}|\hat{\kappa}_{1}}^{\ast }(x_{2}|\hat{%
\kappa}_{1})$ shows that $\bar{g}(\hat{\kappa}_{1})$ is close to one when $%
\hat{\kappa}_{1}$ is large, and for fixed $x_{2}$ the limit of $f_{\hat{%
\kappa}_{2}|\hat{\kappa}_{1}}^{\ast }(x_{2}|\hat{\kappa}_{1})$ is $%
p_{N,2,2}\left( x_{2}\right) $. When $\hat{\kappa}_{1}$ increases, the
maximum possible value can be taken by $x_{2}$ increases and the term $%
\left( 1-{x_{2}}/{\hat{\kappa}_{1}}\right) ^{{1}/{2}}$ increases as well for
fixed $x_{2}$. These observations imply that if we ignore terms of order $%
O\left( \hat{\kappa}_{1}^{-1}\right) $, increasing $\hat{\kappa}_{1}$ leads
to a conditional density with "fatter tails", and thus the critical values
generated based on the conditional density increase as $\hat{\kappa}_{1}$
increases and are bounded by those generated based on $p_{N,2,2}\left(
x_{2}\right) $. These observations should hold for general cases as well.

In the following, we extend the above representation (\ref{a1}) to the
general $f_{\hat{\kappa}_K, \cdots, \hat{\kappa}_L|\hat{\kappa}_1, \cdots, 
\hat{\kappa}_{K-1}}^*(\cdot)$, and then discuss in Corollary \ref{coro2} the
monotonicity and boundedness of the critical values. To derive the
representation of $f_{\hat{\kappa}_K, \cdots, \hat{\kappa}_L|\hat{\kappa}_1,
\cdots, \hat{\kappa}_{K-1}}^*(\cdot)$, Proposition \ref{largest eigenvalues}
analyzes the form of the first factor of $\displaystyle f_{\hat{\kappa}_K,
\cdots, \hat{\kappa}_L|\hat{\kappa}_1, \cdots, \hat{\kappa}_{K-1}}^*$ in (%
\ref{eq conditional 1}), i.e., $\bar{g}\left(\hat{\kappa}_{1},\cdots \hat{%
\kappa}_{K}\right)$, when $\hat{ \kappa}_{K-1}$ is large.

\begin{proposition}
\label{largest eigenvalues} When $\hat{ \kappa}_{K-1}$ is large, the
function $\bar{g}\left(\hat{\kappa }_{1},\cdots \hat{\kappa}_{K}\right)$
from (\ref{eq conditional 1}) satisfies, 
\begin{equation*}
\begin{array}{c}
\displaystyle \bar{g}\left(\kappa_{1},\cdots \kappa_{K}\right) = %
\displaystyle \prod_{i=1}^{K-1} \prod_{j=K-1+1}^{L} \left(1+ \frac{N-(K-1)}{%
2 \hat{\kappa}_i} + O\left(\hat{\kappa}_{i}^{-2}\right)\right).%
\end{array}%
\end{equation*}
\end{proposition}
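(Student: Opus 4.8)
The plan is to identify $\bar g$ as the normalizing constant of the conditional density in the third line of (\ref{eq conditional 1}) and to evaluate it through a Laplace-transform identity for the trace of a central Wishart matrix. Since $f^{\ast}_{\hat\kappa_K,\cdots,\hat\kappa_L\mid\hat\kappa_1,\cdots,\hat\kappa_{K-1}}$ is a density on $\hat\kappa_{K-1}\geq x_{K-1+1}\geq\cdots\geq x_L\geq 0$ and equals $\bar g$ times $\prod_{i,j}(1-x_j/\hat\kappa_i)^{1/2}$ times $p_{N,L,K}$, requiring it to integrate to one forces
\[
\bar g(\hat\kappa_1,\cdots,\hat\kappa_K)^{-1}=\int \prod_{i=1}^{K-1}\prod_{j=K-1+1}^{L}\left(1-\frac{x_j}{\hat\kappa_i}\right)^{\frac12} p_{N,L,K}(x_{K-1+1},\cdots,x_L)\,dx,
\]
the integral running over the ordered simplex displayed above.

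First I would pass to a tractable integrand for large $\hat\kappa_i$. On the support one has $(1-x_j/\hat\kappa_i)^{1/2}=\exp(-x_j/(2\hat\kappa_i))\bigl(1+O(x_j^2/\hat\kappa_i^2)\bigr)$, so multiplying over $i$ and $j$ produces the Gaussian-type weight $\exp(-\tfrac12 T\sum_j x_j)$ with $T:=\sum_{i=1}^{K-1}\hat\kappa_i^{-1}$, up to a relative error $O(\hat\kappa^{-2})$. Because the resulting integrand then decays like $\exp(-\tfrac12(1+T)\sum_j x_j)$ and $\hat\kappa_{K-1}\to_p\infty$ by Proposition \ref{prop:additional}.\ref{prop:additional 2}, extending the domain from the truncated simplex to the full ordered positive orthant costs only an exponentially small term. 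Hence, to the stated order,
\[
\bar g^{-1}=\mathbb{E}\!\left[\exp\!\left(-\tfrac{T}{2}\sum_{j=K-1+1}^{L}\lambda_j\right)\right]\bigl(1+O(\hat\kappa^{-2})\bigr),
\]
where $\lambda_{K-1+1},\cdots,\lambda_L$ are the eigenvalues of a matrix $W\sim W_{L-K+1}(N-K+1,I_{L-K+1})$, as $p_{N,L,K}$ is exactly their joint density.

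The key step is to evaluate this expectation in closed form. The sum $\sum_j\lambda_j=\mathrm{tr}(W)$ is the trace of a central Wishart, and writing $W=\sum_{k=1}^{N-K+1}z_kz_k'$ with $z_k\iid N(0,I_{L-K+1})$ shows $\mathrm{tr}(W)=\sum_k\norm{z_k}^2\sim\chi^2_{(L-K+1)(N-K+1)}$. The chi-square moment generating function then gives $\mathbb{E}[\exp(-\tfrac{T}{2}\mathrm{tr}(W))]=(1+T)^{-\frac12(L-K+1)(N-K+1)}$, so that $\bar g=(1+T)^{\frac12(L-K+1)(N-K+1)}\bigl(1+O(\hat\kappa^{-2})\bigr)$.

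Finally I would reconcile this closed form with the asserted product. Using $L-(K-1)=L-K+1$ and $N-(K-1)=N-K+1$, each factor in the claimed product is repeated $L-K+1$ times across $j$, so the product expands to first order as
\[
\prod_{i=1}^{K-1}\prod_{j=K-1+1}^{L}\left(1+\frac{N-(K-1)}{2\hat\kappa_i}\right)=1+\frac{(L-K+1)(N-K+1)}{2}\sum_{i=1}^{K-1}\frac{1}{\hat\kappa_i}+O(\hat\kappa^{-2}),
\]
which coincides with the first-order expansion of $(1+T)^{(L-K+1)(N-K+1)/2}$; matching the two to their common $O(\hat\kappa_i^{-2})$ error yields the claim. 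The main obstacle is this third step: one must look past the intractable multivariate Wishart eigenvalue integral and notice that only the linear statistic $\mathrm{tr}(W)$ enters, which is a plain chi-square and collapses the computation to a one-line moment-generating-function evaluation. Controlling the two remainders—the $O(\hat\kappa^{-2})$ from the square-root expansion and the exponentially small error from the domain truncation—is then routine.
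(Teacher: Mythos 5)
Your proof is correct and reaches the same first-order expansion as the paper, but by a genuinely different computation. Both arguments start from the same place---$\bar g$ is pinned down by requiring the conditional density in (\ref{eq conditional 1}) to integrate to one, so $\bar g^{-1}=\int\prod_{i,j}(1-x_j/\hat\kappa_i)^{1/2}\,p_{N,L,K}\,dx$ over the truncated ordered region---and both must control the same two remainders (the quadratic error from the square roots and the truncation at $\hat\kappa_{K-1}$). Where you diverge is in evaluating the leading integral. The paper expands each square root to first order, splits the integrand into a linear piece $A_1=1-(\sum_i\tfrac{1}{2\hat\kappa_i})(\sum_j x_j)$ plus a remainder $A_2$, integrates $A_1$ using the \emph{mean} of the trace of a central Wishart matrix (Muirhead's Theorem 3.2.20), bounds $\int A_2\,p=O(\sum_i\hat\kappa_i^{-2})$ by cutting the domain at $x_K=\sqrt{\hat\kappa_{K-1}}$, and inverts. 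You instead replace $\prod_{i,j}(1-x_j/\hat\kappa_i)^{1/2}$ by $\exp(-\tfrac{T}{2}\sum_j x_j)$ with $T=\sum_i\hat\kappa_i^{-1}$ and evaluate the resulting Laplace transform exactly via $\mathrm{tr}(W)\sim\chi^2\left((L-K+1)(N-K+1)\right)$---the same Wishart-trace fact, used through its moment generating function rather than its first moment. This yields the closed form $\bar g=(1+T)^{\frac12(L-K+1)(N-K+1)}\bigl(1+O(\hat\kappa_{K-1}^{-2})\bigr)$, which is cleaner, expands to the stated product, and is internally consistent on the factor of $\tfrac12$ that the paper's intermediate display $1-\sum_i(N-(K-1))(L-K+1)/\hat\kappa_i$ drops relative to its own definition of $A_1$. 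What the paper's more pedestrian expansion buys is the explicit $A_1/A_2$ bookkeeping and the $\tilde f(x,\hat\kappa)$ terms that are reused verbatim in Corollary \ref{coro} and in the monotonicity argument of Corollary \ref{coro2}. The one point to tighten is your claim that $(1-x_j/\hat\kappa_i)^{1/2}=e^{-x_j/(2\hat\kappa_i)}(1+O(x_j^2/\hat\kappa_i^2))$ holds ``on the support'': it is not uniform as $x_j/\hat\kappa_{K-1}\to 1$, so you need the same domain split the paper makes, with the far region absorbed by the exponential decay of $p_{N,L,K}$; you flag this as routine, and it is.
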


\paragraph{{Proof of Proposition \protect\ref{largest eigenvalues}.}}

~\newline
Equation (\ref{eq conditional 1}) can be expressed as 
\begin{equation*}
\begin{array}{l}
\displaystyle f_{\hat{\kappa}_K, \cdots, \hat{\kappa}_L|\hat{\kappa}_1,
\cdots, \hat{\kappa}_{K-1}}^*(x_{K-1+1},\cdots x_{L}|\hat{\kappa}_1, \cdots, 
\hat{\kappa}_{K-1})= \displaystyle \bar{g}\left(\hat{\kappa}_1, \cdots, \hat{
\kappa}_{K-1}\right) \\ 
\displaystyle \times \left\{\left(1-\left( \sum_{i=1}^{K-1}\frac{1}{2\hat{
\kappa}_i}\right)\left(\sum_{j=K-1+1}^{2K-1} x_j\right) \right) \right. \\ 
\displaystyle \left.+ \prod_{i=1}^{K-1}\prod_{j=K-1+1}^{L}(1-\frac{x_j}{\hat{
\kappa}_i})^{\frac{1}{2}} - \left(1-\left( \sum_{i=1}^{K-1}\frac{1}{2\hat{
\kappa}_i}\right)\left(\sum_{j=K-1+1}^{2K-1} x_j\right) \right) \right\} \\ 
~ \\ 
\times p_{N,L,K}\left(x_{K-1+1},\cdots x_{L}\right) \\ 
~ \\ 
= \bar{g}\left(\hat{\kappa}_1, \cdots, \hat{\kappa}_{K-1}\right) \left(A_1 +
A_2 \right)p_{N,L,K}\left(x_{K-1+1},\cdots x_{L}\right),%
\end{array}%
\end{equation*}
with 
\begin{equation*}
\begin{array}{l}
\displaystyle A_1 = \left(1-\left( \sum_{i=1}^{K-1}\frac{1}{2\hat{\kappa}_i}
\right)\left(\sum_{j=K-1+1}^{2K-1} x_j\right) \right), \\ 
\displaystyle A_2= \prod_{i=1}^{K-1}\prod_{j=K-1+1}^{L}(1-\frac{x_j}{\hat{
\kappa}_i})^{\frac{1}{2}} - \left(1-\left( \sum_{i=1}^{K-1}\frac{1}{2\hat{
\kappa}_i}\right)\left(\sum_{j=K-1+1}^{2K-1} x_j\right) \right).%
\end{array}%
\end{equation*}
Theorem 3.2.20. in \nocite{muirhead2009aspects}Muirhead (2009) shows that 
\begin{equation*}
\begin{array}{l}
\displaystyle \int\cdots \int_{\substack{ x_{K-1+1}  \\ \geq \cdots \geq
x_{L} \geq 0 }} \left(\sum_{j=K-1+1}^{L} x_j \right)
p_{N,L,K}\left(x_{K-1+1},\cdots x_{L}\right)dx_{K}\cdots d x_{L} \\ 
\displaystyle = (N-(K-1))(L-K+1),%
\end{array}%
\end{equation*}
and thus integrating $\bar{g}\left(\hat{\kappa}_1, \cdots, \hat{\kappa}%
_{K-1}\right) A_1 p_{N,L,K}\left(x_{K-1+1},\cdots x_{L}\right)$ leads to 
\begin{equation*}
\begin{array}{l}
\displaystyle \int\cdots \int_{\substack{ \hat{\kappa}_{K-1}\geq x_{K-1+1} 
\\ \geq \cdots \geq x_{L} \geq 0 }} \bar{g}\left(\hat{\kappa}_1, \cdots, 
\hat{\kappa}_{K-1}\right) A_1 p_{N,L,K}\left(x_{K-1+1},\cdots x_{L}\right)
dx_{K}\cdots d x_{L} \\ 
\displaystyle = 1- \sum_{i=1}^{K-1} \frac{(N-(K-1))(L-K+1)}{\hat{\kappa}_i}
\\ 
\displaystyle - \int\cdots \int_{\substack{ x_{K-1+1}  \\ \geq \cdots \geq
x_{L} \geq 0;  \\ x_{K-1+1}\geq \hat{\kappa}_{K-1} }} \bar{g}\left(\hat{
\kappa}_1, \cdots, \hat{\kappa}_{K-1}\right) A_1
p_{N,L,K}\left(x_{K-1+1},\cdots x_{L}\right) dx_{K}\cdots d x_{L} \\ 
\displaystyle = 1- \sum_{i=1}^{K-1} \frac{(N-(K-1))(L-K+1)}{\hat{\kappa}_i}
+ O\left(\exp \left(-\frac{1}{8} \hat{\kappa}_{K-1} \right)\right),%
\end{array}%
\end{equation*}
where the last equality is due to the fact that $\displaystyle %
p_{N,L,K}\left(x_{K-1+1},\cdots x_{L}\right)$ is smaller than $\exp \left(-%
\frac{1}{4} \sum_{i=K-1+1}^{L} x_{i}\right) $ when $x_K$ is large.

~\newline
Next, we consider the integration concerning the term $A_2$. Note that $%
-2\leq A_2 \leq 2 $, and for $x\leq \sqrt{\hat{\kappa}}$, we would have that
for some real number $\xi_{x/\hat{\kappa}}$ between 0 and $x/\hat{\kappa}$, 
\begin{equation*}
\begin{array}{c}
\displaystyle \left(1-\frac{x}{\hat{\kappa}}\right)^{\frac{1}{2}}=1-\frac{1}{
2}\frac{x}{\hat{\kappa}} - \frac{1}{8\left(1- \xi_{x/\hat{\kappa}
}\right)^{3/2}}\left(\frac{x}{\hat{\kappa}} \right)^2,%
\end{array}%
\end{equation*}
which leads to the fact that for large $\hat{\kappa}_{K-1}$ and $x_{K}\leq 
\sqrt{\hat{\kappa}_{K-1}}$, 
\begin{equation*}
\begin{array}{c}
\displaystyle A_2=- \sum_{i=1}^{K-1} \left(\frac{1}{\hat{\kappa}_i}
\right)^2 \left( \sum_{j=K-1+1}^{2K-1} \frac{x_j^2}{8\left(1- \xi_{x_j/\hat{
\kappa}_i }\right)^{3/2}} \right) + \widetilde{A}_2 ,%
\end{array}%
\end{equation*}
where $\widetilde{A}_2$ is a summation of terms of the form 
\begin{equation*}
\begin{array}{c}
\displaystyle \prod_{i\in I_i,j\in I_j} -\frac{1}{8\left(1- \xi_{x_j/\hat{
\kappa}_i }\right)^{3/2}} \frac{x_j^2}{\hat{\kappa}_i^2} \prod_{l\in I_l,m
\in I_m} -\frac{x_l}{2\hat{\kappa}_m}.%
\end{array}%
\end{equation*}
Therefore, when $x_{K}\leq \sqrt{\hat{\kappa}_{K-1}}$, $A_2=
O\left(\sum_{i=1}^{K-1}\frac{1}{\hat{\kappa}_i^2} \right)\left(%
\sum_{j=K-1+1}^{2K-1}{x_j^2} \right)$, and thus integrating $\bar{g}\left(%
\hat{\kappa}_1, \cdots, \hat{\kappa}_{K-1}\right) A_2
p_{N,L,K}\left(x_{K-1+1},\cdots x_{L}\right)$ leads to 
\begin{equation*}
\begin{array}{l}
\displaystyle \int\cdots \int_{\substack{ \hat{\kappa}_{K-1}\geq x_{K-1+1} 
\\ \geq \cdots \geq x_{L} \geq 0 }} \bar{g}\left(\hat{\kappa}_1, \cdots, 
\hat{\kappa}_{K-1}\right) A_2 p_{N,L,K}\left(x_{K-1+1},\cdots x_{L}\right)
dx_{K}\cdots d x_{L} \\ 
\displaystyle = \int\cdots \int_{\substack{ \sqrt{\hat{\kappa}_{K-1}}\geq
x_{K-1+1}  \\ \geq \cdots \geq x_{L} \geq 0 }} \bar{g}\left(\hat{\kappa}_1,
\cdots, \hat{\kappa}_{K-1}\right) O\left(\sum_{i=1}^{K-1}\frac{1}{\hat{%
\kappa }_i^2} \right)\left(\sum_{j=K-1+1}^{2K-1} {x_j^2} \right) \\ 
\times \displaystyle p_{N,L,K}\left(x_{K-1+1},\cdots x_{L}\right)
dx_{K}\cdots d x_{L} \\ 
\displaystyle + \int\cdots \int_{\substack{ \hat{\kappa}_{K-1} \geq
x_{K-1+1}  \\ \geq \cdots \geq x_{L} \geq 0;  \\ x_{K} \geq \sqrt{\hat{%
\kappa }_{K-1}} }} \bar{g}\left(\hat{\kappa}_1, \cdots, \hat{\kappa}%
_{K-1}\right) O(1) p_{N,L,K}\left(x_{K-1+1},\cdots x_{L}\right) dx_{K}\cdots
d x_{L} \\ 
\displaystyle = O\left(\sum_{i=1}^{K-1}\frac{1}{\hat{\kappa}_i^2} \right) +
O\left(\exp \left(-\frac{1}{8} \sqrt{\hat{\kappa}_{K-1}} \right)\right) =
O\left(\sum_{i=1}^{K-1}\frac{1}{\hat{\kappa}_i^2} \right).%
\end{array}%
\end{equation*}
Combining the above results, we have 
\begin{equation*}
\begin{array}{l}
\displaystyle \int\cdots \int_{\substack{ \hat{\kappa}_{K-1}\geq x_{K-1+1} 
\\ \geq \cdots \geq x_{L} \geq 0 }} \bar{g}\left(\hat{\kappa}_1, \cdots, 
\hat{\kappa}_{K-1}\right) (A_1+A_2) p_{N,L,K}\left(x_{K-1+1},\cdots
x_{L}\right) dx_{K}\cdots d x_{L} \\ 
\displaystyle = 1- \sum_{i=1}^{K-1} \frac{(N-(K-1))(L-K+1)}{\hat{\kappa}_i}
+ O\left(\sum_{i=1}^{K-1}\frac{1}{\hat{\kappa}_i^2} \right).%
\end{array}%
\end{equation*}
Thus, 
\begin{equation*}
\begin{array}{l}
\displaystyle \bar{g}\left(\hat{\kappa}_1, \cdots, \hat{\kappa}_{K-1}\right)
\\ 
\displaystyle =\left( 1- \sum_{i=1}^{K-1} \frac{(N-(K-1))(L-K+1)}{\hat{%
\kappa }_i} + O\left(\sum_{i=1}^{K-1}\frac{1}{\hat{\kappa}_i^2}
\right)\right)^{-1} \\ 
\displaystyle =1+ \sum_{i=1}^{K-1} \frac{(N-(K-1))(L-K+1)}{\hat{\kappa}_i} +
O\left(\sum_{i=1}^{K-1}\frac{1}{\hat{\kappa}_i^2} \right) \\ 
= \displaystyle \prod_{i=1}^{K-1} \prod_{j=K-1+1}^{L} \left(1+ \frac{N-(K-1) 
}{2\hat{\kappa}_i} + O\left(\hat{\kappa}_{i}^{-2}\right)\right).%
\end{array}%
\end{equation*}

~\newline
$\square$

Proposition \ref{largest eigenvalues} directly leads to Corollary \ref{coro}%
, which shows the form of $\displaystyle f_{\hat{\kappa}_K, \cdots, \hat{%
\kappa}_L|\hat{\kappa}_1, \cdots, \hat{\kappa}_{K-1}}^*$ when $\hat{ \kappa}%
_{K-1}$ is large.

\begin{corollary}
\label{coro} Under the specification in Proposition \ref{prop:a2}, 
for arbitrary $L\geq K\geq 2$, the approximate conditional distribution of
the smallest $(L-K+1)$ eigenvalues $\hat{\kappa}_{K},\cdots ,\hat{\kappa}%
_{L} $ given the largest $(K-1)$ eigenvalues $\hat{\kappa}_{1},\cdots ,\hat{%
\kappa}_{K-1}$, satisfies 
\begin{equation}
\begin{array}{cc}
\displaystyle f_{\hat{\kappa}_{K},\cdots ,\hat{\kappa}_{L}|\hat{\kappa}%
_{1},\cdots ,\hat{\kappa}_{K-1}}^{\ast }(x_{K-1+1},\cdots x_{L}|\hat{\kappa}%
_{1},\cdots ,\hat{\kappa}_{K-1}) &  \\ 
=\displaystyle\prod_{i=1}^{K-1}\prod_{j=K-1+1}^{L}\left( 1+\frac{N-(K-1)}{2%
\hat{\kappa}_{i}}+O\left( \hat{\kappa}_{i}^{-2}\right) \right) (1-\frac{x_{j}%
}{\hat{\kappa}_{i}})^{\frac{1}{2}} &  \\ 
\times p_{N,L,K}\left( x_{K-1+1},\cdots x_{L}\right) , & 
\end{array}
\label{eq conditional 3}
\end{equation}%
where $\hat{\kappa}_{K-1}\geq x_{K-1+1}\geq \cdots x_{L}\geq 0$.
Furthermore, for $x_{K}\leq \hat{\kappa}_{K-1}-\hat{\kappa}_{K-1}^{-\frac{%
2(1-\epsilon )}{3}}$ with $\epsilon $ a small arbitrary positive real
number, satisfies 
\begin{equation}
\begin{array}{c}
\displaystyle f_{\hat{\kappa}_{K},\cdots ,\hat{\kappa}_{L}|\hat{\kappa}%
_{1},\cdots ,\hat{\kappa}_{K-1}}^{\ast }(x_{K-1+1},\cdots x_{L}|\hat{\kappa}%
_{1},\cdots ,\hat{\kappa}_{K-1})= \\ 
~ \\ 
\displaystyle\prod_{i=1}^{K-1}\prod_{j=K-1+1}^{L}\left( 1+\frac{(N-(K-1))-{%
x_{j}}}{2\hat{\kappa}_{i}}-\tilde{f}(x_{j},\hat{\kappa}_{i})+O\left( \frac{1%
}{\hat{\kappa}_{i}^{1+\epsilon }}\right) \right) \\ 
\times p_{N,L,K}\left( x_{K-1+1},\cdots x_{L}\right) ,%
\end{array}
\label{a2}
\end{equation}%
where $\tilde{f}(x,\hat{\kappa})\geq 0$ such that for some real number
between 0 and $\displaystyle\frac{x}{\hat{\kappa}}$ denoted by $\xi _{x/\hat{%
\kappa}}$, 
\begin{equation*}
\begin{array}{c}
\tilde{f}(x,\hat{\kappa})=\frac{1}{8\left( 1-\xi _{x/\hat{\kappa}}\right)
^{3/2}}\left( \frac{x}{\hat{\kappa}}\right) ^{2}\left( 1+\frac{N-(K-1)}{2}%
\frac{1}{\hat{\kappa}}\right) +\frac{N-(K-1)}{4}\frac{x}{\hat{\kappa}^{2}}.%
\end{array}%
\end{equation*}%
For $x_{K}\geq \hat{\kappa}_{K-1}-\hat{\kappa}_{K-1}^{-\frac{2(1-\epsilon )}{%
3}}$, 
\begin{equation}
\begin{array}{c}
\displaystyle\displaystyle f_{\hat{\kappa}_{K},\cdots ,\hat{\kappa}_{L}|\hat{%
\kappa}_{1},\cdots ,\hat{\kappa}_{K-1}}^{\ast }(x_{K-1+1},\cdots x_{L}|\hat{%
\kappa}_{1},\cdots ,\hat{\kappa}_{K-1})= \\ 
~ \\ 
O\left( \ {\hat{\kappa}_{K-1}^{-\frac{1-\epsilon }{3}}}\right)
p_{N,L,K}\left( x_{K-1+1},\cdots x_{L}\right) .%
\end{array}
\label{a2-2}
\end{equation}
\end{corollary}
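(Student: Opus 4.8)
The plan is to derive the three displays from the conditional density (\ref{eq conditional 1}) by inserting the expansion of $\bar{g}$ provided by Proposition \ref{largest eigenvalues} and then controlling the product of square-root factors $\prod_{i,j}(1-x_j/\hat{\kappa}_i)^{1/2}$, splitting the argument according to how close the largest small root $x_K$ lies to the smallest retained large root $\hat{\kappa}_{K-1}$.

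First, (\ref{eq conditional 3}) is immediate. Proposition \ref{largest eigenvalues} writes $\bar{g}(\hat{\kappa}_1,\dots,\hat{\kappa}_{K-1})$ as the product $\prod_{i=1}^{K-1}\prod_{j=K}^{L}(1+\tfrac{N-(K-1)}{2\hat{\kappa}_i}+O(\hat{\kappa}_i^{-2}))$, so substituting this into (\ref{eq conditional 1}) and attaching each factor to the matching $(1-x_j/\hat{\kappa}_i)^{1/2}$ yields exactly the claimed form.

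For the bulk display (\ref{a2}) I would expand each square-root factor by Taylor's theorem with Lagrange remainder, $(1-t)^{1/2}=1-\tfrac12 t-\tfrac18(1-\xi_t)^{-3/2}t^2$ with $t=x_j/\hat{\kappa}_i$ and $\xi_t\in(0,t)$, and multiply it by the paired $\bar{g}$-factor $1+a+O(\hat{\kappa}_i^{-2})$, $a=\tfrac{N-(K-1)}{2\hat{\kappa}_i}$. Abbreviating $b=\tfrac12 x_j/\hat{\kappa}_i$ and $c=\tfrac18(1-\xi_t)^{-3/2}t^2$, the product is $1+(a-b)-(c+ab+ac)+O(\hat{\kappa}_i^{-2})$. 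Here $a-b=\tfrac{(N-(K-1))-x_j}{2\hat{\kappa}_i}$ is precisely the explicit linear term, and a direct check shows $c+ab+ac$ equals the stated $\tilde{f}(x_j,\hat{\kappa}_i)$, so the residual is of the advertised order; nonnegativity $\tilde{f}\ge 0$ is termwise since $c,ab,ac\ge 0$. The threshold $x_K\le\hat{\kappa}_{K-1}-\hat{\kappa}_{K-1}^{-2(1-\epsilon)/3}$ guarantees $1-t\ge 1-x_K/\hat{\kappa}_{K-1}$ stays bounded below, which is what keeps the residual uniform over the $(K-1)(L-K+1)$ pairs in the regime where $\tilde{f}$ is a genuine small correction.

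For the edge display (\ref{a2-2}) I would not expand but bound directly: when $x_K\ge\hat{\kappa}_{K-1}-\hat{\kappa}_{K-1}^{-2(1-\epsilon)/3}$ the binding factor obeys $(1-x_K/\hat{\kappa}_{K-1})^{1/2}=((\hat{\kappa}_{K-1}-x_K)/\hat{\kappa}_{K-1})^{1/2}\le\hat{\kappa}_{K-1}^{-1/2-(1-\epsilon)/3}$, while the remaining square-root factors are $\le 1$ and $\bar{g}\to 1$; pulling these out in front of $p_{N,L,K}$ gives the stated $O(\hat{\kappa}_{K-1}^{-(1-\epsilon)/3})$ bound. The main obstacle is the uniformity in the bulk: because $c$ is controlled only through the Lagrange point $\xi_t$, one must check that the product residual does not degrade as $t\to 1$, and the exponent $2(1-\epsilon)/3$ must be chosen so the bulk residual and the edge smallness meet. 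Once established, the explicit form with $\tilde{f}\ge 0$ is exactly what feeds the monotonicity-and-boundedness comparison of the critical values in Corollary \ref{coro2}.
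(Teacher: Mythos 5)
Your proposal is correct and follows essentially the same route as the paper: substitute the expansion of $\bar{g}$ from Proposition \ref{largest eigenvalues} into (\ref{eq conditional 1}) to get (\ref{eq conditional 3}), Taylor-expand each square-root factor with Lagrange remainder and pair it with the matching $\bar{g}$-factor to identify the linear term and $\tilde{f}=c+ab+ac$ in the bulk regime, and bound the binding factor $(1-x_{K}/\hat{\kappa}_{K-1})^{1/2}$ directly in the edge regime. Your edge bound $\hat{\kappa}_{K-1}^{-1/2-(1-\epsilon)/3}$ is in fact slightly sharper than the stated $O(\hat{\kappa}_{K-1}^{-(1-\epsilon)/3})$, and the uniformity worry you raise is harmless since $c=1-t/2-(1-t)^{1/2}\leq 1/2$ for $t\in[0,1]$.
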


\paragraph{{Proof of Corollary \protect\ref{coro}.}}

~\newline
~\newline
Substituting the result of {Proposition} \ref{largest eigenvalues} into (\ref%
{eq conditional 1}) gives equation (\ref{eq conditional 3}). Next, we prove
the remaining statements. The following result holds for some real number $%
\xi_{x/\hat{\kappa}} $ between 0 and $\frac{x}{\hat{\kappa}}$ when $x\leq 
\hat{\kappa}-\hat{ \kappa}^{-\frac{2(1-\epsilon)}{3}}$, 
\begin{equation}  \label{taylor}
\begin{array}{c}
\displaystyle \left(1-\frac{x}{\hat{\kappa}}\right)^{\frac{1}{2}}=1-\frac{1}{
2}\frac{x}{\hat{\kappa}} - \frac{1}{8\left(1- \xi_{x/\hat{\kappa}
}\right)^{3/2}}\left(\frac{x}{\hat{\kappa}} \right)^2.%
\end{array}%
\end{equation}
The expansion (\ref{taylor}) implies that for $x_{K}\leq \hat{\kappa}_{K-1}-%
\hat{\kappa}_{K-1}^{-\frac{2(1-\epsilon)}{3}}$ 
\begin{equation*}
\begin{array}{l}
\displaystyle \prod_{i=1}^{K-1}\prod_{j=K-1+1}^{L}(\hat{\kappa}_i-x_j)^{ 
\frac{1}{2}} \\ 
\displaystyle = \prod_{i=1}^{K-1}\hat{\kappa}_i^{\frac{1}{2}(L-(K-1)) }
\left(\prod_{i=1}^{K-1} \prod_{j=K-1+1}^{L} \left(1-\frac{x_j}{2\hat{%
\kappa_i }}- \frac{1}{8\left(1- \xi_{\hat{\kappa}_i/x_j}\right)^{3/2}}\left(%
\frac{x_j }{\hat{\kappa}_i} \right)^2 \right)\right)%
\end{array}%
\end{equation*}
and thus equation (\ref{eq conditional 1}) can be expressed as 
\begin{equation}  \label{eq conditional 2}
\begin{array}{l}
\displaystyle f_{\hat{\kappa}_K, \cdots, \hat{\kappa}_L|\hat{\kappa}_1,
\cdots, \hat{\kappa}_{K-1}}^*(x_{K-1+1},\cdots x_{L}|\hat{\kappa}_1, \cdots, 
\hat{\kappa}_{K-1})= \\ 
\displaystyle \bar{g}\left(\kappa_{1},\cdots
\kappa_{K}\right)\left(\prod_{i=1}^{K-1} \prod_{j=K-1+1}^{L} \left(1-\frac{
x_j}{2\hat{\kappa_i}}- \frac{1}{8\left(1- \xi_{\hat{\kappa}_i;
x_j}\right)^{3/2}}\left(\frac{x_j}{\hat{\kappa}_i} \right)^2 \right)\right)
\\ 
\times p_{N,L,K}\left(x_{K-1+1},\cdots x_{L}\right).%
\end{array}%
\end{equation}

Following the result in {Proposition} \ref{largest eigenvalues}, the
equation (\ref{eq conditional 2}) implies that

\begin{equation*}
\begin{array}{l}
\displaystyle f_{\hat{\kappa}_K, \cdots, \hat{\kappa}_L|\hat{\kappa}_1,
\cdots, \hat{\kappa}_{K-1}}^*(x_{K-1+1},\cdots x_{L}|\hat{\kappa}_1, \cdots, 
\hat{\kappa}_{K-1})= \\ 
\displaystyle \prod_{i=1}^{K-1} \prod_{j=K-1+1}^{L} \left(1+ \frac{N-(K-1)}{%
2 \hat{\kappa}_i} + O\left(\hat{\kappa}_{i}^{-2}\right)\right) \\ 
\times \displaystyle \prod_{i=1}^{K-1} \prod_{j=K-1+1}^{L} \left(1-\frac{x_j 
}{2\hat{\kappa_i}}- \frac{1}{8\left(1- \xi_{\hat{\kappa}_i/x_j}\right)^{3/2} 
}\left(\frac{x_j}{\hat{\kappa}_i} \right)^2 \right) \\ 
\displaystyle \times p_{N,L,K}\left(x_{K-1+1},\cdots x_{L}\right) \\ 
~ \\ 
\displaystyle =\prod_{i=1}^{K-1} \prod_{j=K-1+1}^{L} \left(1 + \frac{
(N-(K-1))-{x_j} }{2\hat{\kappa}_i} - \tilde{f}(x_j,\hat{\kappa}_i) +O\left( 
\frac{1}{\hat{\kappa}_i^{1+\epsilon}} \right) \right) \\ 
\times p_{N,L,K}\left(x_{K-1+1},\cdots x_{L}\right)%
\end{array}%
\end{equation*}
As for the case when $x_{K}\geq \hat{\kappa}_{K-1}-\hat{\kappa}_{K-1}^{- 
\frac{2(1-\epsilon)}{3}}$, we would have 
\begin{equation*}
\begin{array}{l}
\displaystyle \prod_{i=1}^{K-1}\prod_{j=K-1+1}^{L}\left(1-\frac{x_j}{\hat{
\kappa}_i}\right)^{ \frac{1}{2}} = O\left(\ {\hat{\kappa}_{K-1}^{-\frac{
1-\epsilon}{3}} } \right),%
\end{array}%
\end{equation*}
which leads to equation (\ref{a2-2}). ~\newline
$\square$ ~\newline

Corollary \ref{coro} derives the form of $\displaystyle f_{\hat{\kappa}%
_{K},\cdots ,\hat{\kappa}_{L}|\hat{\kappa}_{1},\cdots ,\hat{\kappa}%
_{K-1}}^{\ast }$ when $\hat{\kappa}_{K-1}$ is large, and establishes a link
between the approximate conditional density and the density function of
eigenvalues of a central Wishart matrix. When $L=K=2$, the results are the
same as we have previously discussed. Similarly, it is evident from equation
(\ref{eq conditional 3}) and $\hat{\kappa}_{K-1}\rightarrow _{p}\infty $
from Proposition \ref{prop:additional}.\ref{prop:additional 2} that the
limit of $f_{\hat{\kappa}_{K},\cdots ,\hat{\kappa}_{L}|\hat{\kappa}%
_{1},\cdots ,\hat{\kappa}_{K-1}}^{\ast }(\cdot )$ is $p_{N,L,K}(\cdot )$.
This limiting result generalizes the limiting property of $f_{\hat{\kappa}%
_{2}|\hat{\kappa}_{1}}^{\ast }$ from \nocite{guggenberger2019more}%
Guggenberger et al. (2019) to $\displaystyle f_{\hat{\kappa}_{K},\cdots ,%
\hat{\kappa}_{L}|\hat{\kappa}_{1},\cdots ,\hat{\kappa}_{K-1}}^{\ast }$. 
\nocite{guggenberger2019more}Guggenberger et al. (2019) state that the
quantiles of the distribution with density $f_{\hat{\kappa}_{2}|\hat{\kappa}%
_{1}}^{\ast }$ are strictly increasing in $\hat{\kappa}_{1}$ based on
numerical integration. Corollary \ref{coro2} provides an analytical proof
for the boundedness by the limiting distribution and the monotonicity of
critical values if we ignore terms of order $O(\hat{\kappa}_{K-1}^{-1})$.

\begin{corollary}
\label{coro2} Denote $t_{1}$ the sum of $\hat{\kappa}_{K}\geq \cdots \geq 
\hat{\kappa}_{L}$ generated from a distribution with density function $%
\displaystyle f_{\hat{\kappa}_{K},\cdots ,\hat{\kappa}_{L}|\hat{\kappa}%
_{1},\cdots ,\hat{\kappa}_{K-1}}^{\ast }(\cdot |\hat{\kappa}_{1},\cdots ,%
\hat{\kappa}_{K-1})$ specified in Corollary \ref{coro}, and $t_{2}$ follows $%
\chi ^{2}\left( (L-(K-1))(N-(K-1))\right) $-distribution.  
Denote the $1-\alpha $ quantile of $t_{i}$, $q_{1-\alpha ,t_{i}},i=1,2$.
There exists $\tilde{q}_{1-\alpha ,t_{1}}$ such that $\tilde{q}_{1-\alpha
,t_{1}}=q_{1-\alpha -\tilde{\eta}_{\alpha },t_{1}}$, $0\leq \tilde{\eta}%
_{\alpha }\leq 1-\alpha $, $\tilde{\eta}_{\alpha }=O(\hat{\kappa}%
_{K-1}^{-1}) $, $\tilde{q}_{1-\alpha ,t_{1}}$ is increasing in $\hat{\kappa}%
_{i},i\leq K-1 $, and 
\begin{equation*}
\begin{array}{c}
\tilde{q}_{1-\alpha ,t_{1}}<q_{1-\alpha ,t_{2}}.%
\end{array}%
\end{equation*}%
Additionally, when $\hat{\kappa}_{K-1}$ is large, for small $\alpha $'s, $%
q_{1-\alpha ,t_{1}}<q_{1-\alpha ,t_{2}}$, and $q_{1-\alpha ,t_{1}}$ is
increasing in $\hat{\kappa}_{i},i\leq K-1$.
\end{corollary}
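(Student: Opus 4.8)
The plan is to compare the law of $t_1$, governed by the approximate conditional density of Corollary \ref{coro}, with that of $t_2$ by exploiting the explicit multiplicative structure that separates the weight $\prod_{i=1}^{K-1}\prod_{j=K}^{L}(1-x_j/\hat{\kappa}_i)^{1/2}$ from the central Wishart density $p_{N,L,K}$. First I would pin down $t_2$: since $p_{N,L,K}$ is the joint eigenvalue density of a central Wishart matrix $W_{L-(K-1)}(N-(K-1),I_{L-(K-1)})$, the sum of its eigenvalues equals the trace of that matrix and hence is exactly $\chi^2\big((L-(K-1))(N-(K-1))\big)$-distributed. Thus $t_2$ is the law of $\sum_{j=K}^{L} x_j$ when $(x_K,\dots,x_L)\sim p_{N,L,K}$, while $t_1$ is the law of the same sum under the tilted density $f^{*}\propto \prod_{i=1}^{K-1}\prod_{j=K}^{L}(1-x_j/\hat{\kappa}_i)^{1/2}\,p_{N,L,K}$ of equation (\ref{eq conditional 3}); the prefactor $1+\tfrac{N-(K-1)}{2\hat\kappa_i}+O(\hat\kappa_i^{-2})$ there does not depend on $x$ and is absorbed into the normalizing constant $\bar g$.

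Next I would establish the monotonicity and boundedness for this \emph{idealized} tilted density. For monotonicity, fix all arguments except one $\hat\kappa_i$ and compare $\hat\kappa_i<\hat\kappa_i'$: the likelihood ratio $f^{*}_{\hat\kappa_i'}/f^{*}_{\hat\kappa_i}\propto\prod_{j}\big((1-x_j/\hat\kappa_i')/(1-x_j/\hat\kappa_i)\big)^{1/2}$ is, coordinate by coordinate, increasing in each $x_j$ on $[0,\hat\kappa_{K-1}]$, since $\tfrac{d}{dx}\tfrac{\hat\kappa_i'-x}{\hat\kappa_i-x}=(\hat\kappa_i'-\hat\kappa_i)/(\hat\kappa_i-x)^2>0$. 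A coordinatewise-increasing likelihood ratio places the family in (multivariate) likelihood-ratio order, which implies the usual stochastic order; applying this to the coordinatewise-increasing functional $\mathbf{1}\{\sum_j x_j>s\}$ yields $P_{\hat\kappa_i'}(t_1>s)\ge P_{\hat\kappa_i}(t_1>s)$ for every $s$, i.e. the quantiles of $t_1$ are nondecreasing in each $\hat\kappa_i$. For boundedness I would combine this with the limit: as all $\hat\kappa_i\to\infty$ the weight tends to one uniformly on compacta, so $f^{*}\to p_{N,L,K}$ and $t_1\Rightarrow t_2$, forcing $q_{1-\alpha,t_1}\uparrow q_{1-\alpha,t_2}$. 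Since the quantile is nondecreasing in each $\hat\kappa_i$ and converges up to the $\chi^2$ quantile, it stays strictly below it at every finite configuration (strict because the weight is $<1$ on a positive-measure set). This reproduces, for general $L\ge K\ge2$, the ``fatter tails, bounded by the limit'' picture of Guggenberger et al. (2019).

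The remaining work is to pass from the idealized density (\ref{eq conditional 3}) to the actual conditional density, which agrees with it only up to the higher-order corrections of (\ref{a2})--(\ref{a2-2}); the regime $\hat\kappa_{K-1}$ large is legitimate because $\hat\kappa_{K-1}\to_p\infty$ by Proposition \ref{prop:additional}.\ref{prop:additional 2}. I would split the support at $x_K=\hat\kappa_{K-1}-\hat\kappa_{K-1}^{-2(1-\epsilon)/3}$: on the bulk region (\ref{a2}) replaces each factor by $1+\tfrac{(N-(K-1))-x_j}{2\hat\kappa_i}-\tilde f(x_j,\hat\kappa_i)+O(\hat\kappa_i^{-1-\epsilon})$ with $\tilde f\ge0$, a perturbation of relative size $O(\hat\kappa_{K-1}^{-1})$, while on the edge region (\ref{a2-2}) bounds the density by $O(\hat\kappa_{K-1}^{-(1-\epsilon)/3})\,p_{N,L,K}$, whose integral against the light-tailed $p_{N,L,K}$ is itself $O(\hat\kappa_{K-1}^{-1})$ since that region carries exponentially small Wishart mass. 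Integrating both estimates gives $\sup_s|P(t_1>s)-P_{\mathrm{ideal}}(t_1>s)|=O(\hat\kappa_{K-1}^{-1})$, so taking $\tilde\eta_\alpha$ equal to this discrepancy makes $\tilde q_{1-\alpha,t_1}=q_{1-\alpha-\tilde\eta_\alpha,t_1}$ coincide with the idealized quantile, which by the previous paragraph is increasing in $\hat\kappa_i$ and strictly below $q_{1-\alpha,t_2}$. The ``additionally'' clause then follows because for large $\hat\kappa_{K-1}$ and small $\alpha$ the strict gap $q_{1-\alpha,t_2}-q_{1-\alpha,\mathrm{ideal}}$ exceeds the $O(\hat\kappa_{K-1}^{-1})$ adjustment, so the unadjusted $q_{1-\alpha,t_1}$ inherits both properties.

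I expect the error-control step to be the main obstacle, on two counts: verifying that the slow edge-region rate $O(\hat\kappa_{K-1}^{-(1-\epsilon)/3})$ integrates to the claimed $O(\hat\kappa_{K-1}^{-1})$, which hinges on the exponential decay of $p_{N,L,K}$ on that region, and converting a uniform density/CDF perturbation of order $O(\hat\kappa_{K-1}^{-1})$ into a quantile-\emph{level} shift $\tilde\eta_\alpha=O(\hat\kappa_{K-1}^{-1})$ that simultaneously respects the $\hat\kappa_i$-monotonicity established for the idealized family.
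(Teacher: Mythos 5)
Your overall architecture --- identify $t_{2}$ via Muirhead's trace-of-Wishart result, compare the tilted eigenvalue density with $p_{N,L,K}$, and absorb the remaining discrepancy into an $O(\hat{\kappa}_{K-1}^{-1})$ level shift $\tilde{\eta}_{\alpha}$ --- matches the paper's intent, but two steps need repair. First, your monotonicity argument works with the \emph{normalized} tilted density and asserts that a coordinatewise-increasing likelihood ratio yields the multivariate likelihood-ratio order and hence the usual stochastic order. That first implication is false in general: the multivariate likelihood-ratio order requires the lattice inequality $f(x)g(y)\le f(x\wedge y)g(x\vee y)$, and a coordinatewise-increasing ratio delivers it only when the reference density is MTP$_{2}$. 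Here the step can be rescued --- the ratio factorizes into univariate increasing functions and the Vandermonde factor $\prod_{i<j}(x_{i}-x_{j})$ makes $p_{N,L,K}$ MTP$_{2}$ on the ordered cone --- but you must verify this, and the same FKG-type inequality is then needed to turn ``the weight is $<1$ on a positive-measure set'' into a \emph{strict} quantile bound for a normalized density (normalization re-inflates the density, so a pointwise bound on the weight alone proves nothing). The paper sidesteps all of this by never normalizing: it defines $\tilde{q}_{1-\alpha ,t_{1}}$ as the pseudo-quantile of the sub-density $\tilde{f}^{\ast }=\prod_{i,j}(1-x_{j}/\hat{\kappa}_{i})^{1/2}p_{N,L,K}$, for which $\tilde{f}^{\ast }\le p_{N,L,K}$ and the pointwise monotonicity of $(1-x_{j}/\hat{\kappa}_{i})^{1/2}$ in $\hat{\kappa}_{i}$ give both claims literally by construction, and the level shift $\tilde{\eta}_{\alpha }=\eta _{\hat{\kappa}_{K-1}}\alpha $ falls out of the normalizing identity $f^{\ast }=(1+\eta _{\hat{\kappa}_{K-1}})\tilde{f}^{\ast }$ rather than from a CDF-perturbation bound. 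Relatedly, (\ref{a2})--(\ref{a2-2}) are expansions of the \emph{same} density as (\ref{eq conditional 3}), not corrections to a separate ``actual'' law, so your error-control step is chasing a second-order discrepancy that is not where $\tilde{\eta}_{\alpha}$ comes from.

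Second, and more seriously, your proof of the ``additionally'' clause does not go through: you argue that the gap $q_{1-\alpha ,t_{2}}-q_{1-\alpha ,\mathrm{ideal}}$ exceeds the $O(\hat{\kappa}_{K-1}^{-1})$ adjustment, but that gap is itself $O(\hat{\kappa}_{K-1}^{-1})$ (the tilted law converges to $p_{N,L,K}$ at exactly that rate), so you are comparing two quantities of the same order without controlling constants, and the conclusion could fail. The paper's mechanism is different and is the missing idea: for small $\alpha $ the relevant region has $x_{L}\ge N-(K-1)$, and there the first-order term $\frac{(N-(K-1))-x_{j}}{2\hat{\kappa}_{i}}-\tilde{f}(x_{j},\hat{\kappa}_{i})$ in (\ref{a2}) is negative and dominates the $O(\hat{\kappa}_{i}^{-(1+\epsilon )})$ remainder, so the exact conditional density satisfies $f^{\ast }<p_{N,L,K}$ pointwise on that region; the strict inequality $q_{1-\alpha ,t_{1}}<q_{1-\alpha ,t_{2}}$ for the \emph{unadjusted} quantile, and its monotonicity in $\hat{\kappa}_{i}$, then follow directly. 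Without an argument of this sign-of-the-first-order-correction type, the final clause of Corollary \ref{coro2} remains unproved.
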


\paragraph{{Proof of Corollary \protect\ref{coro2}.}}

These are direct results of Corollary \ref{coro} and Theorem 3.2.20 in 
\nocite{muirhead2009aspects}Muirhead (2009). Theorem 3.2.20 in \nocite%
{muirhead2009aspects}Muirhead (2009) shows that the sum of eigenvalues $\ell
_{1}\geq \cdots \geq \ell _{L-K+1}$ of a $W_{L-(K-1)}(N-(K-1),I)$%
-distributed matrix follows a $\chi ^{2}\left( (L-(K-1))(N-(K-1))\right) $%
-distribution. \newline
~\newline
Equation (\ref{eq conditional 3}) implies that when $\hat{\kappa}_{K-1}$ is
large, 
\begin{equation*}
\begin{array}{cc}
\displaystyle\displaystyle f_{\hat{\kappa}_{K},\cdots ,\hat{\kappa}_{L}|\hat{%
\kappa}_{1},\cdots ,\hat{\kappa}_{K-1}}^{\ast }(x_{K-1+1},\cdots x_{L}|\hat{%
\kappa}_{1},\cdots ,\hat{\kappa}_{K-1}) &  \\ 
=(1+\eta _{\hat{\kappa}_{K-1}})\tilde{f}^{\ast }(x_{K-1+1},\cdots x_{L}|\hat{%
\kappa}_{1},\cdots ,\hat{\kappa}_{K-1}), & 
\end{array}%
\end{equation*}%
with $\eta _{\hat{\kappa}_{K-1}}>0,\eta _{\hat{\kappa}_{K-1}}=O\left( \hat{%
\kappa}_{K-1}^{-1}\right) $ and 
\begin{equation*}
\begin{array}{cc}
\tilde{f}^{\ast }(x_{K-1+1},\cdots x_{L}|\hat{\kappa}_{1},\cdots ,\hat{\kappa%
}_{K-1})=\displaystyle\prod_{i=1}^{K-1}\prod_{j=K-1+1}^{L}(1-\frac{x_{j}}{%
\hat{\kappa}_{i}})^{\frac{1}{2}}p_{N,L,K}\left( x_{K-1+1},\cdots
x_{L}\right) . & 
\end{array}%
\end{equation*}%
Denote 
\begin{equation*}
\displaystyle\tilde{q}_{1-\alpha ,t_{1}}=\inf \left\{ \tilde{q}\geq
0:\int\limits_{\sum_{j=K-1+1}^{L}x_{j}\geq \tilde{q}}\tilde{f}^{\ast
}(x_{K-1+1},\cdots x_{L}|\hat{\kappa}_{1},\cdots ,\hat{\kappa}_{K-1})\leq
\alpha \right\} .
\end{equation*}%
By construction, $\tilde{q}_{1-\alpha ,t_{1}}$ is increasing in $\hat{\kappa}%
_{i},i\leq K-1$, $\tilde{q}_{1-\alpha ,t_{1}}<q_{1-\alpha ,t_{2}}$ and $%
\tilde{q}_{1-\alpha ,t_{1}}={q}_{1-(1+\eta _{\hat{\kappa}_{K-1}})\alpha
,t_{1}}$. ~\newline
~\newline
As for the remaining results, note that when $x_{L}\geq N-(K-1)$, 
\begin{equation*}
\begin{array}{c}
\displaystyle\prod_{i=1}^{K-1}\prod_{j=K-1+1}^{L}\left( 1-\frac{%
x_{j}-(N-(K-1))}{2\hat{\kappa}_{i}}-\tilde{f}(x_{j},\hat{\kappa}_{i})\right)
-1<0,%
\end{array}%
\end{equation*}%
and as the term on the left side is dominating terms of order $O\left( \hat{%
\kappa}_{K-1}^{-(1+\epsilon )}\right) $, when $\hat{\kappa}_{K-1}$ is large,
equations (\ref{a2})-(\ref{a2-2}) implies that 
\begin{equation*}
\begin{array}{c}
\displaystyle f_{\hat{\kappa}_{K},\cdots ,\hat{\kappa}_{L}|\hat{\kappa}%
_{1},\cdots ,\hat{\kappa}_{K-1}}^{\ast }(x_{K-1+1},\cdots x_{L}|\hat{\kappa}%
_{1},\cdots ,\hat{\kappa}_{K-1}) \\ 
\displaystyle<p_{N,L,K}\left( x_{K-1+1},\cdots x_{L}\right) .%
\end{array}%
\end{equation*}%
The above inequality implies $q_{1-\alpha ,t_{1}}<q_{1-\alpha ,t_{2}}$.
Similarly, when $x_{L}\geq N-(K-1)$, the derivative $\displaystyle\left( 1+%
\frac{N-(K-1)}{2\hat{\kappa}_{i}}+O\left( \hat{\kappa}_{i}^{-2}\right)
\right) (1-\frac{x_{j}}{\hat{\kappa}_{i}})^{\frac{1}{2}}$ with respect to $%
\hat{\kappa}_{i}$ is positive when $\hat{\kappa}_{i}$ is large, and jointly
with equation (\ref{eq conditional 3}) it implies that $q_{1-\alpha ,t_{1}}$
is increasing in $\hat{\kappa}_{i},i\leq K-1$. ~\newline
$\square $

Corollary \ref{coro2} further verifies the numerical analysis in \nocite%
{guggenberger2019more}Guggenberger et al. (2019) and extends to a general
setting. Results in Section \ref{Intermediate} essentially lead to the proof
of Proposition \ref{prop:sFAR}, and we briefly summarize it in \ref%
{sec:bound}.

\subsection{Proof of Proposition \protect\ref{prop:sFAR}.}

\label{sec:bound} The first result that the statistic sFAR is equal to the
sum of $K$ smallest eigenvalues follows directly the proof of Proposition %
\ref{prop:additional}.\ref{prop:additional 1}. Next, we prove that 
\begin{equation*}
\begin{array}{c}
\displaystyle \lim_{T\rightarrow \infty }\text{sFAR(}\lambda _{1}^{0})\prec
\chi ^{2}(K(N-(K-1)).%
\end{array}%
\end{equation*}

As suggested by Proposition \ref{prop:additional}.\ref{prop:additional 1},
we consider the sum of the $K$ smallest roots of $\left\vert \mu I_{2K-1} -
\Xi^{\prime }\Xi\right\vert =0$ , where 
\begin{equation*}
\begin{array}{c}
\Xi^{\prime }\Xi \sim \mathcal{W}_{2K-1}\left(N, I_{2K-1}, \mathcal{M}
^{\prime }\mathcal{M}\right).%
\end{array}%
\end{equation*}
Denote $\hat{\kappa}_{1}\geq \cdots \geq \hat{\kappa}_{L}$ eigenvalues of $%
\Xi^{\prime }\Xi$, $t$ the sum of the K smallest eigenvalues, $\hat{\kappa}%
_{K}, \cdots, \hat{\kappa}_{L}$, and eigenvalues of $\mathcal{M}^{\prime }%
\mathcal{M}$ satisfy that $\kappa_1> \kappa_2 > \cdots > \kappa_{K-1} >
\kappa_K=\kappa_{K+1}=\cdots=0$ and $\kappa_{K-1}$ goes to infinity. Then
Proposition \ref{prop:additional}.\ref{prop:additional 3} suggests that
under the null hypothesis $\lambda_{1}=\lambda_1^0$, 
\begin{equation*}
\begin{array}{c}
\lim\limits_{T\rightarrow \infty} P\left(\text{sFAR(}\lambda _{1}^{0}) >
q_{1-\alpha, t_2} \right) \leq \lim\limits_{T\rightarrow \infty} P\left(t >
q_{1-\alpha, t_2} \right)%
\end{array}%
\end{equation*}
with $q_{1-\alpha, t_2}$ the $1-\alpha$ quantile of of $\chi^2(K(N-(K-1)))$.
As $\kappa_{K-1} \rightarrow \infty$, Proposition \ref{prop:additional}.\ref%
{prop:additional 2} shows that $\hat{\kappa}_{K-1}^{-1} \rightarrow_p 0$,
and thus Corollary\ref{coro2} implies that 
\begin{equation*}
\begin{array}{c}
\lim\limits_{T\rightarrow \infty} P\left(t > q_{1-\alpha, t_2}
\right)=\lim\limits_{T\rightarrow \infty} E\left(\alpha + \tilde{\eta}%
_\alpha\right) =\alpha.%
\end{array}%
\end{equation*}
Therefore, $\lim\limits_{T\rightarrow \infty} P\left(\text{sFAR(}\lambda
_{1}^{0}) > q_{1-\alpha, t_2} \right) \leq \alpha$, and hence $\displaystyle %
\lim_{T\rightarrow \infty }\text{sFAR(}\lambda _{1}^{0})\prec \chi
^{2}(K(N-(K-1)).$

~\newline
$\square$

~\newline

\subsection{{Proof of Proposition \protect\ref{prop:sFAR far away}.}}

The smallest $K$ roots are calculated from the polynomial 
\begin{equation*}
\begin{array}{lc}
\left\vert \left( 
\begin{array}{cc}
I_{K} & 0 \\ 
-\lambda _{1}^{0\prime } & 0 \\ 
0 & I_{K-1}%
\end{array}
\right) ^{\prime }\left[ \mu \hat{\Psi}-\hat{\Phi}^{\prime }\hat{\Sigma}%
^{-1} \hat{\Phi}\right] \left( 
\begin{array}{cc}
I_{K} & 0 \\ 
-\lambda _{1}^{0\prime } & 0 \\ 
0 & I_{K-1}%
\end{array}
\right) \right\vert & =0,%
\end{array}%
\end{equation*}%
where $\hat{\Psi} = \hat{\Psi}^\prime = \left( 
\begin{array}{cc}
\hat{\Psi}_{X} & \hat{\Psi}_{XV} \\ 
\hat{\Psi}_{VX} & \hat{\Psi}_{V}%
\end{array}
\right)$ and $\hat{\Psi}_{X}$ and $\hat{\Psi}_{V}$ are $K\times K$
submatrices of $\hat{\Psi}$, We specify $\lambda _{1}^{0}=c\bar{\lambda}%
_{1}^{0},$ with $c=(\lambda _{1}^{0\prime }\lambda _{1}^{0})^{\frac{1}{2}},$
so $\bar{\lambda}_{1}^{0\prime }\bar{\lambda}_{1}^{0}=1$ and $\bar{\lambda}%
_{1,\perp }^{0}:K\times (K-1),$ $\bar{\lambda}_{1,\perp }^{0\prime }\bar{%
\lambda}_{1}^{0}\equiv 0,$ $\bar{\lambda}_{1,\perp }^{0\prime }\bar{\lambda}%
_{1,\perp }^{0}\equiv I_{K-1}.$ Hence, $\text{diag}(( \bar{\lambda}_{1,\perp
}^{0} \text{ }\vdots \text{ } \bar{\lambda}_{1}^{0}), I_{K-1})$ is an
invertible orthonormal matrix. Pre- and post multiplying the matrices in the
determinant with it does therefore not affect the characteristic roots of he
following polynomial:%
\begin{equation*}
\begin{array}{ll}
\displaystyle \left\vert \left( 
\begin{array}{ccc}
\bar{\lambda}_{1,\perp }^{0} & \bar{\lambda}_{1}^{0} & 0 \\ 
0 & 0 & I_{K-1}%
\end{array}%
\right)^{\prime }\left( 
\begin{array}{cc}
I_{K} & 0 \\ 
-\lambda _{1}^{0^{\prime }} & 0 \\ 
0 & I_{K-1}%
\end{array}
\right) ^{\prime } \left[ \mu \hat{\Psi}-\hat{\Phi}^{\prime }\hat{\Sigma}%
^{-1} \hat{\Phi}\right] \left( 
\begin{array}{cc}
I_{K} & 0 \\ 
-\lambda _{1}^{0\prime } & 0 \\ 
0 & I_{K-1}%
\end{array}
\right) \right. &  \\ 
\times \left. \left( 
\begin{array}{ccc}
\bar{\lambda}_{1,\perp }^{0} & \bar{\lambda}_{1}^{0} & 0 \\ 
0 & 0 & I_{K-1}%
\end{array}%
\right) \right\vert =0, & 
\end{array}%
\end{equation*}%
which can be rewritten as 
\begin{equation*}
\begin{array}{ll}
\left\vert \left( 
\begin{array}{ccc}
1 & 0 & 0 \\ 
0 & -c & 0 \\ 
0 & 0 & I_{2(K-1)-1}%
\end{array}
\right) ^{\prime }\left( 
\begin{array}{ccc}
\bar{\lambda}_{1,\perp }^{0} & -\bar{\lambda}_{1}^{0}/c & 0 \\ 
0 & 1 & 0 \\ 
0 & 0 & I_{K-1}%
\end{array}
\right) ^{\prime } \right. &  \\ 
\left. \times \left[ \mu \hat{\Psi}-\hat{\Phi}^{\prime }\hat{\Sigma}^{-1} 
\hat{\Phi}\right] \left( 
\begin{array}{ccc}
\bar{\lambda}_{1,\perp }^{0} & -\bar{\lambda}_{1}^{0}/c & 0 \\ 
0 & 1 & 0 \\ 
0 & 0 & I_{K-1}%
\end{array}
\right) \left( 
\begin{array}{ccc}
1 & 0 & 0 \\ 
0 & -c & 0 \\ 
0 & 0 & I_{2(K-1)-1}%
\end{array}
\right) \right\vert & =0.%
\end{array}%
\end{equation*}%
Hence, when $c$ goes to infinity, the characteristic polynomial becomes:%
\begin{equation*}
\begin{array}{rl}
\left\vert \left( 
\begin{array}{cc}
\bar{\lambda}_{1,\perp }^{0} & 0 \\ 
0 & I_{K}%
\end{array}
\right)^{\prime }\left[ \mu \hat{\Psi}-\hat{\Phi}^{\prime }\hat{\Sigma}^{-1} 
\hat{\Phi}\right] \left( 
\begin{array}{cc}
\bar{\lambda}_{1,\perp }^{0} & 0 \\ 
0 & I_{K}%
\end{array}
\right) \right\vert & =0.%
\end{array}%
\end{equation*}
Since $\hat{\Phi}=(\hat{d}$ $\vdots $ $\hat{\beta})$, we have $\hat{\Phi}%
\times\text{diag}(\bar{\lambda}_{1,\perp }^{0}, I_{K-1}) =( \text{ }\hat{d}%
\bar{\lambda}_{1,\perp }^{0}\text{ }\vdots \text{ }\hat{\beta}).$ The subset
AR statistic now equals the sum of the $K$ smallest root of the above
characteristic polynomial which depend on $\bar{\lambda}_{1,\perp }^{0}.$
For example, when $\bar{\lambda}_{1}=e_{1,k},$ $\bar{ \lambda}_{1,\perp
}^{0}=\binom{0}{I_{k-1}}$ so $\hat{d}\bar{\lambda}_{1,\perp }^{0}=\hat{d}%
_{2} $ etc.

Let $\hat{\Psi}(\bar{\lambda}_{1,\perp }^{0}) = \left( 
\begin{array}{cc}
\bar{\lambda}_{1,\perp }^{0} & 0 \\ 
0 & I_{K}%
\end{array}%
\right)^{\prime }\hat{\Psi}\left( 
\begin{array}{cc}
\bar{\lambda}_{1,\perp }^{0} & 0 \\ 
0 & I_{K}%
\end{array}%
\right)$. Then the above characteristic polynomial can be rewritten, by pre-
and post multiplying the matrices in the determinant with $\hat{\Psi}(\bar{%
\lambda}_{1,\perp }^{0})^{-1/2\prime} $ and $\hat{\Psi}(\bar{\lambda}%
_{1,\perp }^{0})^{-1/2}$ respectively, as 
\begin{equation*}
\begin{array}{rl}
\left\vert \mu I_{2K-1} -\hat{\Psi}(\bar{\lambda}_{1,\perp
}^{0})^{-1/2\prime}( \text{ }\hat{d}\bar{\lambda} _{1,\perp }^{0}\text{ }%
\vdots \text{ }\hat{\beta})^{\prime }\hat{\Sigma}^{-1} ( \text{ }\hat{d}\bar{%
\lambda} _{1,\perp }^{0}\text{ }\vdots \text{ }\hat{\beta})\hat{\Psi}(\bar{%
\lambda}_{1,\perp }^{0})^{-1/2} \right\vert & =0.%
\end{array}%
\end{equation*}
The lower $K\times K$ principal submatrix of 
\begin{equation*}
\hat{\Psi}(\bar{\lambda}_{1,\perp }^{0})^{-1/2\prime}( \text{ }\hat{d}\bar{%
\lambda}_{1,\perp }^{0}\text{ }\vdots \text{ }\hat{\beta})^{\prime }\hat{%
\Sigma}^{-1} ( \text{ }\hat{d}\bar{\lambda}_{1,\perp }^{0}\text{ }\vdots 
\text{ }\hat{\beta})\hat{\Psi}(\bar{\lambda}_{1,\perp }^{0})^{-1/2}
\end{equation*}
by construction is 
\begin{equation*}
\hat{\Psi}_{V}^{-1/2\prime } \hat{\beta}^\prime \hat{\Sigma}^{-1} \hat{\beta}%
\hat{\Psi}_{V}^{-1/2},
\end{equation*}
and thus Cauchy's interlacing inequality implies the sum of the K smallest
roots of the above polynomial is bounded from below by the minimum
eigenvalue of $\hat{\Psi}_{V}^{-1/2\prime } \hat{\beta}^\prime \hat{\Sigma}%
^{-1} \hat{\beta}\hat{\Psi}_{V}^{-1/2}$. Therefore, the limit sFAR statistic
is bounded from below uniformly by the minimum eigenvalue of $\hat{\Psi}%
_{V}^{-1/2\prime } \hat{\beta}^\prime \hat{\Sigma}^{-1} \hat{\beta}\hat{\Psi}%
_{V}^{-1/2}$.

~\newline
$\square$

~\newline

 ~\newpage
\title{{\Large Online Supplementary Appendix: }\\
	Identification Robust Inference for the Risk Premium in Term Structure
	Models }
\author{Frank Kleibergen\thanks{
		Amsterdam School of Economics, University of Amsterdam, Roetersstraat 11,
		1018 WB Amsterdam, The Netherlands. Email: f.r.kleibergen@uva.nl.} \and %
	Lingwei Kong\thanks{
		Faculty of Economics and Business, University of Groningen, Nettelbosje 2,
		9747 AE Groningen, The Netherlands. Email: l.kong@rug.nl. } }
\date{}
\maketitle
\setcounter{page}{1}
\thispagestyle{empty}\bigskip

\newpage \appendix
\renewcommand{\theequation}{\thesection.\arabic{equation}}

\section{Comparisons of settings of term structure model}

The framework we use is comparable to other models, and our proposed tests
can be easily adapted to these models. We provide several examples.

\subsection{Linear asset pricing models}

With extra restrictions that $\Phi=0, \lambda_{1}=0$ and $g^{(n-1)}(\cdot)$
is a constant function, the first two equations in Assumption 2.1 are
equivalent to the linear asset pricing model from Kleibergen et al. (2022)%
\nocite{kleibergen2019identification}: 
\begin{equation*}
	\begin{array}{l}
		X_t = \mu + v_{t}, \\ 
		r_{t+1, n-1}= c_0+ \beta^{(n-1)^{\prime }} \lambda_0 +\beta^{(n-1)^{\prime
		}} v_{t+1} +e_{t+1,n-1}.%
	\end{array}%
\end{equation*}
With non-zero $\lambda_1$, the latter equation would be 
\begin{equation*}
	\begin{array}{c}
		r_{t+1, n-1}= c_0+ \beta^{(n-1)^{\prime }} \left(\lambda_0+ \lambda_1
		X_t\right) +\beta^{(n-1)^{\prime }} v_{t+1} +e_{t+1,n-1},%
	\end{array}%
\end{equation*}
which indicates our approach can be used for linear asset pricing models
with time-varying risk premia, $\lambda_0+ \lambda_1 X_t$.

\subsection{Dynamic affine term structure Models}

\subsubsection{Adrian et al. (2013)}

\nocite{adrian2013pricing} \label{s} This section discusses the basic
framework of \nocite{adrian2013pricing}Adrian et al. (2013). Consider a $%
K\times 1$ vector of state variables $X_{t}$ (VAR(1)): 
\begin{equation}
	\begin{array}{c}
		X_{t+1}=\mu +\Phi X_{t}+v_{t+1}.\label{s1}%
	\end{array}%
\end{equation}%
The pricing kernel is assumed to be exponential affine in innovation factors 
$v_{t}\sim _{i.i.d}N(0,\Sigma _{v})$ such that 
\begin{equation*}
	\begin{array}{c}
		M_{t+1}=\exp \left( -r_{t}-\frac{1}{2}\lambda _{t}^{\prime }\lambda
		_{t}-\lambda _{t}^{\prime }\Sigma _{v}^{-1/2}v_{t+1}\right) ,\label{s2}%
	\end{array}%
\end{equation*}%
where the market price of risk $\lambda _{t}$ is also an affine function in $%
X_{t}$, 
\begin{equation}
	\begin{array}{c}
		\lambda _{t}=\Sigma _{v}^{-1/2}\left( \lambda _{0}+\lambda _{1}v_{t}\right)
		. \label{s3}%
	\end{array}%
\end{equation}%
\nocite{adrian2013pricing}Adrian et al. (2013) employ, instead of yields,
one-period excess holding returns $R_{t+1,n}:=R_{t+1,\tau _{n}}$ ($%
R_{t+1,\tau _{n}}$ denotes the log excess holding return of a bond maturing
in $\tau _{n}$ periods) for analysis: 
\begin{equation*}
	\begin{array}{c}
		R_{t+1,\tau _{n}}=\ln P_{t+1,\tau _{n}}-\ln P_{t,\tau _{n}+1}-r_{t},\label%
		{eq:excess returns}%
	\end{array}%
\end{equation*}%
where $r_{t}=\ln P_{t,1}$ represents the continuously compounded risk-free
rate. The property of the pricing kernel such that $P_{t,\tau _{n}+1}=%
\mathbb{E}M_{t+1}P_{t+1,\tau _{n}}$ implies that 
\begin{equation}
	\begin{array}{c}
		\mathbb{E}_{t}\exp \left( R_{t+1,\tau _{n}}-\frac{1}{2}\lambda _{t}^{\prime
		}\lambda _{t}-\lambda _{t}^{\prime }\Sigma _{v}^{-1/2}v_{t+1}\right) =1. %
		\label{eq:r}%
	\end{array}%
\end{equation}%
%
%
%
From the moment generating function of a normal distribution and the
assumption that $\{R_{t+1},v_{t+1}\}$ are jointly normal,\footnote{$E\left(
	e^{t^{\prime }(X+Y)}\right) =e^{t^{\prime }\mu _{X+Y}+\frac{1}{2}t^{\prime
		}\Sigma _{X+Y}t}$} we know from (\ref{eq:r}) that 
\begin{equation}
	\begin{array}{l}
		E_{t}\left( R_{t+1,\tau }\right) =E_{t}\left( \frac{1}{2}\lambda
		_{t}^{\prime }\lambda _{t}+\lambda _{t}^{\prime }\Sigma
		_{v}^{-1/2}v_{t+1}\right) -\frac{1}{2}\text{Var}\left( R_{t+1,\tau }-\lambda
		_{t}^{\prime }\Sigma _{v}^{-1/2}v_{t+1}\right) \nonumber \\ 
		=\frac{1}{2}\lambda _{t}^{\prime }\lambda _{t}-\frac{1}{2}\text{Var}\left(
		R_{t+1,\tau }\right) -\frac{1}{2}\text{Var}\left( \lambda _{t}^{\prime
		}\Sigma _{v}^{-1/2}v_{t+1}\right) +\text{Cov}\left( R_{t+1,\tau },\left(
		\lambda _{t}^{\prime }\Sigma _{v}^{-1/2}v_{t+1}\right) ^{\prime }\right) %
		\nonumber \\ 
		=\text{Cov}\left( R_{t+1,\tau },v_{t+1}^{\prime }\Sigma _{v}^{-1/2}\lambda
		_{t}\right) -\frac{1}{2}\text{Var}\left( R_{t+1,\tau }\right) \nonumber \\ 
		=\beta ^{(\tau )^{\prime }}\Sigma _{v}^{1/2}\lambda _{t}-\frac{1}{2}\text{%
			Var }\left( R_{t+1,\tau }\right) \nonumber \\ 
		=\beta ^{(\tau )^{\prime }}(\lambda _{0}+\lambda _{1}X_{t})-\frac{1}{2}\text{
			Var}\left( R_{t+1,\tau }\right) .\label{eq1}%
	\end{array}%
\end{equation}%
We can decompose unexplained returns into two parts: one explained by the
innovation shock ($v_{t+1}$), the other by errors i.i.d $e_{t+1}^{(n-1)}$
orthogonal to $v_{t+1}$ with variance $\sigma ^{2}$ such that 
\begin{equation}
	\begin{array}{c}
		R_{t+1,\tau }-E_{t}\left( R_{t+1,\tau }\right) =\gamma _{t}^{(\tau )^{\prime
		}}v_{t+1}+e_{t+1,\tau },\label{eq:2}%
	\end{array}%
\end{equation}%
from which we know $\text{Cov}\left( R_{t+1,\tau },v_{t+1}^{\prime }\Sigma
_{v}^{-1/2}\right) =\text{Cov}\left( \gamma _{t}^{(\tau )^{\prime
}}v_{t+1}+e_{t+1,\tau },v_{t+1}^{\prime }\Sigma _{v}^{-1/2}\right) =\gamma
_{t}^{(\tau )^{\prime }}\Sigma _{v}^{1/2}$ and thus $\gamma _{t}^{(\tau
	)^{\prime }}=\text{Cov}\left( R_{t+1,\tau },v_{t+1}^{\prime }\Sigma
_{v}^{-1/2}\right) \Sigma _{v}^{-1/2}=\beta _{t}^{(\tau )^{\prime }}$, and
then~\newline
$\text{Var}\left( R_{t+1,\tau }\right) =\text{Var}\left( \gamma _{t}^{(\tau
	)^{\prime }}v_{t+1}+e_{t+1,\tau }\right) =\beta _{t}^{(\tau )^{\prime
}}\Sigma _{v}\beta _{t}^{(\tau )^{\prime }}+\sigma _{e}^{2}$. The above
results imply that from (\ref{eq:2}) we have, 
\begin{equation*}
	\begin{array}{l}
		R_{t+1,\tau }=E_{t}\left( R_{t+1,\tau }\right) +\gamma _{t}^{(\tau )^{\prime
		}}v_{t+1}+e_{t+1,\tau }\nonumber \\ 
		=\beta ^{(\tau )^{\prime }}(\lambda _{0}+\lambda _{1}X_{t})-\frac{1}{2}\text{
			Var}\left( R_{t+1,\tau }\right) +\beta ^{(\tau )^{\prime
		}}v_{t+1}+e_{t+1,\tau }\nonumber \\ 
		=\beta ^{(\tau )^{\prime }}(\lambda _{0}+\lambda _{1}X_{t})-\frac{1}{2}
		\left( \beta _{t}^{(\tau )^{\prime }}\Sigma _{v}\beta _{t}^{(n-1)^{\prime
		}}+\sigma _{e}^{2}\right) +\beta ^{(\tau )^{\prime }}v_{t+1}+e_{t+1,\tau },%
	\end{array}%
\end{equation*}%
which coresponds to the second equation in Assumption 2.1.

~~\newline
The following Proposition \ref{s:prop1} displays one specification nested
within the Adrian et al. (2015) framework that naturally satisfies the
Kronecker structure in Assumption 5.1.

\begin{proposition}
	\label{s:prop1}~~\newline
\end{proposition}

\subsubsection{Adrian et al. (2015)}

\nocite{adrian2015regression} Adrian et al. (2015) consider a less
restricted model setting with a different pricing kernel specification. The
model closely resembles affine term structure models, and it takes into
account the unspanned factors. Unspanned factors refer to factors that are
not correlated with the contemporaneous excess returns but contribute to the
forecasts of excess returns. Though usually, factors are either significant
factors for the cross section or significant for forecasts, this model takes
into account the possibility that factors act in both dimensions.

The state variables ($X_{t}$) satisfy equation (\ref{s1}), $X_{t+1}=\mu
+\Phi X_{t}+v_{t+1}$, and fall into three categories: $X_{1,t}\in \mathbb{R}%
^{k_{1}}$, $X_{2,t}\in \mathbb{R}^{k_{2}}$, $X_{3,t}\in \mathbb{R}^{k_{3}}$,
where $K_{C}\times 1$ vector $C_{t}=\left( X_{1,t}^{\prime },X_{2,t}^{\prime
}\right) ^{\prime }$ is for the cross section, whose innovations have
significant non-zero $\beta $'s in (\ref{hah}), $K_{F}\times 1$ vector $%
F_{t}=\left( X_{2,t}^{\prime },X_{3,t}^{\prime }\right) ^{\prime }$ for
forecasts and $u_{t}=\left( v_{1,t}^{\prime },v_{2,t}^{\prime }\right)
^{\prime }$ denotes innovations in the $C_{t}$ factors. The new pricing
kernel $M_{t+1}$ in Adrian et al. (2015) satisfies 
\begin{equation*}
	\begin{array}{c}
		\mathbb{E}_{t}\left( M_{t+1}R_{i,t+1}\right) =\mathbb{E}\left(
		M_{t+1}R_{i,t+1}|\mathcal{F}_{t}\right) =0, \\ 
		\frac{M_{t+1}-\mathbb{E}\left( M_{t+1}|\mathcal{F}_{t}\right) }{\mathbb{E}
			\left( M_{t+1}|\mathcal{F}_{t}\right) }=-\lambda _{t}^{\prime }\text{var}
		\left( u_{t+1}|\mathcal{F}_{t}\right) ^{-\frac{1}{2}}u_{t+1},%
	\end{array}%
\end{equation*}%
where $R_{i,t+1}$ denotes holding period return in excess of the risk free
rate of asset $i$. As a consequence, 
\begin{equation*}
	\begin{array}{c}
		\mathbb{E}\left( R_{i,t+1}|\mathcal{F}_{t}\right) =-\frac{\text{cov}\left(
			M_{t+1},R_{i,t+1}|\mathcal{F}_{t}\right) }{\mathbb{E}\left( M_{t+1}|\mathcal{%
				\ F}_{t}\right) }=\beta _{i,t}^{\prime }\left( \lambda _{0}+\lambda
		_{1}F_{t},\right) ,%
	\end{array}%
\end{equation*}%
where $\beta _{i,t}=\text{var}\left( u_{t+1}|\mathcal{F}_{t}\right) ^{-1}%
\text{ cov}\left( u_{t+1},R_{i,t+1}|\mathcal{F}_{t}\right) $. Therefore, we
can write 
\begin{equation}
	\begin{array}{c}
		R_{i,t+1}=\beta _{i,t}^{\prime }\left( \lambda _{0}+\lambda _{1}F_{t}\right)
		+\beta _{i,t}^{\prime }u_{t+1}+e_{i,t+1},\label{hah}%
	\end{array}%
\end{equation}%
the structure of which implies that our proposed tests are applicable when
we assume time-constant $\beta $'s in this framework.

\subsubsection{Hamilton and Wu (2012)}

Hamilton and Wu (2012)\nocite{hamilton2012identification} analyze the yield,
and they assume that equation (\ref{s1})-(\ref{s3}) hold\footnote{%
	Our notations are slightly from the original ones in Hamilton and Wu (2012) 
	\nocite{hamilton2012identification} to be consistent from previous
	discussions.}, and the risk free one-period yield $y_{t,r}^{(1)}$, the yield
of a $n$-period bond, $y_{t}^{(n)}$, satisfies 
\begin{equation*}
	\begin{array}{c}
		y_{t,r}^{(1)}=\delta _{0}+\delta _{1}X_{t},y_{t}^{(n)}=a_{n}+b_{n}^{\prime
		}X_{t}+u_{{t},{n}},%
	\end{array}%
\end{equation*}%
where 
\begin{equation*}
	\begin{array}{l}
		\mu ^{Q}=\mu -\lambda _{0}, \\ 
		\Phi ^{Q}=\Phi -\lambda _{1}, \\ 
		b_{n}=\left( I_{K}+\Phi ^{Q^{\prime }}+\cdots +(\Phi ^{Q^{\prime
		}})^{n-1}\right) \delta _{1}/n, \\ 
		a_{n}=\delta _{0}+\left( b_{1}^{\prime }+2b_{2}^{\prime }+\cdots
		+(n-1)b_{n-1}^{\prime }\right) \mu ^{Q}/n \\ 
		-\left( b_{1}^{\prime }\Sigma _{v}b_{1}+(2b_{2})^{\prime }\Sigma
		_{v}(2b_{2})+\cdots +(n-1)b_{n-1}^{\prime }\Sigma _{v}((n-1)b_{n-1})\right)
		/(2n).\label{restriction extra}%
	\end{array}%
\end{equation*}%
If we consider the data transformation: $%
r_{t}^{(n)}=ny_{t}^{(n)}-(n+1)y_{t-1}^{(n+1)}+y_{t-1,r}^{(1)}$, the above
equations imply that 
\begin{equation*}
	\begin{array}{l}
		r_{t}^{(n)}=c^{(n)^{\prime }}+\beta ^{(n)^{\prime }}\lambda
		_{1}X_{t-1}+\beta ^{(n)^{\prime }}v_{t}+e_{t},%
	\end{array}%
\end{equation*}%
with 
\begin{equation*}
	\begin{array}{l}
		c^{(n)^{\prime }}=na_{n}-(n+1)a_{n+1}+\delta _{0}=\beta ^{(n)^{\prime
		}}\lambda _{0}+(\beta ^{(n)^{\prime }}\Sigma _{v}\beta ^{(n)}/2-\beta
		^{(n)^{\prime }}\mu ), \\ 
		\beta ^{(n)^{\prime }}=nb_{n}^{\prime }, \\ 
		e_{t}=nu_{t,n}-(n+1)u_{t-1,n+1},%
	\end{array}%
\end{equation*}%
which then indicates our tests are valid with proper restrictions on $e_{t}$%
. The constant term $c^{(n)}$ has different structure than the one indicated
in \nocite{adrian2013pricing}Adrian et al. (2013) (or Assumption 2.1.(b)),
but this can be resolved by minor adaptions.

\newpage

\section{Additional discussions concerning the estimation strategies in
	Adrian et al. (2013)}

\textit{Proof of Proposition 3.1}~\newline
In the following, $\beta$ is an $N\times K$ matrix, which is the transpose
of the factor loading matrix we use in the main context, and we denote $%
f=(\mu, d)$, $\Psi_\beta$ as a stack form of $\psi_\beta$, the asymptotic
distribution of $\sqrt{T} \text{vec}\left( \hat{\beta} ^{\prime
}-\beta^{\prime }\right)$, such that $\left(\Psi_\beta\right) _{\substack{ %
		i,j  \\ 1\leq i\leq K, 1\leq j\leq N }} = \left(\psi_\beta\right)_{(i-1)N+j} 
$, and denote $\mathcal{K}_{a,b}$ the commutation matrix such that $\mathcal{%
	K}_{a,b}\text{vec}(A)=\text{vec} (A^{\prime })$ with $A$ a $a\times b$
matrix.~\newline
~~~\newline
The statement (a) is a direct result following \nocite{adrian2013pricing}%
Adrian et al.(2013). Here we only list the specifications of $\mathcal{V}%
_\beta, \mathcal{C}_{\Lambda,\beta}, \mathcal{V}_\Lambda$. 
\begin{equation}
	\begin{array}{l}
		\mathcal{V}_\beta= \Sigma_v^{-1} \otimes \sigma^2I_N, \\ 
		\mathcal{V}_\Lambda= 1^{\mu}+C+C^{\prime }\nonumber \\ 
		+ \sigma^2\left( \gamma_{ZZ}^{-1}\otimes \left(\beta\beta^{\prime
		}\right)^{-1} \right) +\sigma^2\left( \Lambda^{\prime }\Sigma_v^{-1}\Lambda
		\otimes \left(\beta\beta^{\prime }\right)^{-1} \right)\nonumber \\ 
		+\sigma^2 \left(\varrho^{\prime }\varrho\otimes \left(({\hat{\beta}}\hat{%
			\beta}^{\prime})^{-1}{\beta}A_\beta^{\prime }(I\otimes
		\Sigma_v)\right)A_\beta \beta^{\prime }({\hat{\beta}}\hat{\beta}%
		^{\prime})^{-1}\right)\nonumber \\ 
		+\frac{1}{4}\left( \varrho^{\prime }\varrho\otimes ({\hat{\beta}}\hat{\beta}%
		^{\prime})^{-1}{\beta} \left({B}^*
		\right)(I_{K^2}+\kappa_{K,K})(\Sigma_v\otimes \Sigma_v){B}^{*^{\prime }}
		\beta^{\prime }({\hat{\beta}}\hat{\beta}^{\prime})^{-1}\right) \nonumber \\ 
		\times \frac{\sigma^4}{2}\left(\varrho^{\prime }\otimes ({\hat{\beta}}\hat{%
			\beta}^{\prime})^{-1}{\beta} \right) \iota_N \iota_N^{\prime
		}\left(\varrho\otimes {\ \beta^{\prime }}({\hat{\beta}}\hat{\beta}%
		^{\prime})^{-1} \right), \\ 
		\mathcal{C}_{\Lambda,\beta}=-\kappa_{K+1,K}\left(\sigma^2(({\hat{\beta}}\hat{%
			\beta}^{\prime})^{-1}{\beta}\otimes \Lambda^{\prime }\Sigma_v^{-1} ) \right)
		+ \sigma^2\left(\varrho^{\prime }\otimes \left(({\hat{\beta}}\hat{\beta}%
		^{\prime})^{-1}{\ \beta}A_\beta^{\prime }\right)\right),%
	\end{array}%
\end{equation}
where $1^{\mu}=\gamma_{ZZ}^{-1}\otimes \Sigma_v$, 
\begin{equation*}
	\begin{array}{c}
		C=-\left( \Lambda^{\prime }\otimes ({\hat{\beta}}\hat{\beta}^{\prime})^{-1}{%
			\beta} \right)\left(\sigma^2\Sigma_v^{-1}\otimes I \right)
		\kappa_{N,K}^{\prime }\left(\varrho^{\prime }\otimes \left(({\hat{\beta}}%
		\hat{\beta}^{\prime})^{-1}{\beta} A_\beta^{\prime }(I\otimes
		\Sigma_v)\right)\right)^{\prime },%
	\end{array}%
\end{equation*}
	and $A_\beta$ is a $NK\times K$ matrix $A_\beta= \oplus_{i=1}^N \beta^{(i)}$
	, where $\oplus$ denotes the matrix direct sum, such that $A \oplus B =
	\left( 
	\begin{matrix}
		A & 0 \\ 
		0 & B%
	\end{matrix}
	\right)$.~\newline
	~~~\newline
	We discuss two special cases to show the statement (b). We first introduce
	some new notation, given the three-stage estimator as in the form in Adrian
	et al. (2013): 
	\begin{equation*}
		\begin{array}{c}
			\sqrt{T}\hat{\Lambda}=\sqrt{T}(({\hat{\beta}}\hat{\beta}^{\prime})^{-1}\hat{
				\beta }\left(rx +\frac{1}{2}\left( \hat{B}^*\text{vec}(\hat{\Sigma_v}) +\hat{
				\sigma }^2\iota_N\right)\iota_T^{\prime }\right)M_{\hat{V}^{\prime }}
			Z_-^{\prime }\left( Z_-M_{\hat{V}^{\prime }}Z_-^{\prime }\right)^{-1},%
		\end{array}%
	\end{equation*}
	where $\sqrt{T}\hat{\Lambda}$ is the summation of 
	\begin{equation*}
		\begin{array}{l}
			\mathcal{T}_{1}=\sqrt{T} \left(\hat{\beta}\hat{\beta}^{\prime }\right)^{-1} 
			\hat{\beta}\left(rx +\frac{1}{2}\left({B}^*\text{vec}({\Sigma_v}) + {\sigma}
			^2\iota_N\right)\iota_T^{\prime }\right)M_{\hat{V}^{\prime }} Z_-^{\prime
			}\left( Z_-M_{\hat{V}^{\prime }}Z_-^{\prime }\right)^{-1}, \\ 
			\mathcal{T}_{2}=\frac{1}{2} \sqrt{T} \left(\hat{\beta}\hat{\beta}^{\prime
			}\right)^{-1}\hat{\beta} \left( \hat{B}^*\text{vec}(\hat{\Sigma_v})- {B}^* 
			\text{vec}({\Sigma_v}) \right)\rho^{\prime }, \\ 
			\mathcal{T}_{3}=\frac{1}{2} \sqrt{T} \left(\hat{\beta}\hat{\beta}^{\prime
			}\right)^{-1}\hat{\beta} \left( \hat{\sigma}^2 - \hat{\sigma}^2
			\right)\iota_N\rho^{\prime },%
		\end{array}%
	\end{equation*}
	and $\mathcal{T}_{i}$ can be decomposed as summation of $\mathcal{T} _{i,j}$
	: 
	\begin{equation*}
		\begin{array}{l}
			\mathcal{T}_{1,1}= \sqrt{T}\Lambda, \\ 
			\mathcal{T}_{1,2}= \sqrt{T}V M_{\hat{V}^{\prime }} Z_-^{\prime }\left(
			Z_-M_{ \hat{V}^{\prime }}Z_-^{\prime }\right)^{-1}, \\ 
			\mathcal{T}_{1,3}=(({\hat{\beta}}\hat{\beta}^{\prime})^{-1}\hat{\beta} \sqrt{%
				T} \left(\hat{f}-f \right)+ o_p(1), \\ 
			\mathcal{T}_{1,4}= -(({\hat{\beta}}\hat{\beta}^{\prime})^{-1}\hat{\beta}%
			\sqrt{T} \left(\hat{\beta}^{\prime }-\beta^{\prime }\right) \Lambda, \\ 
			\mathcal{T}_{1,5}= \sqrt{T}(({\hat{\beta}}\hat{\beta}^{\prime})^{-1}\hat{%
				\beta} \left(\hat{\beta}^{\prime }-\beta^{\prime }\right) V M_{\hat{V}%
				^{\prime }} Z_-^{\prime }\left( Z_-M_{\hat{V}^{\prime }}Z_-^{\prime
			}\right)^{-1}, \\ 
			\mathcal{T}_{2,1}= (({\hat{\beta}}\hat{\beta}^{\prime})^{-1}\hat{\beta}
			A_\beta^{\prime }(I\otimes \Sigma_v) \text{vec}\left(\sqrt{T} \left(\hat{
				\beta}- {\beta}\right)\right)\varrho+o_p(1), \\ 
			\mathcal{T}_{2,2}=\frac{1}{2}(({\hat{\beta}}\hat{\beta}^{\prime})^{-1}\hat{%
				\beta} \left({B}^*\sqrt{T}\left( \text{vec}\left(\frac{VV^{\prime }}{T}-{%
				\Sigma_v} \right)\right)\right)\varrho+o_p(1), \\ 
			\mathcal{T}_{3~~}=\frac{1}{2}(({\hat{\beta}}\hat{\beta}^{\prime})^{-1}\hat{%
				\beta} \left(\sqrt{T}\left( \frac{\text{tr}\left( EE^{\prime }\right) }{NT}-{%
				\sigma} ^2\right) \right)\iota_N\varrho+o_p(1).%
		\end{array}%
	\end{equation*}
	We have 
	\begin{equation*}
		\begin{array}{l}
			\text{vec}\left(\mathcal{T}_{1,1} \right)=\sqrt{T}\Lambda, \\ 
			\text{vec}\left(\mathcal{T}^{\mu=0}_{1,2}
			\right)=\left(\gamma_{xx}^{-1}\otimes I\right)\text{vec}\left[
			0~~VX_-^{\prime }/\sqrt{T} \right]+o_p(1), \\ 
			\text{vec}\left(\mathcal{T}^{\mu}_{1,2} \right)=
			\left(\gamma_{zz}^{-1}\otimes I \right)\text{vec}\left(VZ_-^{\prime }/\sqrt{
				T }\right) +o_p(1), \\ 
			\text{vec}\left(\mathcal{T}_{1,3} \right)=\left(I\otimes(({\hat{\beta}}\hat{%
				\beta}^{\prime})^{-1}\hat{\beta} \right)\text{vec}\left(\sqrt{T}\left( \hat{f%
			} -f\right) \right)+o_p(1), \\ 
			\text{vec}\left(\mathcal{T}_{1,4} \right)=-\left( \Lambda^{\prime }\otimes ({%
				\hat{\beta}}\hat{\beta}^{\prime})^{-1}\hat{\beta} \right)\kappa_{K_C,N} 
			\text{ vec}\left(\sqrt{T}\left(\hat{\beta}^{\prime }-\beta^{\prime }\right)
			\right)+ o_p(1), \\ 
			\text{vec}\left(\mathcal{T}_{1,5} \right)=o_p(1), \\ 
			\text{vec}\left(\mathcal{T}_{2,1} \right)=\left(\varrho^{\prime }\otimes
			\left(({\hat{\beta}}\hat{\beta}^{\prime})^{-1}\hat{\beta} A_\beta^{\prime
			}(I\otimes \Sigma_v)\right)\right) \text{vec}\left(\sqrt{T} \left(\hat{\beta}
			- {\beta}\right)\right) +o_p(1), \\ 
			\text{vec}\left(\mathcal{T}_{2,2} \right)=\left( \varrho^{\prime }\otimes 
			\frac{1}{2}({\hat{\beta}}\hat{\beta}^{\prime})^{-1}\hat{\beta} \left({B}^*
			\right)\right)\text{vec}\left(\sqrt{T}\left(\text{vec}\left( \frac{
				VV^{\prime }}{T}\right)-\text{vec}({\Sigma_v})\right)\right)+o_p(1), \\ 
			\text{vec}\left(\mathcal{T}_{3} \right)=\left(\varrho^{\prime }\otimes \frac{
				1}{2}({\hat{\beta}}\hat{\beta}^{\prime})^{-1}\hat{\beta} \right)
			\iota_N\left( \sqrt{T}\left( \frac{\text{tr}\left( EE^{\prime }\right)}{NT}-{%
				\sigma} ^2\right) \right)+o_p(1).%
		\end{array}%
	\end{equation*}
	where the term $\mathcal{T}_{1,2}$ depends on whether or not we impose the
	assumption that $\mu=0$. Next we would like to look for a non-full rank case
	of $\beta$ and describe the asymptotic properties of $\hat{\Lambda}$ under
	those cases. In the following, we abuse the equal sign a bit where we may
	directly ignore the asymptotically negligible terms.~\newline
	~~ ~\newline
	Denote 
	\begin{equation*}
		\begin{array}{c}
			\left[ 
			\begin{matrix}
				\text{vec}\left(\hat{f}\right) \\ 
				\text{vec}\left(\hat{\beta}\right)%
			\end{matrix}
			\right]=\left( 
			\begin{matrix}
				\text{vec}\left({f}\right) \\ 
				\text{vec}\left({\beta}\right)%
			\end{matrix}
			\right)+\frac{1}{\sqrt{T}} \left( 
			\begin{matrix}
				\psi_{f} \\ 
				\psi_{\beta}%
			\end{matrix}
			\right) +o_p\left(\frac{1}{\sqrt{T}}\right), \label{eq:a:42}%
		\end{array}%
	\end{equation*}
	and thus 
	\begin{equation*}
		\begin{array}{c}
			\hat{\beta}=\beta + \frac{1}{\sqrt{T}}\Psi_{\beta}+ o_p\left(\frac{1}{\sqrt{
					T }} \right). \label{eq:a:43}%
		\end{array}%
	\end{equation*}
	If $\beta=0$, then $\left( \hat{\beta}\hat{\beta}^{\prime }\right)^{-1}\hat{
		\beta}= \left( \frac{\Psi_\beta}{\sqrt{T}}\frac{\Psi_\beta^{\prime }}{\sqrt{
			T}}\right)^{-1} \frac{\Psi_\beta}{\sqrt{T}}=\sqrt{T}\left(
	\Psi_\beta\Psi_\beta^{\prime }\right)^{-1}\Psi_\beta.$ We can look at the
	asymptotic properties of $\hat{\Lambda}$ by looking at $\mathcal{T}_{i,j}$
	's. For convenience in the following sections otherwise well mentioned, we
	denote $\mathcal{T}_{i,j}$ as the previous $\mathcal{T}_{i,j}$ divided by $%
	\sqrt{T}$ such that: $\mathcal{T}^{{new}}_{i,j}=\frac{\mathcal{T}^{{old}
			_{i,j}}}{\sqrt{ T}}$. 
	\begin{equation}
		\begin{array}{l}
			\text{vec}\left(\mathcal{T}_{1,1} \right)=\Lambda, \\ 
			\text{vec}\left(\mathcal{T}^{\mu=0}_{1,2} \right)=\frac{\left(
				\gamma_{xx}^{-1}\otimes I\right)\text{vec}\left[0~~VX_-^{\prime }/\sqrt{T} %
				\right] }{\sqrt{T}} +o_p\left(\frac{1}{\sqrt{T}}\right) = o_p(1), \\ 
			\text{vec}\left(\mathcal{T}^{\mu}_{1,2} \right)= \frac{\left(
				\gamma_{zz}^{-1}\otimes I \right)\text{vec}\left(VZ_-^{\prime }/\sqrt{T}
				\right)}{\sqrt{T}} +o_p\left(\frac{1}{\sqrt{T}}\right) = o_p(1), \\ 
			\text{vec}\left(\mathcal{T}_{1,3} \right)=\left(I\otimes({\hat{\beta}}\hat{%
				\beta}^{\prime})^{-1}\hat{\beta} \right)\text{vec}\left(\sqrt{T}\left( \hat{f%
			} -f\right) \right)+o_p\left(\frac{1}{\sqrt{T}}\right)\nonumber \\ 
			=\left(I\otimes\left( \Psi_\beta\Psi_\beta^{\prime }\right)^{-1}\Psi_\beta
			\right) \text{vec}\left(\sqrt{T}\left( \hat{f}-f\right) \right) \nonumber \\ 
			= \left(I\otimes\left( \Psi_\beta\Psi_\beta^{\prime }\right)^{-1}\Psi_\beta
			\right)\psi_f, \\ 
			\text{vec}\left(\mathcal{T}_{1,4} \right)=-\left( \Lambda^{\prime }\otimes ({%
				\hat{\beta}}\hat{\beta}^{\prime})^{-1}\hat{\beta} \right) \text{vec}
			\left(\left( \hat{\beta}^{\prime }-\beta^{\prime }\right) \right)+o_p\left( 
			\frac{1}{ \sqrt{T}}\right)\nonumber \\ 
			= -\left( \Lambda^{\prime }\otimes\left( \Psi_\beta\Psi_\beta^{\prime
			}\right)^{-1}\Psi_\beta \right)\kappa_{K_C,N}\psi_\beta, \\ 
			\mathcal{T}_{1,5}=({\hat{\beta}}\hat{\beta}^{\prime})^{-1}\hat{\beta}\left( 
			\hat{\beta}^{\prime }-\beta^{\prime }\right) M_{\hat{V}^{\prime }}
			Z_-^{\prime }\left( Z_-M_{\hat{V}^{\prime }}Z_-^{\prime }\right)^{-1} o_p(1) %
			\nonumber \\ 
			=\left(\Psi_\beta\Psi_\beta^{\prime
			}\right)^{-1}\Psi_\beta\Psi_\beta^{\prime }M_{\hat{V}^{\prime }} Z_-^{\prime
			}\left( Z_-M_{\hat{V}^{\prime }}Z_-^{\prime }\right)^{-1} o_p(1) \nonumber
			\\ 
			=\frac{M_{\hat{V}^{\prime }} Z_-^{\prime }}{T}\left( \frac{Z_-Z_-^{\prime } 
			}{T} \right)^{-1} o_p(1)=o_p(1), \\ 
			\text{vec}\left(\mathcal{T}_{2,1} \right)=\left(\varrho^{\prime }\otimes
			\left(({\hat{\beta}}\hat{\beta}^{\prime})^{-1}\hat{\beta} A_\beta^{\prime
			}(I\otimes \Sigma_v)\right)\right) \text{vec}\left( \left(\hat{\beta}- {\
				\beta }\right)\right) +o_p(\frac{1}{\sqrt{T}})\nonumber \\ 
			= \left(\varrho^{\prime }\otimes \left(\left(\Psi_\beta\Psi_\beta^{\prime
			}\right)^{-1}\Psi_\beta A_\beta^{\prime }(I\otimes
			\Sigma_v)\right)\right)\Psi_\beta, \\ 
			\text{vec}\left(\mathcal{T}_{2,2} \right)=\left( \varrho^{\prime }\otimes 
			\frac{1}{2}({\hat{\beta}}\hat{\beta}^{\prime})^{-1}\hat{\beta} \left({B}^*
			\right)\right)\text{vec}\left(\left(\text{vec}\left( \frac{VV^{\prime }}{T}
			\right)-\text{vec}({\Sigma_v})\right)\right)+o_p(\frac{1}{\sqrt{T}}), \\ 
			\text{vec}\left(\mathcal{T}_{3} \right)=\left(\varrho^{\prime }\otimes \frac{
				1}{2}({\hat{\beta}}\hat{\beta}^{\prime})^{-1}\hat{\beta} \right)
			\iota_N\left( \frac{\text{tr}\left( EE^{\prime }\right)}{NT}-{\sigma}^2
			\right)+o_p(\frac{1 }{\sqrt{T}}).%
		\end{array}%
	\end{equation}
	~~~\newline
	Based on the above derivations, we can see that $\mathcal{T}_{1,2}, \mathcal{%
		\ \ T}_{1,5}$ converge to zero in probability but not the rest terms, and
	thus given $\beta=0$, which implies that the estimated parameter $\hat{%
		\Lambda}$ does not converge to $\Lambda$ in probability but converges to $%
	\Lambda + \epsilon$ with $\epsilon$ having a non-standard distribution.~%
	\newline
	~~~\newline
	If $\beta=\frac{B}{\sqrt{T}}$, where $B$ is full rank, then 
	\begin{equation*}
		( \hat{\beta} \hat{\beta}^{\prime } )^{-1}\hat{\beta}= \sqrt{T}\left(
		\left(B+\Psi_\beta\right)\left(B+\Psi_\beta\right)^{\prime
		}\right)^{-1}\left(B+\Psi_\beta\right).
	\end{equation*}
	Again, we consider those decomposed terms (we only show those terms that do
	not converge to zero in probability): 
	\begin{equation}
		\begin{array}{l}
			\text{vec}\left(\mathcal{T}_{1,3} \right)=\left(I\otimes (\hat{\beta}\hat{
				\beta}^{\prime} )^{-1}\hat{\beta} \right)\text{vec}\left(\left( \hat{f}
			-f\right) \right)+o_p\left(\frac{1}{\sqrt{T}}\right)\nonumber \\ 
			=\left(I\otimes\left(
			\left(B+\Psi_\beta\right)\left(B+\Psi_\beta\right)^{\prime
			}\right)^{-1}\left(B+\Psi_\beta\right) \right) \text{vec}\left(\sqrt{T}
			\left( \hat{f}-f\right) \right) \nonumber \\ 
			= \left(I\otimes\left(
			\left(B+\Psi_\beta\right)\left(B+\Psi_\beta\right)^{\prime
			}\right)^{-1}\left(B+\Psi_\beta\right) \right)\psi_f, \\ 
			\text{vec}\left(\mathcal{T}_{1,4} \right)=-\left( \Lambda^{\prime }\otimes (%
			\hat{\beta}\hat{\beta}^{\prime} )^{-1}\hat{\beta} \right) \text{vec}
			\left(\left( \hat{\beta}^{\prime }-\beta^{\prime }\right) \right)+o_p\left( 
			\frac{1}{ \sqrt{T}}\right)\nonumber \\ 
			= -\left( \Lambda^{\prime }\otimes\left(
			\left(B+\Psi_\beta\right)\left(B+\Psi_\beta\right)^{\prime
			}\right)^{-1}\left(B+\Psi_\beta\right) \right)\kappa_{K_C,N}\psi_\beta, \\ 
			\text{vec}\left(\mathcal{T}_{2,1} \right)=\left(\varrho^{\prime }\otimes
			\left( (\hat{\beta}\hat{\beta}^{\prime} )^{-1}\hat{\beta} A_\beta^{\prime
			}(I\otimes \Sigma_v)\right)\right) \text{vec}\left( \left(\hat{\beta}- {\
				\beta }\right)\right) +o_p(\frac{1}{\sqrt{T}})\nonumber \\ 
			= \left(\varrho^{\prime }\otimes
			\left(\left(\left(B+\Psi_\beta\right)\left(B+\Psi_\beta\right)^{\prime
			}\right)^{-1}\left(B+\Psi_\beta\right) A_\beta^{\prime }(I\otimes
			\Sigma_v)\right)\right)\left(B+\Psi_\beta\right), \\ 
			\text{vec}\left(\mathcal{T}_{2,2} \right)=\left( \varrho^{\prime }\otimes 
			\frac{1}{2} (\hat{\beta}\hat{\beta}^{\prime} )^{-1}\hat{\beta} \left({B}^*
			\right)\right)\text{vec}\left(\left(\text{vec}\left( \frac{VV^{\prime }}{T}
			\right)-\text{vec}({\Sigma_v})\right)\right)+o_p(\frac{1}{\sqrt{T}}), \\ 
			\text{vec}\left(\mathcal{T}_{3} \right)=\left(\varrho^{\prime }\otimes \frac{
				1}{2} (\hat{\beta}\hat{\beta}^{\prime} )^{-1}\hat{\beta} \right)
			\iota_N\left( \frac{\text{tr}\left( EE^{\prime }\right)}{NT}-{\sigma}^2
			\right)+o_p(\frac{1}{\sqrt{T}}),%
		\end{array}%
	\end{equation}
	which imply that $\hat{\Lambda}$ does not converge to $\Lambda$ in
	probability but again converges to $\Lambda + \epsilon$ with $\epsilon$
	having a non-standard distribution. ~\newline
	$\square$ ~~~\newline
	~~~\newline
	~~~\newline
	We only analyze one out of two approaches (without knowledge of unspanned
	factors) proposed by Adrian et al. (2013) since these two appraoches are
	equivalent, as suggested by the following proposition.
	
	\begin{proposition}
		\label{prop:two are the same} Under Assumptions 2.1, estimation results via
		the following two three-stage procedures proposed in Adrian et al. (2013), I
		and II, are numerically identical:
		
		\begin{itemize}
			\item[I.] (1) the first step is to obtain estimates of $\mu, \Phi $ via
			linear regression using the first equation in Assumption 2.1;~\newline
			(2) the second step is to obtain estimates of $a^{(n-1)}, d^{(n-1)},
			\beta_{I}^{(n-1)^{\prime }}, \Sigma_{e}$ by regressing excess returns on a
			constant, the lagged and the contemporaneous factors according to 
			\begin{equation*}
				\begin{array}{c}
					rx_{t+1,n-1}= a^{(n-1)} + d^{(n-1)} X_{t} +\beta_{I}^{(n-1)^{\prime }}
					X_{t+1} +e_{t+1,n-1},%
				\end{array}%
			\end{equation*}
			(3) the final step is to obtain the estimates of $\lambda_{0}, \lambda_{1}$: 
			\begin{equation*}
				\begin{array}{l}
					\widehat{\lambda}_{0,I} = \sum_n \left(\sum_m \widehat{\beta}_{I}^{(m-1)} 
					\widehat{\beta}_{I}^{(m-1)^{\prime }} \right)^{-1} \widehat{\beta}
					_{I}^{(n-1)} \\ 
					\times \left(\widehat{a}^{(n-1)}+ \widehat{g}_{I}^{(n-1)^{\prime }}\left( {\
						\beta}_{I}, {\Sigma}_v, {\Sigma}_e \right)+\widehat{\beta}
					_{I}^{(n-1)^{\prime }}\widehat{\mu}\right), \\ 
					\widehat{\lambda}_{1,I}= \sum_n \left(\sum_m \widehat{\beta}_{I}^{(m-1)} 
					\widehat{\beta}_{I}^{(m-1)^{\prime }} \right)^{-1} \widehat{\beta}
					_{I}^{(n-1)} \\ 
					\times \left(\widehat{d}^{(n-1)}+ \widehat{\beta}^{(n-1)^{\prime }}_{I} 
					\widehat{\Phi} \right).%
				\end{array}%
			\end{equation*}
			
			\item[II.] (1) the first step is to obtain estimates of innovations $%
			\widehat{v}_t$ via linear regression using the first equation in Assumption
			2.1; ~\newline
			(2) the second step is to obtain estimates of $b^{(n-1)}, c^{(n-1)}, \beta_{ 
				\text{II}}^{(n-1)^{\prime }}, \Sigma_{e}$ by regressing excess returns on a
			constant, the lagged and the contemporaneous innovation factors ($v_{t+1}$
			is replaced by estimates $\widehat{v}_{t+1}$ in practice) according to 
			\begin{equation*}
				\begin{array}{c}
					rx_{t+1,n-1}= b^{(n-1)} + c^{(n-1)} X_{t} +\beta_{\text{II}}^{(n-1)^{\prime
					}}{v}_{t+1} +e_{t+1,n-1},%
				\end{array}%
			\end{equation*}
			which is derived by plugging $v_{t+1}=X_{t+1}-\mu-\Phi X_t$ into the second
			equation in Assumption 2.1. ~\newline
			(3) the final step is to obtain the estimates of $\lambda_{0}, \lambda_{1}$: 
			\begin{equation*}
				\begin{array}{l}
					\widehat{\lambda}_{0,II} = \sum_n \left(\sum_m \widehat{\beta}_{\text{II}
					}^{(m-1)}\widehat{\beta}_{\text{II}}^{(m-1)^{\prime }} \right)^{-1} \widehat{
						\beta}_{\text{II}}^{(n-1)} \\ 
					\times \left(\widehat{b}^{(n-1)}+ \widehat{g}_{\text{II} }^{(n-1)^{\prime
					}}\left({\beta}, {\Sigma}_v, {\Sigma}_e \right)\right), \\ 
					\widehat{\lambda}_{1,II}= \sum_n \left(\sum_m \widehat{\beta}_{\text{II}
					}^{(m-1)}\widehat{\beta}_{\text{II}}^{(m-1)^{\prime }} \right)^{-1} \widehat{
						\beta}_{\text{II}}^{(n-1)} \widehat{c}^{(n-1)}.%
				\end{array}%
			\end{equation*}
		\end{itemize}
	\end{proposition}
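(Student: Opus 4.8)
The plan is to exploit the fact that both procedures share an identical first stage and that their two second-stage regressions differ only by an invertible, triangular reparametrization of the regressors. First I would note that the VAR(1) least squares in step~(1) is literally the same computation in I and II: procedure II's innovations satisfy $\widehat{v}_{t+1}=X_{t+1}-\widehat{\mu}-\widehat{\Phi}X_t$ for the same $\widehat{\mu},\widehat{\Phi}$, so $\widehat{\mu},\widehat{\Phi},\widehat{\Sigma}_v$ coincide across the two procedures. Substituting this identity into procedure II's second-stage equation gives
\[
rx_{t+1,n-1}=\bigl(b^{(n-1)}-\beta_{\text{II}}^{(n-1)\prime}\widehat{\mu}\bigr)+\bigl(c^{(n-1)}-\beta_{\text{II}}^{(n-1)\prime}\widehat{\Phi}\bigr)X_t+\beta_{\text{II}}^{(n-1)\prime}X_{t+1}+e_{t+1,n-1},
\]
which is regressor-for-regressor the same design $[\,\iota,\ X_t,\ X_{t+1}\,]$ used in procedure I. Because OLS coefficients are invariant under an invertible linear change of regressors, and because the transformation here is block upper-triangular with an identity block multiplying $X_{t+1}$, I would read off the exact numerical identities $\widehat{\beta}_{\text{I}}^{(n-1)}=\widehat{\beta}_{\text{II}}^{(n-1)}=:\widehat{\beta}^{(n-1)}$, together with $\widehat{b}^{(n-1)}=\widehat{a}^{(n-1)}+\widehat{\beta}^{(n-1)\prime}\widehat{\mu}$ and $\widehat{c}^{(n-1)}=\widehat{d}^{(n-1)}+\widehat{\beta}^{(n-1)\prime}\widehat{\Phi}$. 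The fitted values and residuals are identical, hence $\widehat{\Sigma}_e$ is the same in both procedures.

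Next I would feed these relations into the third-stage cross-sectional formulas. Since $g^{(n-1)}$ is by construction a function only of $(\beta,\Sigma_v,\Sigma_e)$ and each of these arguments has just been shown to coincide, $\widehat{g}_{\text{I}}^{(n-1)}=\widehat{g}_{\text{II}}^{(n-1)}$. The weighting matrix $\bigl(\sum_m\widehat{\beta}^{(m-1)}\widehat{\beta}^{(m-1)\prime}\bigr)^{-1}$ is then identical across I and II, so it suffices to match the bracketed arguments term by term. For $\lambda_0$ this is immediate, since $\widehat{b}^{(n-1)}+\widehat{g}_{\text{II}}^{(n-1)\prime}=\widehat{a}^{(n-1)}+\widehat{\beta}^{(n-1)\prime}\widehat{\mu}+\widehat{g}_{\text{I}}^{(n-1)\prime}$ is exactly the argument appearing in $\widehat{\lambda}_{0,\text{I}}$; for $\lambda_1$, $\widehat{c}^{(n-1)}=\widehat{d}^{(n-1)}+\widehat{\beta}^{(n-1)\prime}\widehat{\Phi}$ is exactly the argument in $\widehat{\lambda}_{1,\text{I}}$. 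Summing over $n$ and premultiplying by the common weighting matrix then yields $\widehat{\lambda}_{0,\text{I}}=\widehat{\lambda}_{0,\text{II}}$ and $\widehat{\lambda}_{1,\text{I}}=\widehat{\lambda}_{1,\text{II}}$.

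The argument is essentially linear algebra, so I do not anticipate a deep obstacle; the care required is mostly bookkeeping. The one step worth verifying explicitly is that $\widehat{\beta}$ is genuinely left unchanged rather than merely spanning the same column space: this follows because the reparametrization matrix is upper-triangular with the identity acting on the $X_{t+1}$ block, so inverting it recovers the coefficient on $X_{t+1}$ (equivalently on $\widehat{v}_{t+1}$) unaltered. I would also confirm the transposes and dimensions line up, with $\widehat{\beta}^{(n-1)\prime}\widehat{\mu}$ scalar and $\widehat{\beta}^{(n-1)\prime}\widehat{\Phi}$ a $1\times K$ row matching $\widehat{c}^{(n-1)}$ and $\widehat{d}^{(n-1)}$, and that $g$ carries no dependence on the constant or slope coefficients that could break the invariance. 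With these checks in place, the numerical equivalence of the two three-stage estimators in Proposition~\ref{prop:two are the same} follows.
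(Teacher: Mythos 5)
Your proof is correct, and it reaches the conclusion by a genuinely different and more economical route than the paper. The paper's own argument first invokes the Frisch--Waugh--Lovell theorem to get $\widehat{\beta}_{I}=\widehat{\beta}_{\text{II}}$ and then spends roughly a page of projection-matrix algebra (manipulating $M_{\iota_T}$, $P_{M_{\iota_T}X'}$, $P_{M_{\iota_T}X_-'}$ and the partialled-out regressors $\xi_c,\xi_d,\xi_\beta$) to establish the single remaining identity $\widehat{c}=\widehat{d}+\widehat{\Phi}\widehat{\beta}$, from which the third-stage equalities follow. You instead observe that, because $\widehat{v}_{t+1}=X_{t+1}-\widehat{\mu}-\widehat{\Phi}X_t$ with $\widehat{\mu},\widehat{\Phi}$ fixed across $t$, the two second-stage design matrices are related by a single invertible unit-triangular matrix, so OLS invariance under reparametrization delivers $\widehat{\beta}_{I}=\widehat{\beta}_{\text{II}}$, $\widehat{b}^{(n-1)}=\widehat{a}^{(n-1)}+\widehat{\beta}^{(n-1)\prime}\widehat{\mu}$ and $\widehat{c}^{(n-1)}=\widehat{d}^{(n-1)}+\widehat{\beta}^{(n-1)\prime}\widehat{\Phi}$ all at once, together with identical residuals and hence identical $\widehat{\Sigma}_e$ and $\widehat{g}^{(n-1)}$. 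What your approach buys is transparency and brevity: the three coefficient relations that the paper's third stage needs drop out of one linear-algebra fact, and the equality of the $\lambda$ estimators is then pure term-matching. What the paper's approach buys is that its projection identities are stated in a form that can be reused elsewhere in the asymptotic analysis; but as a proof of numerical equivalence your argument is complete, provided you note (as you do) that the transformation matrix, though data-dependent, is constant over $t$, and that both procedures use the same first-stage $\widehat{\mu},\widehat{\Phi},\widehat{\Sigma}_v$ and the same estimation sample.
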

	
	\noindent \textit{Proof of Proposition \ref{prop:two are the same}}. 
Here we only show that $\widehat{\lambda}_{1,I}=\widehat{\lambda}_{1,II}$,
and the equality $\widehat{\lambda}_{0,I}=\widehat{\lambda}_{0,II}$ follows
the same argument. By the Frisch-Waugh-Lovell Theorem, $\widehat{\beta}_{I}
= \widehat{\beta}_{\text{II}}$. Denote $\xi_c=X_-M_{(\iota_T,~
	\xi_\beta^{\prime })}, ~ \xi_d=X_- M_{(\iota_T,~X^{\prime })}, ~\xi_\beta =
XM_{Z_-}$, and let 
\begin{equation*}
	\begin{array}{l}
		\widehat{c} = (\xi_c\xi_c^{\prime})^{-1}\xi_c R^{\prime }, \\ 
		\widehat{d} = (\xi_d\xi_d^{\prime})^{-1}\xi_d R^{\prime }, \\ 
		\widehat{\beta} = (\xi_\beta\xi_\beta^{\prime})^{-1}\xi_\beta R^{\prime },
		\\ 
		\widehat{\Phi} = (X_-M_{\iota_T}X_-^{\prime})^{-1}X_-M_{\iota_T}X^{\prime }.%
	\end{array}%
\end{equation*}
Notice the following two equations hold 
\begin{equation}
	\begin{array}{l}
		M_{(\iota_T, \widehat{V}^{\prime })} X_-^{\prime }(X_-^{\prime }M_{(\iota_T, 
			\widehat{V}^{\prime })} X_-^{\prime})^{-1} =M_{\iota_T} X_-^{\prime
		}(X_-^{\prime }M_{\iota_T} X_-^{\prime})^{-1}, \\ 
		\xi_c^{\prime }(\xi_c\xi_c^{\prime -1} =M_{\iota_T} X_-^{\prime
		}(X_-^{\prime }M_{\iota_T} X_-^{\prime})^{-1} \label{eq:proof two equality 1}%
		.%
	\end{array}%
\end{equation}
\begin{equation}
	\begin{array}{c}
		M_{\iota_T} X_-^{\prime }(X_-^{\prime }M_{\iota_T} X_-^{\prime})^{-1} =
		\xi_\beta^{\prime }(\xi_\beta\xi_\beta^{\prime -1}\widehat{\Phi}^{\prime }+
		\xi_d^{\prime }(\xi_d\xi_d^{\prime -1},%
	\end{array}
	\label{eq:proof two equality 2}
\end{equation}
which directly leads to the equality $\widehat{c}=\widehat{d}+ \widehat{\Phi}
\widehat{\beta}$. The equality (\ref{eq:proof two equality 1}) is obvious
due to the orthogonality $\xi_\beta X_-^{\prime }=0$. Therefore, we only
need to show the equation (\ref{eq:proof two equality 2}). 
\begin{equation*}
	\begin{array}{l}
		\xi_\beta^{\prime }(\xi_\beta\xi_\beta^{\prime })^{-1}\widehat{\Phi}^{\prime
		}+ \xi_d^{\prime }(\xi_d\xi_d^{\prime})^{-1} \\ 
		=\xi_\beta^{\prime }(\xi_\beta\xi_\beta^{\prime})^{-1}\widehat{\Phi}^{\prime
		} \\ 
		+ (X_-(M_{\iota_T} -P_{M_{\iota_T}X^{\prime }}))^{\prime }\left(
		(X_-(M_{\iota_T} -P_{M_{\iota_T}X^{\prime }} ) )(X_-(M_{\iota_T}
		-P_{M_{\iota_T}X^{\prime }} ) )^{\prime }\right)^{-1} \\ 
		= \xi_\beta^{\prime }(\xi_\beta\xi_\beta^{\prime})^{-1}\widehat{\Phi}%
		^{\prime } \\ 
		+ (X_-(M_{\iota_T} -P_{M_{\iota_T}X^{\prime }} ) )^{\prime }\left(
		X_-M_{\iota_T}X_-^{\prime }-X_-P_{M_{\iota_T}X^{\prime }} X_-^{\prime
		}\right)^{-1} \\ 
		= \xi_\beta^{\prime }(\xi_\beta\xi_\beta^{\prime})^{-1}\widehat{\Phi}%
		^{\prime } \\ 
		+ (X_-(M_{\iota_T} -P_{M_{\iota_T}X^{\prime }} ) )^{\prime }\left[\left(
		X_-M_{\iota_T}X_-^{\prime }\right)^{-1}+ \left( X_-M_{\iota_T}X_-^{\prime
		}\right)^{-1} X_-P_{M_{\iota_T}X^{\prime }} X_-^{\prime
		}(\xi_d\xi_d^{\prime})^{-1} \right] \\ 
		= \xi_\beta^{\prime }(\xi_\beta\xi_\beta^{\prime})^{-1}\widehat{\Phi}%
		^{\prime }+ \xi_c^{\prime }(\xi_c\xi_c^{\prime})^{-1}+ \\ 
		\left[-P_{M_{\iota_T}X^{\prime }}X_-^{\prime }\left(
		X_-M_{\iota_T}X_-^{\prime }\right)^{-1}+ (X_-(M_{\iota_T}
		-P_{M_{\iota_T}X^{\prime }} ) )^{\prime }\left( X_-M_{\iota_T}X_-^{\prime
		}\right)^{-1} X_-P_{M_{\iota_T}X^{\prime }} X_-^{\prime
		}(\xi_d\xi_d^{\prime})^{-1} \right] \\ 
		= \xi_\beta^{\prime }(\xi_\beta\xi_\beta^{\prime})^{-1}\widehat{\Phi}%
		^{\prime }+ \xi_c^{\prime }(\xi_c\xi_c^{\prime})^{-1}+ \\ 
		\left[-P_{M_{\iota_T}X^{\prime }}X_-^{\prime }\left(
		X_-M_{\iota_T}X_-^{\prime }\right)^{-1}+ M_{\iota_T}X_-^{\prime }\left(
		X_-M_{\iota_T}X_-^{\prime }\right)^{-1} X_-P_{M_{\iota_T}X^{\prime }}
		X_-^{\prime }(\xi_d\xi_d^{\prime})^{-1} \right] \\ 
		\left[ -P_{M_{\iota_T}X^{\prime }} X_-^{\prime }\left(
		X_-M_{\iota_T}X_-^{\prime }\right)^{-1} X_-P_{M_{\iota_T}X^{\prime }}
		X_-^{\prime }(\xi_d\xi_d^{\prime})^{-1} \right] \\ 
		= M_{Z_-}X^{\prime }\left(XM_{Z_-}X^{\prime
		}\right)^{-1}XM_{\iota_T}X_-^{\prime }\left( X_-M_{\iota_T}X_-^{\prime
		}\right)^{-1} + \xi_c^{\prime }(\xi_c\xi_c^{\prime})^{-1} \\ 
		+\left[-P_{M_{\iota_T}X^{\prime }}X_-^{\prime }\left(
		X_-M_{\iota_T}X_-^{\prime }\right)^{-1}+ P_{M_{\iota_T}X_-^{\prime
		}}P_{M_{\iota_T}X^{\prime }} X_-^{\prime }(\xi_d\xi_d^{\prime})^{-1} \right.
		\\ 
		\left. - P_{M_{\iota_T}X^{\prime }} P_{M_{\iota_T}X_-^{\prime
		}}P_{M_{\iota_T}X^{\prime }} X_-^{\prime }(\xi_d\xi_d^{\prime})^{-1}\right]
		\\ 
		= \xi_c^{\prime }(\xi_c\xi_c^{\prime })^{-1},%
	\end{array}%
\end{equation*}
where the last equality is due to the facts such that $(I-P_{M_{\iota_T}X^{
		\prime }}P_{M_{\iota_T}X_-^{\prime }})\Delta=0$, $P_{M_{\iota_T}X^{\prime
}}(I-P_{M_{\iota_T}X_-^{\prime }})\Delta=0$ with $\Delta=
M_{\iota_T}X^{\prime }\left(XM_{Z_-}X^{\prime }\right)^{-1}
XM_{\iota_T}X_-^{\prime }\left( X_-M_{\iota_T}X_-^{\prime }\right)^{-1} -
P_{M_{\iota_T}X^{\prime }} X_-^{\prime }\left(\xi_d\xi_d^{\prime
}\right)^{-1}$, and hence 
\begin{equation*}
	\begin{array}{l}
		M_{Z_-}X^{\prime }\left(XM_{Z_-}X^{\prime
		}\right)^{-1}XM_{\iota_T}X_-^{\prime }\left( X_-M_{\iota_T}X_-^{\prime
		}\right)^{-1} \\ 
		= \left( M_{\iota_T} -P_{M_{\iota_T}X_-^{\prime }} \right)X^{\prime } \\ 
		\times \left(\left(XM_{\iota_T}X^{\prime }\right)^{-1}
		+\left(XM_{\iota_T}X^{\prime }\right)^{-1} \left( XP_{M_{\iota_T}X_-^{\prime
		}}X^{\prime }\right) \left(XM_{Z_-}X^{\prime }\right)^{-1}
		\right)XM_{\iota_T}X_-^{\prime }\left( X_-M_{\iota_T}X_-^{\prime
		}\right)^{-1} \\ 
		=P_{M_{\iota_T}X^{\prime }}X_-^{\prime }\left( X_-M_{\iota_T}X_-^{\prime
		}\right)^{-1}- P_{M_{\iota_T}X_-^{\prime }}P_{M_{\iota_T}X^{\prime }}
		X_-^{\prime }\left( X_-M_{\iota_T}X_-^{\prime }\right)^{-1} \\ 
		+ \left( M_{\iota_T} -P_{M_{\iota_T}X_-^{\prime }} \right)X^{\prime
		}\left(XM_{\iota_T}X^{\prime }\right)^{-1} \left( XP_{M_{\iota_T}X_-^{\prime
		}}X^{\prime }\right) \left(XM_{Z_-}X^{\prime }\right)^{-1}
		XM_{\iota_T}X_-^{\prime }\left( X_-M_{\iota_T}X_-^{\prime }\right)^{-1} \\ 
		=P_{M_{\iota_T}X^{\prime }}X_-^{\prime }\left( X_-M_{\iota_T}X_-^{\prime
		}\right)^{-1}- P_{M_{\iota_T}X_-^{\prime }}P_{M_{\iota_T}X^{\prime }}
		X_-^{\prime }\left(\xi_d\xi_d^{\prime }\right)^{-1} \\ 
		+ P_{M_{\iota_T}X_-^{\prime }}P_{M_{\iota_T}X^{\prime }} X_-^{\prime
		}\left(X_-M_{\iota_T}X_-^{\prime }\right)^{-1} \left(
		X_-P_{M_{\iota_T}X^{\prime }}X_-^{\prime }\right) \left(\xi_d\xi_d^{\prime
		}\right)^{-1} \\ 
		+ \left( M_{\iota_T} -P_{M_{\iota_T}X_-^{\prime }} \right)X^{\prime
		}\left(XM_{\iota_T}X^{\prime }\right)^{-1} \left( XP_{M_{\iota_T}X_-^{\prime
		}}X^{\prime }\right) \left(XM_{Z_-}X^{\prime }\right)^{-1}
		XM_{\iota_T}X_-^{\prime }\left( X_-M_{\iota_T}X_-^{\prime }\right)^{-1} \\ 
		=P_{M_{\iota_T}X^{\prime }}X_-^{\prime }\left( X_-M_{\iota_T}X_-^{\prime
		}\right)^{-1}- P_{M_{\iota_T}X_-^{\prime }}P_{M_{\iota_T}X^{\prime }}
		X_-^{\prime }\left(\xi_d\xi_d^{\prime }\right)^{-1} +
		P_{M_{\iota_T}X^{\prime }}P_{M_{\iota_T}X_-^{\prime
		}}P_{M_{\iota_T}X^{\prime }} X_-^{\prime }\left(\xi_d\xi_d^{\prime
		}\right)^{-1}.%
	\end{array}%
\end{equation*}
~~\newline
$\square$ ~~~\newline

%

\end{document}